\documentclass[11pt,a4paper]{article}
\newlength{\frase}


\usepackage{amsmath,amssymb,amsthm}
\usepackage{stmaryrd}

\usepackage{graphics}


\usepackage{pdfsync} 
\usepackage{color}
\newcommand{\rosso}{}          
\newcommand{\green}{}     
\newcommand{\blue}{}        

\theoremstyle{plain}
     \newtheorem{theorem}{Theorem}[section]
     \newtheorem{lemma}[theorem]{Lemma}
     \newtheorem{corollary}[theorem]{Corollary}
     \newtheorem{proposition}[theorem]{Proposition}
     
\theoremstyle{definition}
     \newtheorem{definition}[theorem]{Definition}

\theoremstyle{remark}
     \newtheorem{remark}{Remark}[section]





\mathcode`\<="4268  
\mathcode`\>="5269  

\mathchardef\semicolon="603B 
\mathchardef\gt="313E
\mathchardef\lt="313C


\def\emptyset{\varnothing}
\def\epsilon{\varepsilon}
\def\theta{\vartheta}


\newcount\mscount
\message{<Paul Taylor's Proof Trees, 17 August 1990>}

\newdimen\proofrulebreadth \proofrulebreadth=.05em
\newdimen\proofdotseparation \proofdotseparation=1.25ex
\newdimen\proofrulebaseline \proofrulebaseline=2ex
\newcount\proofdotnumber \proofdotnumber=3
\let\then\relax
\def\hfi{\hskip0pt plus.0001fil}
\mathchardef\squigto="3A3B
%
\newif\ifinsideprooftree\insideprooftreefalse
\newif\ifonleftofproofrule\onleftofproofrulefalse
\newif\ifproofdots\proofdotsfalse
\newif\ifdoubleproof\doubleprooffalse
\let\wereinproofbit\relax
%
\newdimen\shortenproofleft
\newdimen\shortenproofright
\newdimen\proofbelowshift
\newbox\proofabove
\newbox\proofbelow
\newbox\proofrulename
%
\def\shiftproofbelow{\let\next\relax\afterassignment\setshiftproofbelow\dimen0 }
\def\shiftproofbelowneg{\def\next{\multiply\dimen0 by-1 }%
\afterassignment\setshiftproofbelow\dimen0 }
\def\setshiftproofbelow{\next\proofbelowshift=\dimen0 }
\def\setproofrulebreadth{\proofrulebreadth}

\def\prooftree{
%
\ifnum	\lastpenalty=1
\then	\unpenalty
\else	\onleftofproofrulefalse
\fi
%
\ifonleftofproofrule
\else	\ifinsideprooftree
	\then	\hskip.5em plus1fil
	\fi
\fi
%
\bgroup
\setbox\proofbelow=\hbox{}\setbox\proofrulename=\hbox{}%
\let\justifies\proofover\let\leadsto\proofoverdots\let\Justifies\proofoverdbl
\let\using\proofusing\let\[\prooftree
\ifinsideprooftree\let\]\endprooftree\fi
\proofdotsfalse\doubleprooffalse
\let\thickness\setproofrulebreadth
\let\shiftright\shiftproofbelow \let\shift\shiftproofbelow
\let\shiftleft\shiftproofbelowneg
\let\ifwasinsideprooftree\ifinsideprooftree
\insideprooftreetrue
%
\setbox\proofabove=\hbox\bgroup$\displaystyle 
\let\wereinproofbit\prooftree
%
\shortenproofleft=0pt \shortenproofright=0pt \proofbelowshift=0pt
%
\onleftofproofruletrue\penalty1
}

\def\eproofbit{
%
\ifx	\wereinproofbit\prooftree
\then	\ifcase	\lastpenalty
	\then	\shortenproofright=0pt	
	\or	\unpenalty\hfil		
	\or	\unpenalty\unskip	
	\else	\shortenproofright=0pt	
	\fi
\fi
%
\global\dimen0=\shortenproofleft
\global\dimen1=\shortenproofright
\global\dimen2=\proofrulebreadth
\global\dimen3=\proofbelowshift
\global\dimen4=\proofdotseparation
\global\mscount=\proofdotnumber
%
$\egroup  
%
\shortenproofleft=\dimen0
\shortenproofright=\dimen1
\proofrulebreadth=\dimen2
\proofbelowshift=\dimen3
\proofdotseparation=\dimen4
\proofdotnumber=\mscount
}

\def\proofover{
\eproofbit 
\setbox\proofbelow=\hbox\bgroup 
\let\wereinproofbit\proofover
$\displaystyle
}%
%
\def\proofoverdbl{
\eproofbit 
\doubleprooftrue
\setbox\proofbelow=\hbox\bgroup 
\let\wereinproofbit\proofoverdbl
$\displaystyle
}%
%
\def\proofoverdots{
\eproofbit 
\proofdotstrue
\setbox\proofbelow=\hbox\bgroup 
\let\wereinproofbit\proofoverdots
$\displaystyle
}%
%
\def\proofusing{
\eproofbit 
\setbox\proofrulename=\hbox\bgroup 
\let\wereinproofbit\proofusing
\kern0.3em$
}

\def\endprooftree{
\eproofbit 
  \dimen5 =0pt
%
\dimen0=\wd\proofabove \advance\dimen0-\shortenproofleft
\advance\dimen0-\shortenproofright
%
\dimen1=.5\dimen0 \advance\dimen1-.5\wd\proofbelow
\dimen4=\dimen1
\advance\dimen1\proofbelowshift \advance\dimen4-\proofbelowshift
%
\ifdim	\dimen1<0pt
\then	\advance\shortenproofleft\dimen1
	\advance\dimen0-\dimen1
	\dimen1=0pt
	\ifdim  \shortenproofleft<0pt
        \then   \setbox\proofabove=\hbox{%
			\kern-\shortenproofleft\unhbox\proofabove}%
                \shortenproofleft=0pt
        \fi
\fi
%
\ifdim	\dimen4<0pt
\then	\advance\shortenproofright\dimen4
	\advance\dimen0-\dimen4
	\dimen4=0pt
\fi
%
\ifdim	\shortenproofright<\wd\proofrulename
\then	\shortenproofright=\wd\proofrulename
\fi
%
\dimen2=\shortenproofleft \advance\dimen2 by\dimen1
\dimen3=\shortenproofright\advance\dimen3 by\dimen4
%
\ifproofdots
\then
	\dimen6=\shortenproofleft \advance\dimen6 .5\dimen0
	\setbox1=\vbox to\proofdotseparation{\vss\hbox{$\cdot$}\vss}
	\setbox0=\hbox{%
		\kern\dimen6
		$\vcenter to\proofdotnumber\proofdotseparation
			{\leaders\box1\vfill}$%
		\unhbox\proofrulename}%
\else	\dimen6=\fontdimen22\the\textfont2 
	\dimen7=\dimen6
	\advance\dimen6by.5\proofrulebreadth
	\advance\dimen7by-.5\proofrulebreadth
	\setbox0=\hbox{%
		\kern\shortenproofleft
		\ifdoubleproof
		\then	\hbox to\dimen0{%
			$\mathsurround0pt\mathord=\mkern-6mu%
			\cleaders\hbox{$\mkern-2mu=\mkern-2mu$}\hfill
			\mkern-6mu\mathord=$}%
		\else	\vrule height\dimen6 depth-\dimen7 width\dimen0
		\fi
		\unhbox\proofrulename}%
	\ht0=\dimen6 \dp0=-\dimen7
\fi
%
\let\doll\relax
\ifwasinsideprooftree
\then	\let\VBOX\vbox
\else	\ifmmode\else$\let\doll=$\fi
	\let\VBOX\vcenter
\fi
\VBOX	{\baselineskip\proofrulebaseline \lineskip.2ex
	\expandafter\lineskiplimit\ifproofdots0ex\else-0.6ex\fi
	\hbox	spread\dimen5	{\hfi\unhbox\proofabove\hfi}%
	\hbox{\box0}%
	\hbox	{\kern\dimen2 \box\proofbelow}}\doll%
%
\global\dimen2=\dimen2
\global\dimen3=\dimen3
\egroup 
\ifonleftofproofrule
\then	\shortenproofleft=\dimen2
\fi
\shortenproofright=\dimen3
%
\onleftofproofrulefalse
\ifinsideprooftree
\then	\hskip.5em plus 1fil \penalty2
\fi
}


\newcommand{\urule}[3]{%
   \prooftree #1 \justifies #2 \using #3 \endprooftree}
\newcommand{\brule}[4]{%
   \prooftree #1\ \ \ #2 \justifies #3 \using #4 \endprooftree}

\newcommand{\trule}[5]{%
   \prooftree #1\ \ \ #2\ \ \ #3  \justifies #4 \using #5 \endprooftree}

\newcommand{\prova}[3]{%
   \prooftree #1 \justifies  #2  \using #3 \endprooftree}

\newcommand{\LT}[2]{%
   \prooftree #2 \leadsto #1 \endprooftree}

\newcommand{\DT}[3]{%
\settowidth{\frase}{$#2$}
   \prooftree #1 \using {\hspace{-2.5ex} 
\parbox[c][1.5em]{\frase}{$#2$}} %
    \proofdotseparation=1.2ex \proofdotnumber=0 \leadsto #3 \endprooftree}

\newcommand{\ext}[1]{\mathbf{E}{(#1)}}


\def\conc{}

\def\dueseq{\textsf{2--Sequents}}
\def\linest{\textsf{LNS}}

\def\KK{{\textsf{K}}}
\def\T{{\textsf{T}}}
\def\K4{{\textsf{K4}}}
\def\D{{\textsf{D}}}
\def\S4{{\textsf{S4}}}
\def\DQ{{\textsf{D4}}}  

\def\Q2{{\textsf{S4.2}}}

\def\nKK{{\mathcal{N}_{\textsf{K}}}}
\def\nT{\mathcal{N}_{{\textsf{T}}}}
\def\nK4{\mathcal{N}_{{\textsf{K4}}}}
\def\nD{\mathcal{N}_{{\textsf{D}}}}
\def\nDQ{\mathcal{N}_{{\textsf{D4}}}} 
\def\nS4{\mathcal{N}_{{\textsf{S4}}}}
\newcommand{\ns}[1]{\mathcal{N}_{#1}}

\def\iKK{{\mathcal{N}^i_{\textsf{K}}}}
\def\iT{\mathcal{N}^i_{{\textsf{T}}}}
\def\iK4{\mathcal{N}^i_{{\textsf{K4}}}}
\def\iD{\mathcal{N}^i_{{\textsf{D}}}}
\def\iDQ{\mathcal{N}^i_{{\textsf{D4}}}} 
\def\iS4{\mathcal{N}^i_{{\textsf{S4}}}}
\def\class{{\mathcal{N}}}
\def\int{{\mathcal{N}^i}}

\def\red{\mathbf{\succ}}

\def\rred{\stackrel{*}{\red}}

\newcommand\pf[2]{{#1}^{#2}}

\newcommand\ded[3]{%
\begin{array}{c}#1\\{#3}\end{array}%
\hspace{-1ex}{\mbox{\small $#2$}}}




\newcommand{\rep}[2]{[#1 \Rsh  #2]}

\newcommand{\rl}{\mathcal{R}}

\def\conc{}

\def\toc{\mathcal{T}}
\def\pos{\toc^*}

\def\NN{\mathbb{N}}

\def\NN{\mathcal{N}}

\newcommand{\less}[1]{\sqsubseteq_{#1}}
\newcommand{\lest}[1]{\sqsubset_{#1}}
\newcommand{\suc}[1]{\triangleleft_{#1}}

\newcommand{\iniz}[1]{\mathfrak{Init}[#1]}

\newcommand{\mdl}[1]{\mathcal{M}_{#1}}

\newcommand{\tmdl}[1]{\mathfrak{S}_{#1}}

\def\tree{\Theta}

\def\sys{\mathbb{M}}

\def\mform{\mathfrak{mf}}
\def\pform{\mathfrak{pf}}

\def\R{\mathfrak{R}}

\newcommand{\logm}[1]{\mathfrak{M}[#1]}

\newcommand{\srho}[1]{\ell_{{[#1]}}}

\newcommand{\vertil}[1]{\begin{array}{l} #1\end{array}}

\newcommand{\ptol}[1]{\llbracket #1 \rrbracket}





\title{From 2--sequents and Linear Nested Sequents to Natural Deduction for Normal Modal Logics}

\author{Simone Martini$^1$\qquad Andrea Masini$^2$\qquad Margherita Zorzi$^2$}
\date{\small $^1$Universit\`a di Bologna, and INRIA Sophia-Antipolis\\
$^2$Universit\`a di Verona\\
\today}

\begin{document}
\maketitle
\pagestyle{myheadings}

\begin{quotation}
   \small \noindent {\sc Abstract:} %
We extend to natural deduction the approach of Linear Nested Sequents and of 2-Sequents. Formulas are 
decorated with a spatial coordinate, which allows a formulation of formal systems in the original spirit of natural deduction---only one introduction and one elimination rule per connective, no additional (structural) rule, no  explicit reference to the accessibility relation of the intended Kripke models.
We give systems for the normal modal logics from \textsf{K} to \textsf{S4}. For the intuitionistic versions of the systems, we define proof reduction, and prove proof normalization, thus obtaining a syntactical proof of consistency. For logics \textsf{K} and \textsf{K4} we use existence predicates (\`a la Scott) for formulating sound deduction rules. \\

\noindent\small{To appear into \it ACM Transactions on Computational Logic}, 2021.
\end{quotation}
\thanks{\small{\it Mathematics Subject Classification (2000)}\/: 03B22, 03B45, 03F05.\\
\small{\it ACM CCS Concepts}\/: Theory of computation $\rightarrow$ Proof theory; Modal and temporal logics.\\
\small{\it ACM Computing Classification System (1998)}:\/ F.4.1.\\
\noindent{\it Keywords}:
  natural deduction, normalization, intuitionistic logic, 2-sequents, linear nested sequents.}
%
%


\section{Introduction}\label{sec:introduction}

Proof theory of modal logics is a  subtle subject, and if a sequent calculus presentation is complex, natural deduction systems are even more daunting. The source of the problem is already well highlighted in Dag Prawitz's foundational book~\cite{Prawitz:1965}.
 
One of the most successful proof-theoretical formulations of modal logics are the \textit{labelled systems} of~\cite{Vigano00a,Simpson93,Negri:2011},
which extend ordinary natural deduction by explicitly mirroring in 
the deductive apparatus the accessibility relation of Kripke models (see also~\cite{MVVJANCL2101,MVVJLC2011,MasiniViganoZorzi08,MVVENTCS2010,MVZ-jmvl, Baratella2019,Baratella2004a,BaMaJANCL13}). In a sense, they may look like a formalization of Kripke semantics in a
first-order deductive fashion (see Section~\ref{sect:comparisons}, below, for a more complete discussion).
 
Differently from the labelled systems cited above, we aim to define \emph{natural deduction systems} for modal logics \textit{that do not explicitly deal with the accessibility relation}. 
Our leading idea is to extend \emph{geometrically} the standard natural deductive systems for classical and intuitionistic logic, to treat modalities as quantifiers are treated in first-order systems. 
In doing this we refine and extend to natural deduction some recent proposals by Lellmann and others for sequent calculi for modal logics \cite{Pimentel2019,Lellmann2019} (see later in this introduction).

\subsubsection*{Our proposal in a nutshell}
We add to formulas a kind of spatial coordinates, that we call \textit{positions}, to adapt to natural deduction the paradigm of \dueseq{} by Masini~\cite{Mas:TwoSeqProof:92}, and of \rosso{Linear Nested Sequents (\linest{}, from now on)  by Lellmann~\cite{Lellmann2015}.}
The main features of our systems are the following:
\begin{itemize}
	\item[--] there is exactly one
	introduction and one elimination rule for each modal connective;
	\item[--] rules for modal connectives have the same shape as those of first order
	quantifiers;
	\item[--] no formalization of the first order translation of modal logic
	formulas is present at the level of deduction rules (hence no formalization of
	the accessibility relation appears);
	\item[--] a notion of \textit{proof reduction} is given  and \textit{normalization} is proved, following the standard definitions and techniques for natural deduction systems;
	\item[--]only modal operators  can  change the spatial positions of formulas.
\end{itemize}

We stress that, as was the case for \dueseq{} and \linest{}, a specific goal is not to explicitly embed  the notion of accessibility relation, thus equipping the formal systems with ad-hoc deductive rules (see also Section~\ref{sect:comparisons}).

\subsubsection*{A short history}
To fully understand our proposal it is useful to frame it ``historically'', and to go back to \dueseq{}, originally formulated in \cite{Mas:TwoSeqInt:93,Mas:TwoSeqProof:92}.  There, the main idea was to add a second dimension to ordinary propositional sequents. Each formula in a \textsf{2-Sequent} lives at a \textit{level} (that could be seen as a natural number). 

Such a proposal was later extended and generalized to a natural deduction setting. Formulas become indexed formulas, i.e. pairs of formulas and natural numbers, where numbers correspond explicitly  to  levels in  \dueseq{}.
Such an idea works fine for the negative  $\bot$-free fragments of  the modal logics 
\KK, \T, \K4 and \S4, and for the corresponding \textbf{MELL} (Multiplicative Exponential Linear Logic) subsystems \cite{MM:ComInt:95,MM:ONFineStr:95}.
At the time we presented such systems, however, it was not possible to extend them to  full modal logics from \KK{} to \S4, since the simple notion of level of a formula does not interact well with reduction when there are also  $\Diamond$ rules.

The problem  does not show up if, instead of natural deduction, we consider \dueseq{}---see e.g.~\cite{GueMarMasPNGC2001,GuMarMasCoher2003, GMM:an-exp-ext-seq} where the authors show  how \dueseq{} are a suitable framework to deal with full MELL (and other linear systems) both in sequent calculi, and proof nets. 

More recently, the approach based on \dueseq{} has been  extended to deal with linear and branching time temporal logics~\cite{BaMaJANCL13,BarMas:apal}.
In particular, for temporal logics it was necessary to properly extend the notion of level since natural numbers do not suffice.

Finally, the paradigm of \dueseq{} has been reformulated  by Lellmann and coauthors, under the name  of \linest~\cite{Lellmann2015, Lellmann2019,Pimentel2019}, to deal with a more interesting class of logics. 

Unfortunately, \dueseq/\linest{} cannot be directly translated into a natural deduction setting, since the simple decoration of formulas with natural numbers does not agree with the obvious definition of reduction. To overcome  these problems, the simple (simplistic) notion of level has to be generalized to that of \emph{position}. 

\subsubsection*{Content of the paper}
The paper deals with the normal modal logics varying from \KK{} to \S4. \blue{We start with the classical systems since they are the ``standard'' in the modal logic literature. We give systems for each logic, proving soundness and completeness with respect to the axiomatic formulation, passing through a suitable Kripke style semantics of our systems. We then focus on the intuitionistic fragments---obtained syntactically, as usual, by removing the \emph{reduction ab absurdum} rule. For the intuitionistic systems, we define a notion of reduction for proofs and we give a syntactical proof of normalization, along the lines of the analogous proof for standard natural deduction. This allows us to obtain a purely syntactic proof of consistency---as a by-product of normalization---which applies also to the classical systems, via a double-negation translation.}
\rosso{We conclude with a detailed discussion of the relations between our systems and the labelled ones (for \emph{modal natural deduction}), and with some considerations about obtained results and future work.}

\subsubsection*{On classical systems}
One may wonder why dealing with classical logics at all, if the specific results we prove for them are, in the end, only soundness and completeness. 
Proof theory is (and has always been) a way to expound the meaning of logical connectives, independently of a set-theoretic (Tarskian or Kripkean) semantics. This is especially true for natural deduction, through rules of introduction/elimination of a \emph{single} modal connective. 
Our rules for $\Box$ and $\Diamond$, thus, reveal the links that these connectives have to the quantifiers, well before, and independently, of their interpretation as quantifiers on nodes of a \emph{classical} Kripke structure. 

The fact that we prove normalization only for the intuitionistic systems does not mean that normalization does not hold for the classical ones. Only, consequences of normalization (e.g., subformula property) will hold only partially, or only for subsystems (e.g., dealing only with $\bot$, $\to, \wedge$, and $\Box$), as it happens for the first-order classical case (see again, as the only reference among the dozen possible, Prawitz's monograph~\cite[Chapter III]{Prawitz:1965}.)

Moreover, proof assistants are more and more important in computational logic. Natural deduction formulation of \emph{classical} modal logics (that is, the ones at the basis of the logics used in the specification and verification of computer systems) opens up new avenues in the field of mechanical reasoning for such systems. 

Finally, let us remark once more the interest of having a completely syntactic proof of consistency for classical modal logics, independent of the existence of a Kripke model. This should be especially dear to computational logicians, whose bread and butter is, indeed, syntax only.


\section{Preliminary Notions}\label{sect:sequences}

As mentioned in the introduction, formula occurrences will be labeled with \emph{positions}---sequences of uninterpreted \emph{tokens}. We introduce here the notation and operations that will be needed for such notions.

Given a set $X$,  $X^*$ is the set of ordered finite sequences on $X$. With $<x_1,...,x_n>$ we denote the finite non empty sequence s.t. $x_1,\ldots,x_n \in X$; $<\ >$  is the empty sequence.

The  (associative) concatenation  of sequences $\conc:X^*\times X^*\to X^*$ is defined as 
\begin{itemize}
	\item $<x_1,...,x_n>\conc <z_1,...,z_m> = <x_1,...,x_n,z_1,...,z_m>$,
	\item $s\conc<\ > =<\ >\conc s= s$.
\end{itemize}

For $s\in X^*$ and $x\in X$, we sometimes write $s\conc x$ for $s\conc <x>$; and $x\in s$ as a shorthand for $\exists t,u\in X^*.\; s=t\conc<x>\conc u$.
The set $X^*$ is equipped with the following successor  relation

\[ s \suc{X} t \Leftrightarrow \exists x\in X.\; t=s\conc <x>\]

We use the following notations:
\begin{itemize}
	\item $\suc{X}^0$  denotes the reflexive closure of $\suc{X}$;
	\item $\lest{X}$  denotes the transitive closure of $\suc{X}$;
	\item $\less{X}$  denotes the reflexive and transitive closure of $\suc{X}$;
\end{itemize}

Given three sequences $s,u,v\in X^*$ the \emph{prefix replacement} $s\rep{u}{v}$ is so defined 

$
s\rep{u}{v}=
\begin{cases}
v\conc t \quad \mbox{if\ } s=u\conc t
\\
s \quad\mbox{otherwise}
\end{cases}
$

When $u$ and $v$ have the same  length, the replacement is called \textit{renaming} of $u$ with $v$.

\section{Modal Languages and Systems}\label{sec:2seq}

The propositional modal language
${\mathcal L}$ contains the following symbols:
\begin{enumerate}
	\item[--]  
	countably infinite \textit{proposition symbols}, $p_0,p_1,\ldots$;  
	\item[--] the \textit{propositional connectives} $\lor, \land, \to, \bot;$
	\item[--]the \textit{modal operators} $\Box, \Diamond;$
	\item[--]the \textit{auxiliary symbols} $($\ and $).$
\end{enumerate}

As usual, $\neg A$ is a shorthand for $A\to \bot$.

\begin{definition} The set $\mform$ of propositional \textit{modal formulas} 
	of ${\mathcal L}$ is the least set that contains the propositional
	symbols and is closed under application of the propositional
	connectives and the modal operators.
{A formula is atomic if it is a propositional symbol, or the connective $\bot$.}
\end{definition}

In the following, $\toc$ denotes a denumerable set of \emph{tokens}, ranged by meta-variables $x,y,z$, possibly indexed.
Let $\pos$ be the set of the sequences on $\toc$, called \textit{positions}; meta-variables $\alpha,\beta,\gamma$  range on $\pos$, possibly indexed.

\begin{definition}\label{def:levfor} 
A \textit{position-formula} (briefly \emph{p-formula}) is an expression of the form $\pf{A}{\alpha}$, where
		$A$ is a modal formula and $\alpha\in \pos$. We denote by $\pform$ the set of position formulas.
\end{definition}
Given a sequence $\Gamma$ of p-formulas,  $\iniz{\Gamma}$  is the set of prefixes of the positions in $\Gamma$:\\
$\{ \beta : \exists  \pf{A}{\alpha} \in \Gamma.\; \beta \less{} \alpha \}$.

It could be useful to anticipate that, in the semantics we will define in Section~\ref{subs:seman}, positions will be mapped into nodes of a Kripke structure (and hence sublists of a position will range on paths of nodes). Affirming $\pf{A}{\alpha}$ in a Kripke model $\mdl{}$, means that $A$ is true at $\alpha$ in $\mdl{}$.  We stress, however, that positions are, at this point, a mere technical proof-theoretical device, whose aim is to mimic as much as possible the behaviour of first order variables in standard natural deduction. Under this informal interpretation, $\pf{A}{\alpha}$ could be seen as a formula with its free variables in $\alpha$. The modal introduction rules (which work as the quantifier ones in standard natural deduction) act on the position of their main premise, removing (``binding'') some of the tokens of the position. Analogously,  elimination rules allow some form of ``instantiation'' on positions. The possibility to work on sublists of positions is the key ingredient of our approach, when compared to labelled systems, where labels must be treated one-by-one.

\subsection{A  class of normal modal systems}
We briefly recall  the axiomatic (``Hilbert-style'') presentation of normal modal systems.
Let $Z$ be a set of formulas. 
The normal modal logic $\logm{Z}$ is defined as smallest set $X$ of formulas verifying the following properties: 

\begin{description}
	\item[(i)] $Z\subseteq X$
	\item[(ii)] $X$ contains all instances of the following schemas:
	\begin{description}
		\item[1.]
		$A\to(B \to A)$
		\item[2.] $(A\to (B\to C))\to ((A\to B)\to (A\to C))$
		\item[3.] $((\neg B\to \neg A)\to ((\neg  B\to A)\to B)) $
		\item[K.] $\Box(A\to B)\to(\Box A \to \Box B)$
	\end{description}
	\item[MP]  if $A,A\to B \in X$ then $B\in X$;
	\item[NEC] if $A\in X$ then $\Box A\in X$. 
\end{description}

We write $\vdash_{\logm{Z}}  A$  for  $A\in \logm{Z}$.
If $N_1,..,N_k$ are names of schemas, the sequence $N_1\ldots N_k$ denotes the set  
$[N_1]\cup...\cup [N_1]$, where  
$[N_i] =\{A: A \mbox{\ is an instance of the schema\ } N_i\}$.
Figure \ref{fig:modlogics} lists the standard axioms for the well-known modal systems ${\KK}$,  ${\D}$,  ${\T}$,  ${\K4}$,  \rosso{\DQ}, ${\S4}$; we use $\sys$ as a generic name for one of these systems.

\begin{figure}
	\center
	\begin{tabular}{c|c}
		
		Axiom schema & Logic \\ 
		\hline 
		\begin{minipage}{30ex}
			\begin{description}
				\item[D] $\Box A \to \Diamond A$
				\item[T] $\Box A \to A$
				\item[4] $\Box A \to \Box\Box A$
			\end{description}
		\end{minipage}
		& 
		\begin{minipage}{30ex}
			\begin{tabular}{l c l}
				\KK & $=$ & $\logm{\emptyset}$\\
				\D & $=$ & $\logm{\textbf{D}}$\\
				\T & $=$ & $\logm{\textbf{T}}$\\
				\K4 & $=$ & $\logm{\textbf{4}}$\\
		 \rosso{\DQ} & $=$ & \rosso{$\logm{\textbf{D, 4}}$}\\
				\S4 & $=$ & $\logm{\textbf{T, 4}}$
			\end{tabular}
		\end{minipage} 
	\end{tabular} 
	\caption{Axioms for systems ${\KK}$,  ${\D}$,  ${\T}$,  ${\K4}$,  ${\S4}$}\label{fig:modlogics}
\end{figure}

We will call ${\D}$, ${\T}$, \rosso{$\DQ$}, and  ${\S4}$ \textit{total modal logics}, since in their Kripke semantics the accessibility relation is total. Instead, we will call ${\KK}$ and  ${\K4}$ \textit{partial modal logics}.

\section{Natural Deduction Systems}\label{sec:natdedsyst}
 In this section we define natural deduction systems for the class of logics we previously introduced.

\subsection{{Total logics}}
We start by defining the system $\nS4$. The set of derivations from a set $\Gamma$ of assumptions is defined
as the least set that contains $\Gamma$ and is
closed under application of the following rules (where, as usual, a formula into square
brackets represents a discharged assumption):

\subsubsection*{Logical rules}\label{Sect-Total-Rules}
\begin{center}
   $\brule { \LT{\pf{A}{\alpha}}{  }} { \LT{\pf{B}{\alpha}}{ } } { \pf{A\land 
B}{\alpha} }
     {\ (\land I)} $
\qquad $\urule{\LT{\pf{A \land B}{\alpha}}{ }}{\pf{ A}{\alpha}}{\ (\land_1 E)}$ 
\qquad $\urule{\LT{\pf{A \land B}{\alpha}}{ }}{\pf{ B}{\alpha}}{\ (\land_2 E)}$
\end{center}

\bigskip

\begin{center}
   $\urule{\LT{\pf{A}{\alpha}}{ }}{\pf{A \lor B}{\alpha}}{\ (\lor_1 I)}$ \quad
   $\urule{\LT{\pf{B}{\alpha}}{ }}{\pf{A \lor B}{\alpha}}{\ (\lor_2 I)}$ \quad
   $\trule{\LT{\pf{A \lor B}{\alpha}}{ }} {\LT{\pf{C}{\beta}}{[\pf{A}{\alpha}]\  }}
   {\LT{\pf{C}{\beta}}{[\pf{B}{\alpha}]\ }} {\pf{C}{\beta}} {\ (\lor E)} $
\end{center}

\bigskip

\begin{center}
   $\urule { \LT{\pf{B}{\alpha}}{[\pf{A}{\alpha}]\  } } { \pf{A\to B}{\alpha} } {\ (\to 
I)} $ \qquad $
   \brule{\LT{\pf{A\to B}{\alpha}}{ }}{\LT{\pf{A}{\alpha}}{  }}{\pf{B}{\alpha}}{\ (\to 
E)} $
\end{center}

\bigskip

\begin{center}
   $\urule { \LT{\pf{\bot}{\beta}}{[\pf{\lnot A}{\alpha}]\  } } { \pf{A}{\alpha} } {\ 
(\bot_c)} $ \qquad $
   \urule { \LT{\pf{\bot}{\beta}  }{ } } { \pf{A}{\alpha} } {\ (\bot_i)}$
\end{center}

 In $\bot_i,$  $A$ is atomic; moreover, when $A$
is $\bot$ we  require $\alpha\neq \beta.$

\bigskip

\begin{center}
   $ \urule { \LT{\pf{A}{\alpha\conc x}}{ } } { \pf{\Box A}{\alpha} } {\
     (\Box I)* } $ \qquad $ %
   \urule { \LT{\pf{\Box A}{\alpha}}{ } } { \pf{A}{\alpha\conc \beta} } { (\Box E)
     } $
\end{center}

In the rule $\Box I$, one has $\alpha\conc x \not\in \iniz{\Gamma}$, where $\Gamma$ is the
set of {(open)} assumptions on which $\pf{A}{\alpha\conc x}$ depends.

\bigskip

\begin{center}
   $ \urule { \LT{\pf{A}{\alpha\conc \beta}}{ } } { \pf{\Diamond A}{\alpha} } { \
     (\Diamond I) } $ \qquad $\brule{\LT{\pf{\Diamond A}{\alpha}}{ }}
   {\LT{\pf{C}{\beta}}{[\pf{A}{\alpha\conc x}]\ }} {\pf{C}{\beta}} {\qquad (\Diamond 
E)*} $
\end{center}

In the rule $\Diamond E$,  one has $\alpha\conc x \not\in \iniz{\beta}$ and $\alpha\conc x \not\in \iniz{\Gamma}$, where $\Gamma$ is the
set of {(open)} assumptions on which  $\pf{C}{\beta}$ depends, with the
exception of the discharged assumptions $\pf{A}{\alpha\conc x}$.

It is easy to show the admissibility of the following rule, where the requirement of atomicity of the conclusion is removed: 
\begin{center}
   $\urule { \LT{\pf{\bot}{\beta}  }{ } } { \pf{A}{\alpha} } {\ (\bot_i\mbox{-ext})}$
\end{center}
for $\pf{A}{\alpha}\neq\pf{\bot}{\beta}$.

\bigskip

On the basis of $\nS4$,  the natural deduction systems for the logics $\D$, $\T$, and $\DQ$ can be obtained 
by imposing suitable constraints on the application of  $\Box E$ and $\Diamond I$ rules, as shown in the following table.
\bigskip
\\
\begin{tabular}{|c|c|}
	\hline 
	\textbf{name of the calculus} & \textbf{constraints on  the rules $\Box E$ and $\Diamond I$}\\ 
	\hline 
	$\nS4$ & no constraints \\ 
	\hline 
	$\nT$  & $\beta=<\ >$ \\ 
	\hline 
	$\nD$  & $\beta$ is a singleton sequence $<z>$  \\ 
	\hline 
  \rosso{$\nDQ$}  & \rosso{$\beta$ is non empty} \\ 
	\hline 
\end{tabular} 
\bigskip

Let $\NN$ be one of $\nT$, $\nD$, \rosso{$\nDQ$}, $\nS4$;  
as usual we  write $\Gamma\vdash_{\NN} \pf{A}{\alpha}$ if there is a deduction $\Pi$ in $\NN$ with conclusion $\pf{A}{\alpha}$, whose non discharged assumptions appear in $\Gamma$. 
\rosso{
\begin{definition}[Proper position]
We refer to the position $\alpha x$ that explicitly appears in any of the
rules $\Box I$, $\Diamond E$ as to the \textit{proper position} of the
corresponding rule. We say that a position \emph{is  proper  in a
derivation} if it is the proper position of some $\Box I$, $\Diamond E$
rule in the derivation.
\end{definition}
}

By position renaming we can we can prove the following (see~\cite[Vol. 2, pag.
529]{tvd1988} for the analogous proof for proper \emph{variables})\footnote{To be pedantic: a position occurs in a derivation if it occurs as a prefix of $\alpha$ for some position-formula $\pf{A}{\alpha}$ of the derivation.}:

\begin{proposition}\label{vc-cond}%
   Let $\Gamma\vdash_{\NN} \pf{A}{\alpha}$.  Then
   there exists a deduction of $\pf{A}{\alpha}$ from $\Gamma$ in the system
   $\NN$ such that
   \begin{enumerate}
   \item each proper position is the proper position of exactly one instance
     of $\Box I$ or $\Diamond E$ rule;
   \item the proper position of any instance of $\Box I$ rule occurs only
     in the sub-derivation above that instance of the rule;
   \item the proper position of any instance of $\Diamond E$
     rule occurs only in the sub-derivation above the minor premiss of
     that instance of the rule.
   \end{enumerate}
\end{proposition}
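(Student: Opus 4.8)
The plan is to adapt to positions the standard treatment of proper variables (eigenvariables) for the quantifier rules of first-order natural deduction (compare \cite[Vol.~2, pag.~529]{tvd1988}), using the prefix replacement $\rep{\cdot}{\cdot}$ of Section~\ref{sect:sequences} in place of variable renaming. The guiding observation is that the only rules creating a proper position are $\Box I$ and $\Diamond E$, and that in both the side condition $\alpha\conc x\notin\iniz{\Gamma}$ guarantees that the token $x$ used for the proper position $\alpha\conc x$ is disjoint, as a prefix, from the positions of the open assumptions on which the premise depends. This is exactly the analogue of an eigenvariable not occurring free in the open assumptions, and it is what makes renaming sound.

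First I would isolate a renaming lemma: if $\Pi$ is a deduction of $\pf{A}{\alpha}$ from a set $\Delta$ of assumptions in $\NN$ and $x'$ is a token occurring nowhere in $\Pi$, then replacing every position-formula $\pf{B}{\delta}$ of $\Pi$ by $\pf{B}{\delta\rep{\gamma\conc x}{\gamma\conc x'}}$ again yields a deduction in $\NN$, of $\pf{A}{\alpha\rep{\gamma\conc x}{\gamma\conc x'}}$ from the correspondingly renamed assumptions (so that an assumption is left untouched exactly when its position does not carry $\gamma\conc x$ as a prefix). The proof is an inspection of the rules. For the propositional rules the positions are shared or treated uniformly, so the replacement commutes trivially; for $\Box I$, $\Box E$, $\Diamond I$, $\Diamond E$ one checks, by a case split on whether $\gamma\conc x$ is a prefix of the positions involved, that the length-preserving replacement commutes with the concatenations occurring in premise and conclusion, so a legal instance goes to a legal instance---and, the length of the relevant suffix being preserved, the $\nT$, $\nD$, $\nDQ$ restrictions on $\Box E/\Diamond I$ survive as well. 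The only delicate points are two side conditions: the distinctness requirement of $\bot_i$ and the freshness requirements of $\Box I/\Diamond E$ are preserved because the replacement is injective on the positions of $\Pi$ (were it to identify two of them, one would have to acquire the prefix $\gamma\conc x'$, impossible as $x'$ is fresh).

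With the lemma available I would argue by induction on the structure of $\Pi$. If the last rule is neither $\Box I$ nor $\Diamond E$, I apply the induction hypothesis to the immediate subdeductions and then use the lemma to rename the proper positions of distinct subdeductions apart, always to globally fresh tokens; since (2)--(3) confine each proper position strictly above its own rule and the concluding rule creates no proper position, the three conditions lift to the whole deduction. If the last rule is $\Box I$ with proper position $\alpha\conc x$, I normalize the subdeduction above by the induction hypothesis and then rename $\alpha\conc x$ to a globally fresh $\alpha\conc x'$; here the side condition $\alpha\conc x\notin\iniz{\Gamma}$ is what guarantees that the surviving assumptions $\Gamma$ are left untouched, while the conclusion $\pf{\Box A}{\alpha}$ is unchanged and $\alpha\conc x'$, being fresh, now occurs only above this instance---this gives (1) and (2). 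The case $\Diamond E$ is symmetric, the renaming being applied to the minor-premise subdeduction: the side conditions $\alpha\conc x\notin\iniz{\beta}$ and $\alpha\conc x\notin\iniz{\Gamma}$ keep the conclusion $\pf{C}{\beta}$ and the undischarged assumptions fixed, whereas the discharged assumption $\pf{A}{\alpha\conc x}$ is renamed along to $\pf{A}{\alpha\conc x'}$, yielding (3).

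I expect the renaming lemma to be the only genuine obstacle. What must be checked with care is that a length-preserving prefix replacement, pushed through the modal rules, never turns a legal instance into an illegal one---the commutation with concatenation, with its case split on prefixes---and never merges two positions that a side condition must keep apart---the injectivity, which rests on the freshness of $x'$. Once these two points are secured, the inductive assembly is routine and parallels the first-order eigenvariable argument.
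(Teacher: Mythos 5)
Your proof is correct and follows exactly the route the paper intends: the paper gives no detailed argument, merely invoking ``position renaming'' with a pointer to the standard proper-variable (eigenvariable) proof in \cite[Vol.~2, pag.~529]{tvd1988}, which is precisely the renaming-lemma-plus-structural-induction you carry out. Your explicit checks---that a fresh, length-preserving prefix replacement preserves and reflects the prefix relation (hence the $\iniz{\cdot}$ side conditions and the $\nT$/$\nD$/$\nDQ$ constraints) and is injective on the positions of the deduction (hence the $\bot_i$ distinctness condition survives)---are a faithful filling-in of the details the paper leaves implicit.
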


\begin{definition}[Position condition] A deduction satisfying conditions 1--3  of 
Proposition~\ref{vc-cond} is
said to satisfy  the \textit{position condition}.
\end{definition}

By  Proposition~\ref{vc-cond} we can always assume that all deductions
satisfy the position condition.
We denote
by $\Pi\rep{\beta}{\gamma}$ the tree obtained by replacing each position $\alpha$ in
a deduction $\Pi$ with $\alpha\rep{\beta}{\gamma}$.
\rosso{
\begin{remark}\label{subst} Under reasonable assumptions, this operation of position substitution $\Pi\rep{\beta}{\gamma}$ preserves the position condition. Indeed, if:
\begin{enumerate}
\item $\Pi$ is a deduction satisfying the position condition;
\item  $\beta$ is a position that is
not a proper position of $\Pi;$
\item $\gamma$ is a position not
containing any proper position of $\Pi;$
\end{enumerate}
then $\Pi\rep{\beta}{\gamma}$ is a deduction satisfying  the position condition.
\end{remark}
}

Note that if the last rule of 
$\Pi$ is $\bot_i$, and the last formula is $\pf{\bot}{\alpha}$ for some 
$\alpha$,
it might be the case that, after the position substitution, the side condition of this application of $\bot_i$ is no longer satisfied (that is, its premise and conclusion are both $\pf{\bot}{\delta}$, for the same $\delta$). In such a case by  $\Pi\rep{\beta}{\gamma}$ we 
mean the deduction obtained by deleting,  after the substitution, the last---incorrect---application of 
$\bot_i.$

Finally, we want to make sense of the operation $\Pi\rep{\beta}{\gamma}$ even when the 
conditions of Remark~\ref{subst} are not satisfied. Notice that if $\Pi$ 
is a deduction satisfying the position condition, we can replace any proper position in $\Pi$ 
by a new position, to obtain a deduction $\Pi'$ of the same formula 
from the same assumptions, and such that $\beta$ and $\gamma$ satisfy all the 
conditions of Remark~\ref{subst}. Hence we define $\Pi\rep{\beta}{\gamma}$ as this
$\Pi'\rep{\beta}{\gamma}$. In the sequel we will implicitly assume that by $\Pi\rep{\beta}{\gamma}$ 
we actually mean $\Pi'\rep{\beta}{\gamma}$, for some $\Pi'$ as above.

\subsection{Weak Completeness}
We prove a Weak Completeness theorem passing through some auxiliary results. 

\begin{proposition}\mbox{}\label{Prop:tot:weakcompl}
	\begin{enumerate}
		
		\item Let $\NN$ be one of the  systems ${\nD}$, ${\nT}$, \rosso{$\nDQ$}, ${\nS4}$: $\vdash_{\NN} 
			\pf{\Diamond A \leftrightarrow \neg\Box\neg A}{<>}$;
		\item Let $\NN$ be one of the  systems ${\nD}$, ${\nT}$, \rosso{$\nDQ$}, ${\nS4}$: $\vdash_{\NN} \pf{\Box (A\to B)\to (\Box A \to \Box B)}{<>}$;
		\item \label{Prop:tot:weakcompl:T} Let $\NN$ be one of the  systems ${\nT}$,  ${\nS4}$:
		$\vdash_{\NN} \pf{ \Box A \to A}{<>}$;
		\item \label{Prop:tot:weakcompl:D} Let $\NN$ be one of the  systems ${\nD}$, \rosso{$\nDQ$}, ${\nS4}$:
		$\vdash_{\NN} \pf{ \Box A \to \Diamond A}{<>}$;
		\item \label{Prop:tot:weakcompl:4} Let $\NN$ be one of the  systems \rosso{$\nDQ$}, ${\nS4}$: 
		$\vdash_{\NN} \pf{ \Box A \to \Box\Box A}{<>}$;
	\end{enumerate}

\end{proposition}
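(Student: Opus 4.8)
The plan is to read the five clauses as assertions that certain modal axiom schemas admit the expected natural-deduction derivations, and to prove each by exhibiting an explicit proof tree built solely from the introduction/elimination rules of the relevant system. Since the modal rules are designed to behave as the quantifier rules of first-order natural deduction, each derivation mirrors the first-order derivation of the corresponding validity, with positions in the role of terms and the eigen-token of $(\Box I)$/$(\Diamond E)$ in the role of a proper variable. Two clauses are almost immediate: for clause~\ref{Prop:tot:weakcompl:T} I assume $\pf{\Box A}{<>}$, apply $(\Box E)$ with the empty extension to get $\pf{A}{<>}$ (the reflexive instantiation, licensed in $\nT$ and $\nS4$), and discharge by $(\to I)$; for clause~\ref{Prop:tot:weakcompl:D} I assume $\pf{\Box A}{<>}$, step to $\pf{A}{<x>}$ by $(\Box E)$ along a single token, return to $\pf{\Diamond A}{<>}$ by $(\Diamond I)$ along the same token, and discharge by $(\to I)$.

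For the distribution axiom I would follow the pattern of $\forall$-distribution: assuming $\pf{\Box(A\to B)}{<>}$ and $\pf{\Box A}{<>}$, I aim the final $(\Box I)$ at $\pf{B}{<x>}$ for a fresh token, instantiate both boxed assumptions to $<x>$ with $(\Box E)$, combine them by $(\to E)$ to reach $\pf{B}{<x>}$, and close with $(\Box I)$ and two applications of $(\to I)$. For clause~\ref{Prop:tot:weakcompl:4} I nest two box-introductions, opening first $<x>$ and then $<x,y>$, reach $\pf{A}{<x,y>}$ by a single $(\Box E)$ instantiating $\pf{\Box A}{<>}$ along the two-token position (the transitive instantiation), and then re-box twice and discharge. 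In each case one records the length of the extension used -- empty, one token, or two tokens -- and checks that it meets the $\beta$-side-condition of the system in question.

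The substantive clause is the duality, which needs both directions and, in one of them, classical reasoning. For $\Diamond A\to\neg\Box\neg A$ I would assume $\pf{\Diamond A}{<>}$ and $\pf{\Box\neg A}{<>}$, use $(\Diamond E)$ to name the witness as $\pf{A}{<x>}$, instantiate $\pf{\Box\neg A}{<>}$ to $\pf{\neg A}{<x>}$ and apply $(\to E)$ to obtain $\pf{\bot}{<x>}$; the delicate point is that the conclusion of $(\Diamond E)$ may not contain the eigen-token, so I first transport the falsum back by $(\bot_i)$ to $\pf{\bot}{<>}$ (legitimate precisely because $<>\neq<x>$) and only then close the $(\Diamond E)$. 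For the converse $\neg\Box\neg A\to\Diamond A$ I argue classically: under the assumption $\pf{\neg\Diamond A}{<>}$, I build $\pf{\Box\neg A}{<>}$ by a $(\Box I)$ over $\pf{\neg A}{<x>}$, the latter obtained by $(\to I)$ from a derivation of $\pf{\bot}{<>}$ that feeds a hypothetical $\pf{A}{<x>}$ through $(\Diamond I)$ against the standing $\pf{\neg\Diamond A}{<>}$; this contradicts $\neg\Box\neg A$, and $(\bot_c)$ discharges $\pf{\neg\Diamond A}{<>}$ to yield $\pf{\Diamond A}{<>}$. The two implications are then joined by $(\land I)$ to give the biconditional.

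I expect the main obstacle to be not the logical content but the disciplined handling of the side conditions. Three points require care: the freshness of the eigen-tokens $<x>$ and $<x,y>$ for $(\Box I)$ and $(\Diamond E)$, which holds here because every global assumption sits at $<>$ and hence $\iniz{\Gamma}=\{<>\}$; the restriction that the eigen-token of $(\Diamond E)$ not occur in its conclusion position, which is exactly what forces the $(\bot_i)$-transport step in the first half of the duality; and the matching of every $(\Box E)$ and $(\Diamond I)$ step to the per-system constraint on the extension $\beta$. Throughout I would invoke the position condition (Proposition~\ref{vc-cond}) to assume that proper positions are chosen pairwise distinct and disjoint from the assumptions, so that the successive fresh tokens never clash; with that convention in place, each of the five derivations is routine.
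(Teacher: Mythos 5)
Your proposal is correct and takes essentially the same route as the paper, whose proof of Proposition~\ref{Prop:tot:weakcompl} consists of exactly the five derivations you describe: the same $(\Box E)$/$(\Diamond I)$ instantiations with empty, one-token and two-token extensions for items 2--5, and the same classical argument via $(\bot_c)$ for the direction $\neg\Box\neg A\to\Diamond A$, with identical freshness bookkeeping on the eigen-tokens. If anything, your explicit $(\bot_i)$-transport of $\pf{\bot}{x}$ to $\pf{\bot}{<>}$ before closing the $(\Diamond E)$ makes precise a step that the paper's displayed trees leave implicit in their $(\to I)$ inferences across distinct positions.
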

\begin{proof}\mbox{}
\begin{enumerate}
	\item \mbox{\\}
	\def\uno{
	\prova{[\pf{\Box \neg A}{< >}]}{\pf{\neg A}{x}}{\Box E}
}
\def\due{
	\prova{
	[\pf{A}{x}] \uno
	}
	{\pf{\bot}{x}}
	{\to E}
}

\def\tre{
	\prova{
	\due
	}
	{\pf{\neg\Box\neg A}{<>}}
	{\to I}
}

\def\quattro{
	\prova{
	[\pf{\Diamond A}{<>}]	\tre
	}
	{\pf{\neg\Box\neg A}{<>}}
	{\Diamond E}
}
\def\cinque{
	\prova{
	\quattro
	}
	{\pf{\Diamond A \to \neg\Box\neg A}{<>}}
	{\to I}
}
\mbox{} \qquad $\cinque$	
\def\uno{
	\prova{[\pf{A}{x}]}{\pf{\Diamond A}{<>}}{\Diamond I}
}	
\def\due{
	\prova{[\pf{\neg\Diamond A}{x}] \uno }{\pf{\bot}{<>}}{\to E}
}
\def\tre{
	\prova{\due }{\pf{\neg A}{x}}{\to I}
}
\def\quattro{
	\prova{\tre }{\pf{\Box \neg A}{}}{\Box I}
}
\def\cinque{
	\prova{\quattro [\pf{\neg\Box\neg A}{<>} ]}{\pf{\bot}{<>}}{\to E}
}
\def\sei{
	\prova{\cinque }{\pf{\Diamond A}{<>}}{\bot_c}
}
\def\sette{
	\prova{\sei }{\pf{\neg\Box\neg A\to \Diamond A}{<>}}{\to I}
}	
	$\sette$
	
	\item \def\auno{\prova{[\pf{\Box A}{< >}]}{\pf{ A}{x}}{\Box E}}
	\def\adue{\prova{[\pf{\Box (A\to B)]}{< >}}{\pf{A\to B}{x}}{\Box E}}
	\def\atre{\prova{\auno\ \adue}{\pf{B}{x}}{\to E }}
	\def\aquattro{\prova{\atre}{\pf{\Box B}{<>}}{\Box I}}
	\def\acinque{\prova{\aquattro}{\pf{\Box A \to \Box B}{<>}}{\to I}}
	\def\asei{\prova{\acinque}{\pf{\Box (A\to B)\to (\Box A \to \Box B)}{<>}}{\to I}}
	$$\asei$$
	\item 
	\def\uno{\prova{[\pf{\Box A}{< >}]}{\pf{ A}{<>}}{\Box E}}
	\def\due{\prova{\uno}{\pf{ \Box A \to A}{<>}}{\to I}}
	$$\due$$
	\item 
	\def\uno{\prova{[\pf{\Box A}{< >}]}{\pf{ A}{x}}{\Box E}}
	\def\due{\prova{\uno}{\pf{ \Diamond A}{<>}}{\Diamond I}}
	\def\tre{\prova{\due}{\pf{ \Box A \to \Diamond A}{<>}}{\to I}}
	$$\tre$$
	\item 
	\def\uno{\prova{[\pf{\Box A}{< >}]}{\pf{ A}{xy}}{\Box E}}
	\def\due{\prova{\uno}{\pf{ \Box A}{x}}{\Box I}}
	\def\tre{\prova{\due}{\pf{ \Box \Box A}{<>}}{\Box I}}
	\def\quattro{\prova{\tre}{\pf{ \Box A \to \Box \Box A}{<>}}{\to I}}
	$$\quattro$$
\end{enumerate}

\end{proof}

Closure under \textbf{NEC}  is obtained by showing that all positions in a provable sequent may be ``lifted'' by any prefix. Observe first that, for $\Gamma=A_1^{\gamma_1},\ldots, A_n^{\gamma_n}$,  we have $\Gamma\rep{<>}{\beta} = A_1^{\beta\conc\gamma_1},\ldots, A_n^{\beta\conc\gamma_n}$.

\begin{proposition}[lift]\label{lift:basics}
	Let $\NN$ be one of the  systems ${\nD}$, ${\nT}$, \rosso{$\nDQ$}, ${\nS4}$, and let $\beta$ be a  position.
	If 
	$\Gamma\vdash_{\NN} \pf{A}{\alpha}$,
	 then 
	$\Gamma\rep{<>}{\beta}\vdash_{\NN}  \pf{A}{\beta\conc\alpha}$.
\end{proposition}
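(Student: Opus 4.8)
The plan is to prove this by induction on the structure of the derivation $\Pi$ witnessing $\Gamma\vdash_{\NN} \pf{A}{\alpha}$. The key observation is that lifting all positions by a fixed prefix $\beta$ is exactly the position substitution $\Pi\rep{<>}{\beta}$, which prepends $\beta$ to every position occurring in the derivation. So the real content is to verify that applying this uniform prefix does not violate any side condition on any rule, and that the substituted tree is still a legal derivation of the lifted conclusion from the lifted assumptions. By the convention established after Remark~\ref{subst}, I may assume $\Pi$ satisfies the position condition and that all proper positions have been renamed so that the substitution $\rep{<>}{\beta}$ is well-behaved.

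First I would observe that the purely propositional rules ($\land I$, $\land E$, $\lor I$, $\lor E$, $\to I$, $\to E$, $\bot_c$, $\bot_i$) are entirely insensitive to a uniform prefixing: they relate formulas at positions $\alpha$ (and possibly $\beta$, as in $\bot_i$) without constraining the internal shape of those positions, and since prepending a common prefix preserves distinctness of positions, the side condition $\alpha\neq\beta$ of $\bot_i$ is preserved. The interesting cases are the modal rules. For $\Box E$ and $\Diamond I$, I must check that the constraint imposed by the particular system $\NN$ (namely, in $\nT$ that $\beta=\langle\,\rangle$; in $\nD$ that the suffix is a singleton; in $\nDQ$ that it is nonempty; in $\nS4$ no constraint) is preserved, which it clearly is, since these constraints concern only the \emph{suffix} $\beta$ being adjoined after $\alpha$, and this suffix is untouched by prepending a prefix to $\alpha$. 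For $\Box I$ and $\Diamond E$, the side conditions are freshness requirements on the proper position $\alpha\conc x$: after substitution this becomes $\beta\conc\alpha\conc x$, and I must check it remains fresh with respect to $\iniz{\Gamma\rep{<>}{\beta}}$ (and, for $\Diamond E$, with respect to $\iniz{\beta'}$ for the minor premiss's position). This is precisely the content guaranteed by Remark~\ref{subst}: since prefixing is injective on positions and carries prefixes to prefixes, freshness is preserved.

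The main obstacle I anticipate is the bookkeeping around the $\Box I$ freshness condition $\alpha\conc x\notin\iniz{\Gamma}$ interacting with the substitution. The subtlety is that the \emph{set of open assumptions} $\Gamma$ on which the premise depends is itself transformed, and I must ensure that lifting does not accidentally make the proper token $x$ collide with a prefix now appearing in the lifted assumptions. This is exactly why the preliminary renaming step is needed: by first renaming proper positions to be genuinely new (as licensed by the paragraph following Remark~\ref{subst}), I arrange that $\beta$ and $\gamma=\beta\conc(-)$ satisfy the hypotheses of Remark~\ref{subst}, so that $\Pi\rep{<>}{\beta}$ is guaranteed to satisfy the position condition and hence be a legal derivation. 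With this in hand, the induction goes through mechanically: each rule application in $\Pi$ maps to a rule application of the same kind in $\Pi\rep{<>}{\beta}$, the conclusion $\pf{A}{\alpha}$ becomes $\pf{A}{\beta\conc\alpha}$, and the open assumptions become $\Gamma\rep{<>}{\beta}$, which by the displayed identity preceding the proposition is exactly $A_1^{\beta\conc\gamma_1},\ldots,A_n^{\beta\conc\gamma_n}$, as required.
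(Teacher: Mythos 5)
Your proof is correct and takes essentially the same route as the paper's, which disposes of the proposition in one line as ``standard induction on derivation (with suitable renaming of proper positions)'' plus the easy verification that the constraints on the modal rules remain satisfied. Your write-up simply makes that verification explicit---observing that the suffix constraints on $\Box E$/$\Diamond I$ concern only the adjoined suffix and so survive uniform prefixing, and that the freshness conditions on $\Box I$/$\Diamond E$ are preserved after the renaming licensed by Remark~\ref{subst}---which is exactly the content the paper leaves to the reader.
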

\begin{proof}
	Standard induction on derivation (with suitable renaming of proper positions). It is easily verified that the constraints on the modal rules remain satisfied.
\end{proof}

\begin{corollary}
	Let $\NN$ be one of the  systems ${\nD}$, ${\nT}$, \rosso{$\nDQ$}, ${\nS4}$.
	\\
	If $\vdash_{\NN} \pf{A}{<\ >}$, then
	$\vdash_{\NN} \pf{\Box A}{<\ >}$.
\end{corollary}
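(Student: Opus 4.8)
The corollary is exactly the assertion that each system $\NN$ is closed under the Necessitation rule \textbf{NEC}. The plan is to obtain the conclusion by a single application of the $\Box I$ rule, once the position of $A$ has been raised from $<\ >$ to a fresh token by means of the lift property (Proposition~\ref{lift:basics}). Thus the whole argument is just ``lift, then introduce a box''.

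In detail, I would suppose $\vdash_{\NN}\pf{A}{<\ >}$, so that there is a deduction of $\pf{A}{<\ >}$ whose set of open assumptions is empty. I first pick a token $x$ and invoke Proposition~\ref{lift:basics} with $\beta=<x>$. Because the assumption set is empty, its lifted version is again empty, and the proposition delivers a deduction witnessing $\vdash_{\NN}\pf{A}{<\ >\conc x}$, i.e.\ $\vdash_{\NN}\pf{A}{<x>}$. I then append one instance of $\Box I$ with $\alpha=<\ >$: its premise $\pf{A}{<\ >\conc x}=\pf{A}{<x>}$ is precisely the formula just derived, and its conclusion is the desired $\pf{\Box A}{<\ >}$.

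The single thing that must be verified---and the only place where the argument could go wrong---is the side condition of $\Box I$, which requires the proper position $<\ >\conc x=<x>$ not to belong to $\iniz{\Gamma}$, where $\Gamma$ collects the open assumptions on which the premise depends. Here $\Gamma=\emptyset$, so $\iniz{\Gamma}=\emptyset$ and the condition holds vacuously; choosing $x$ genuinely fresh (which the lift permits, since the prefixing token is arbitrary) also keeps the resulting deduction within the position condition. I therefore do not anticipate a real obstacle: the content of the corollary lies entirely in Proposition~\ref{lift:basics}, and what remains is the bookkeeping that $\Box I$ applies at the empty position with no assumptions to constrain it.
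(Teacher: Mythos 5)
Your proposal is correct and follows exactly the route the paper intends: the corollary is stated immediately after Proposition~\ref{lift:basics} precisely so that one lifts the closed deduction of $\pf{A}{<\ >}$ by $\beta=<x>$ to obtain $\vdash_{\NN}\pf{A}{<x>}$ and then closes with one application of $\Box I$, whose side condition is vacuous because $\iniz{\emptyset}=\emptyset$. Your explicit check of the $\Box I$ proviso and of the position condition is exactly the bookkeeping the paper leaves implicit.
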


Finally, closure under \textbf{MP} is trivially ensured by rule $(\to E)$.

\begin{theorem}[weak completeness]
	Let $\sys$ be one of the modal systems {\D}, {\T}, \rosso{\DQ}, {\S4}. If  
	$\vdash_\sys A$, then $\vdash_{\ns{\sys}} \pf{A}{<>}$.
\end{theorem}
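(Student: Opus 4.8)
The plan is to exploit the minimality built into the axiomatic presentation. Recall that $\vdash_\sys A$ means $A\in\logm{Z}$, where $Z$ is the axiom set of $\sys$ and $\logm{Z}$ is by definition the \emph{smallest} set of formulas that contains $Z$, contains all instances of the schemas \textbf{1}--\textbf{3} and \textbf{K}, and is closed under \textbf{MP} and \textbf{NEC}. Setting
$S=\{A\in\mform : \vdash_{\ns{\sys}}\pf{A}{<>}\}$,
it therefore suffices to verify that $S$ enjoys each of these four closure properties; minimality of $\logm{Z}$ then forces $\logm{Z}\subseteq S$, which is exactly the claim. (Equivalently, one may run an induction on the length of a Hilbert derivation $B_1,\dots,B_n=A$, treating base cases by the axiom derivations and inductive steps by the two rule closures below.)

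First I would dispatch the axioms. The propositional schemas \textbf{1}, \textbf{2}, \textbf{3} are classical tautologies, and since the propositional rules may be applied with all premises and conclusion sharing one and the same position, the standard classical natural-deduction derivations of these tautologies can be replayed verbatim with every formula decorated by the single position $<>$ (schema \textbf{3} is where $\bot_c$ is needed); hence all their instances lie in $S$. Schema \textbf{K} and the specific modal axioms of $Z$ have already been derived at $<>$ in Proposition~\ref{Prop:tot:weakcompl}: part 2 gives \textbf{K}, while parts~\ref{Prop:tot:weakcompl:T}, \ref{Prop:tot:weakcompl:D}, and~\ref{Prop:tot:weakcompl:4} give \textbf{T}, \textbf{D}, and \textbf{4} for exactly the systems in which they occur. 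Matching the cases: for $\sys={\D}$ we need \textbf{D} (part~\ref{Prop:tot:weakcompl:D}); for $\sys={\T}$ we need \textbf{T} (part~\ref{Prop:tot:weakcompl:T}); for $\sys={\DQ}$ we need \textbf{D} and \textbf{4} (parts~\ref{Prop:tot:weakcompl:D} and~\ref{Prop:tot:weakcompl:4}); for $\sys={\S4}$ we need \textbf{T} and \textbf{4} (parts~\ref{Prop:tot:weakcompl:T} and~\ref{Prop:tot:weakcompl:4}). Thus $Z$ together with all axiom instances belongs to $S$.

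Closure of $S$ under the two rules is immediate from the material already established: closure under \textbf{MP} is the rule $(\to E)$ applied at position $<>$ (as remarked just before the theorem), and closure under \textbf{NEC} is precisely the Corollary to Proposition~\ref{lift:basics}, namely that $\vdash_{\ns{\sys}}\pf{A}{<>}$ implies $\vdash_{\ns{\sys}}\pf{\Box A}{<>}$. Having checked all four conditions, minimality of $\logm{Z}$ yields $\logm{Z}\subseteq S$ and finishes the argument.

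The only genuinely new labour is the propositional step, and even this is routine once one observes that positions play no role in the purely propositional fragment, so the usual derivations go through with a uniform annotation by $<>$. The single point deserving care concerns $\Diamond$: if the axiomatic system regards $\Diamond$ as defined by $\neg\Box\neg$ rather than as primitive, then part 1 of Proposition~\ref{Prop:tot:weakcompl} (the derivable equivalence $\pf{\Diamond A\leftrightarrow\neg\Box\neg A}{<>}$) is what licenses passing between the defined and the primitive reading inside $S$; with $\Diamond$ primitive and no duality axiom present in $Z$, this part is not needed for the reduction itself.
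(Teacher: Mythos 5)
Your proof is correct and takes essentially the same route as the paper, which assembles exactly the ingredients you cite---Proposition~\ref{Prop:tot:weakcompl} for the modal axioms, the corollary to Proposition~\ref{lift:basics} for closure under \textbf{NEC}, and the rule $(\to E)$ for \textbf{MP}---leaving the induction on the Hilbert derivation (your minimality argument) implicit. Your explicit handling of the propositional schemas at the uniform position $<>$ and of the primitive-$\Diamond$ point merely fills in routine details the paper omits.
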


\section{{Partial logics}}\label{sec:partlogics}
The treatment of partial logics $\KK$ and $\K4$ is delicate and requires the introduction of auxiliary notions to soundly define their formal system and prove proof-theoretic results.
To motivate the formal systems for $\KK$ and $\K4$, remember that in the semantics of Section~\ref{subs:seman}, positions will be mapped into nodes of a Kripke structure. 
Both $\KK$ and $\K4$ are complete with respect to the class of models where the accessibility relation is not always defined. This means that \emph{the correspondence between positions and nodes could be undefined at some position}, a situation reminiscent of the case of first order logic with undefined terms\footnote{The formal analogy between variables/terms and tokens/positions (and hence between quantifiers and modalities) is one of the \emph{leitmotive} of the 2-sequents approach, as we already mentioned in Section~\ref{sec:2seq}).}. In fact, we will treat this case with an \textit{existence predicate for positions}, a tool introduced by D. Scott in the late seventies~\cite{Scott1979} to deal with empty domains, and therefore with partially defined terms.
For a first order logic term $t$, the predicate $\ext{t}$ has the following intuitive meaning:  \textit{$t$ is defined}\footnote{For an extensive treatment of existence predicates for first order natural deduction, see the two volumes~\cite{tvd1988}, or the survey~\cite{BaazIemhoff2005}.}.

The natural deduction systems introduced in the previous section are now expanded with formulas of the form $\ext{\alpha}$, where $\alpha$ is any position and which we informally read as:  \textit{$\alpha$ denotes an existing node/object.} Such formulas $\ext{\alpha}$ \emph{may be used only as premises} in deductions. The only modified rules w.r.t. the previously introduced formal  system are the modal ones. 

Rules for $\Box$ are the following:
\begin{center}
	$ \urule { \LT{\pf{A}{\alpha\conc x}}{[\ext{\alpha\conc x}] } } { \pf{\Box A}{\alpha} } {\
		(\Box I)* } $ \qquad $ %
	\urule { \LT{\pf{\Box A}{\alpha}}{ }\ext{\alpha\conc \beta}} { \pf{A}{\alpha\conc \beta} } { (\Box E)
	} $
\end{center}
where in the rule $\Box I$, $\alpha\conc x \not\in \iniz{\Gamma}$, where $\Gamma$ is the
set of {(open)} assumptions on which $\pf{A}{\alpha\conc x}$ depends.

Rules for $\Diamond$ are the following:

\bigskip

\begin{center}
	$ \urule { \LT{\pf{A}{\alpha\conc \beta}  }\qquad \ext{\alpha\conc \beta}{ } } { \pf{\Diamond A}{\alpha} } { \
		(\Diamond I) } $ \qquad $\brule{\LT{\pf{\Diamond A}{\alpha}}{ }}
	{\LT{\pf{C}{\beta}}{[\pf{A}{\alpha\conc x}]\quad [\ext{\alpha\conc x}]\ }} {\pf{C}{\beta}} {\qquad (\Diamond 
		E)*} $
\end{center}
where in rule $\Diamond E$,  $\alpha\conc x \not\in \iniz{\beta}$ and $\alpha\conc x \not\in \iniz{\Gamma}$, where $\Gamma$ is the
set of {(open)} assumptions on which  $\pf{C}{\beta}$ depends, with the
exception of the discharged assumptions $\pf{A}{\alpha\conc x}$.

These ``generic'' rules are further constrained to take into account the specifics of the systems $\K4$ and $\KK$. The following table gives such constraints for the systems $\textbf{N}_{\textbf{K4}}$ and $\textbf{N}_{\textbf{K}}$.

\bigskip
\begin{tabular}{|c|c|}
	\hline 
	\textbf{name of the calculus} & \textbf{constraints on  the rules $\Box E$ and $\Diamond I$}\\ 
	\hline 
	$\nK4$ & $\beta$ is a non empty sequence \\ 
	\hline 
	$\nKK$  & $\beta$ is a singleton sequence $<z>$  \\ 
	\hline 
\end{tabular} 

\subsection{Weak Completeness}

We  prove a Weak Completeness Theorem also for partial logics.

\begin{proposition}\mbox{}\label{prop:partial:weakcomp}
	\begin{enumerate}
			\item Let $\NN$ be one of the  systems ${\nKK}, {\nK4}$, $\vdash_{\NN} 
		\pf{\Diamond A \leftrightarrow \neg\Box\neg A}{<>}$;
		\item \label{prop:partial:weakcomp:K}
		Let $\NN$ be one of the  systems ${\nKK},  {\nK4}$, $\vdash_{\NN} \pf{\Box (A\to B)\to (\Box A \to \Box B)}{<>}$;
		\item $\vdash_{\nK4} \pf{ \Box A \to \Box\Box A}{<>}$;
	\end{enumerate}
	
\end{proposition}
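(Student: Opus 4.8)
The plan is to establish each of the three derivability claims by exhibiting an explicit closed derivation in the relevant partial system, reusing the skeleton of the derivations given for the total systems in Proposition~\ref{Prop:tot:weakcompl} and adding the bookkeeping required by the existence predicates. Recall that in the partial calculi the only rules that differ are the modal ones: each $\Box E$ and each $\Diamond I$ now carries an extra premise $\ext{\alpha\conc\beta}$, while $\Box I$ and $\Diamond E$ gain a discharged existence assumption $[\ext{\alpha\conc x}]$ on the proper position. The heart of every derivation is therefore to make sure that whenever an $\ext{\cdot}$ is needed as a premise of a $\Box E$ or $\Diamond I$, it is supplied by an $\ext{\cdot}$-assumption that is later discharged by the matching $\Box I$ or $\Diamond E$, so that the final derivation has no open assumptions beyond the hypotheses of the implication being proved.

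For the two inclusions of item~1, I would proceed exactly as in Proposition~\ref{Prop:tot:weakcompl}(1), keeping all proper positions equal to a single token $x$. For $\Diamond A\to\neg\Box\neg A$: open the derivation with $\Diamond E$ on $[\pf{\Diamond A}{<>}]$, which discharges $[\pf{A}{x}]$ and $[\ext{x}]$; inside the minor premise, a $\Box E$ on $[\pf{\Box\neg A}{<>}]$ fed with the available $\ext{x}$ yields $\pf{\neg A}{x}$, and $\to E$ with $[\pf{A}{x}]$ gives $\pf{\bot}{x}$, which $(\bot_i)$ turns into $\pf{\bot}{<>}$ (legitimately, since $x\neq<>$) before the two $\to I$ steps. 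For the converse $\neg\Box\neg A\to\Diamond A$, the classical rule $(\bot_c)$ is used: from $[\pf{\neg\Diamond A}{<>}]$ one builds $\pf{\Box\neg A}{<>}$ by a $\Box I$ discharging $[\ext{x}]$, the $\ext{x}$ being exactly the premise consumed by the intervening $\Diamond I$ that produces $\pf{\Diamond A}{<>}$ from $[\pf{A}{x}]$; contradicting $[\pf{\neg\Box\neg A}{<>}]$ gives $\pf{\bot}{<>}$, and $(\bot_c)$ then yields $\pf{\Diamond A}{<>}$. Item~2 is the same as in the total case: two $\Box E$ steps (each using the single $\ext{x}$) produce $\pf{A}{x}$ and $\pf{A\to B}{x}$, $\to E$ gives $\pf{B}{x}$, and the $\Box I$ concluding $\pf{\Box B}{<>}$ discharges that one $\ext{x}$. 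In all of these derivations every $\beta$ occurring in a $\Box E$ or $\Diamond I$ is the singleton $x$, so the constraints of both $\nKK$ (singleton) and $\nK4$ (non-empty) are met, which is why items~1 and~2 hold in both systems.

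Item~3 is available only in $\nK4$, and this is where the length of $\beta$ matters. I would mimic the total derivation of the $4$-axiom: a single $\Box E$ on $[\pf{\Box A}{<>}]$ with a two-token position $\beta=xy$ produces $\pf{A}{xy}$, the inner $\Box I$ concludes $\pf{\Box A}{x}$ while discharging $[\ext{xy}]$, and the outer $\Box I$ concludes $\pf{\Box\Box A}{<>}$ discharging $[\ext{x}]$ (vacuously). The crucial point is that the $\ext{xy}$ demanded by the $\Box E$ is precisely the assumption bound by the inner $\Box I$, so the derivation closes; and since here $\beta=xy$ is non-empty but not a singleton, this derivation is legal in $\nK4$ but blocked in $\nKK$, matching the fact that the $4$-axiom is not listed for $\nKK$.

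I expect the main obstacle to be purely the existence-predicate bookkeeping rather than any logical difficulty: one must check, for each derivation, (i) that every $\ext{\cdot}$-premise required by a $\Box E$ or $\Diamond I$ is matched by an $\ext{\cdot}$-assumption discharged by the governing $\Box I$ or $\Diamond E$, so that no existence formula is left open; (ii) that the proper-token side conditions ($\alpha\conc x\notin\iniz{\Gamma}$ for $\Box I$, and the two conditions for $\Diamond E$) remain satisfied once the positions are fixed as above; and (iii) that every $\beta$ appearing in a modal elimination or introduction respects the system-specific length constraint. Since all proper positions can be taken to be fresh single tokens (with the sole exception of the length-two position needed for item~3), these verifications are routine, and the three claims follow.
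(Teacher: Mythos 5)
Your proposal is correct and follows essentially the same route as the paper's own proof: the derivations of Proposition~\ref{Prop:tot:weakcompl} are reused verbatim with the existence-predicate bookkeeping added, each $\ext{\cdot}$ premise of a $\Box E$ or $\Diamond I$ being an assumption later discharged by the matching $\Box I$ or $\Diamond E$, with singleton proper positions for items 1--2 (hence valid in both $\nKK$ and $\nK4$) and the two-token position $xy$ with a vacuous discharge of $\ext{x}$ at the outer $\Box I$ for the $\nK4$-only item 3. Your explicit $\bot_i$ steps (moving $\bot$ between positions before the $\to I$ applications) only spell out transitions that the paper's displayed derivations leave implicit.
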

\begin{proof}{In the following derivations, observe the interplay between modal introduction and elimination, which allows to discharge all existence predicates.}
	\begin{enumerate}
			\item \mbox{\\}
		\def\uno{
			\prova{[\pf{\Box \neg A}{< >}] [\ext{x}]}{\pf{\neg A}{x}}{\Box E}
		}
		\def\due{
			\prova{
				[\pf{A}{x}] \uno
			}
			{\pf{\bot}{x}}
			{\to E}
		}
		
		\def\tre{
			\prova{
				\due
			}
			{\pf{\neg\Box\neg A}{<>}}
			{\to I}
		}
		
		\def\quattro{
			\prova{
				[\pf{\Diamond A}{<>}]	\tre
			}
			{\pf{\neg\Box\neg A}{<>}}
			{\Diamond E}
		}
		\def\cinque{
			\prova{
				\quattro
			}
			{\pf{\Diamond A \to \neg\Box\neg A}{<>}}
			{\to I}
		}
		\mbox{} \qquad $\cinque$	
		\def\uno{
			\prova{[\pf{A}{x}] [\ext{x}]}{\pf{\Diamond A}{<>}}{\Diamond I}
		}	
		\def\due{
			\prova{[\pf{\neg\Diamond A}{x}] \uno }{\pf{\bot}{<>}}{\to E}
		}
		\def\tre{
			\prova{\due }{\pf{\neg A}{x}}{\to I}
		}
		\def\quattro{
			\prova{\tre }{\pf{\Box \neg A}{}}{\Box I}
		}
		\def\cinque{
			\prova{\quattro [\pf{\neg\Box\neg A}{<>} ]}{\pf{\bot}{<>}}{\to E}
		}
		\def\sei{
			\prova{\cinque }{\pf{\Diamond A}{<>}}{\bot_c}
		}
		\def\sette{
			\prova{\sei }{\pf{\neg\Box\neg A\to \Diamond A}{<>}}{\to I}
		}	
		$\sette$

		\item \def\auno{\prova{[\pf{\Box A}{< >}] [\ext{x}]}{\pf{ A}{x}}{\Box E}}
		\def\adue{\prova{[\pf{\Box (A\to B)]}{< >} [\ext{x}] }{\pf{A\to B}{x}}{\Box E}}
		\def\atre{\prova{\auno\ \adue}{\pf{B}{x}}{\to E }}
		\def\aquattro{\prova{\atre}{\pf{\Box B}{<>}}{\Box I}}
		\def\acinque{\prova{\aquattro}{\pf{\Box A \to \Box B}{<>}}{\to I}}
		\def\asei{\prova{\acinque}{\pf{\Box (A\to B)\to (\Box A \to \Box B)}{<>}}{\to I}}
		$$\asei$$
		\item 
		\def\uno{\prova{[\pf{\Box A}{< >}][\ext{xy}]}{\pf{ A}{xy}}{\Box E}}
		\def\due{\prova{\uno}{\pf{ \Box A}{x}}{\Box I}}
		\def\tre{\prova{\due}{\pf{ \Box \Box A}{<>}}{\Box I}}
		\def\quattro{\prova{\tre}{\pf{ \Box A \to \Box \Box A}{<>}}{\to I}}
		$$\quattro$$
	\end{enumerate}
	
\end{proof}

Closure under \textbf{NEC} and under \textbf{MP} is shown in the same manner as for the total systems. Therefore:
	
	\begin{theorem}[weak completeness]
		Let $\sys$ be one of the modal systems {\KK} and {\K4}. If  
		$\vdash_\sys A$ then $\vdash_{\ns{\sys}} \pf{A}{<>}$.
	\end{theorem}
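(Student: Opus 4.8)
The plan is to argue by induction on the length of a Hilbert-style derivation of $A$ in $\sys$, showing at each step that the corresponding formula, placed at position $<>$, is derivable in $\ns{\sys}$. Since $\sys$ is generated from the axiom schemas, modus ponens \textbf{(MP)}, and necessitation \textbf{(NEC)}, it suffices to handle exactly these cases, mirroring the argument already carried out for the total logics.

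For the base cases I would treat the axioms. The propositional axiom schemas \textbf{1}--\textbf{3} are classical propositional tautologies, and since the propositional rules of $\ns{\sys}$ (in particular $\to I$, $\to E$, and the classical rule $\bot_c$) act entirely within a single fixed position, the ordinary natural-deduction derivations of these tautologies transcribe verbatim with every formula decorated by the position $<>$; this yields $\vdash_{\ns{\sys}} \pf{A}{<>}$ for each such axiom instance $A$. The modal axiom \textbf{K} is precisely part~\ref{prop:partial:weakcomp:K} of Proposition~\ref{prop:partial:weakcomp}, and when $\sys = \K4$ the axiom \textbf{4} is part 3 of the same proposition, so these cases are already discharged by the derivations given there (note how those derivations are designed so that the interplay of $\Box I$ and $\Box E$ discharges all the auxiliary existence predicates).

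For the inductive cases, closure under \textbf{MP} is immediate: if the induction hypothesis gives $\vdash_{\ns{\sys}} \pf{A}{<>}$ and $\vdash_{\ns{\sys}} \pf{A\to B}{<>}$, a single application of $(\to E)$ yields $\vdash_{\ns{\sys}} \pf{B}{<>}$. For \textbf{NEC}, suppose $\vdash_{\ns{\sys}} \pf{A}{<>}$ from no open assumptions. Applying the lift property---the analogue for the partial systems of Proposition~\ref{lift:basics}, established in the same way---with prefix $<x>$ yields $\vdash_{\ns{\sys}} \pf{A}{<x>}$, still from no open assumptions. A single application of $\Box I$, whose side condition $\alpha\conc x \notin \iniz{\Gamma}$ (with $\alpha=<>$) is vacuously satisfied since $\Gamma$ is empty, and whose discharge of $[\ext{<x>}]$ is harmless because no such assumption actually occurs, then produces $\vdash_{\ns{\sys}} \pf{\Box A}{<>}$.

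The delicate point, and the one that really distinguishes the partial case from the total case, is the lift property itself: lifting every position by a prefix must be shown to preserve derivability in the presence of the existence-predicate premises $\ext{\alpha}$ and under the more restrictive constraints on $\Box E$ and $\Diamond I$ ($\beta$ non-empty for $\nK4$, $\beta$ a singleton for $\nKK$). One must verify that after prefixing by $\beta$ every instance of $\Box E$ and $\Diamond I$ still meets its constraint, and that each occurrence of a premise $\ext{\alpha}$ is carried to a legitimate premise $\ext{\beta\conc\alpha}$; with a suitable renaming of proper positions this goes through exactly as in the total systems. Once the lift lemma is in hand, the theorem follows by assembling the base and inductive cases above.
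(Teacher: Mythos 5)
Your proof is correct and takes essentially the same route as the paper, which establishes the axiom instances in Proposition~\ref{prop:partial:weakcomp} and then notes that closure under \textbf{MP} (via $\to E$) and under \textbf{NEC} (via the lift argument of Proposition~\ref{lift:basics}) ``is shown in the same manner as for the total systems.'' Your explicit checks---that prefixing preserves the $\nKK$/$\nK4$ constraints on $\Box E$ and $\Diamond I$ (since the appended suffix $\beta$ in those rules is untouched by prefixing), that premises $\ext{\alpha}$ lift to $\ext{\beta\conc\alpha}$, and that the discharge of $[\ext{<x>}]$ in the \textbf{NEC} step is vacuous---merely spell out details the paper leaves implicit.
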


\section{Semantics}\label{subs:seman}

We introduce in this section a tree-based  Kripke semantics for our modal systems, to prove their completeness with respect to the standard axiomatic presentations. 

\subsection{Trees and Tree-semantics}
Let $\NN^*$ be the set of finite sequences of natural numbers with the partial order $\less{\NN}$ as defined in Section~\ref{sect:sequences}.

\begin{definition}
A \textit{tree}  is  a subset  $\tree$ of $ \NN^*$  s.t. 
$<\ >\in \tree$; and
if $t\in \tree$ and $s \less{\tree} t$, then $s\in \tree, $ 
where $\less{\tree}$ is the restriction of $\less{\NN}$ to $\tree$.
\end{definition}
\noindent The elements of $\tree$ are called
\textit{nodes}; a \textit{leaf} is a node with no successors.
Given a tree $\tree$ and $s\in \tree$, we define $\tree_s$ (the \emph{subtree of $\tree$ rooted at $s$}) to be the tree  defined
as:
$ s'\in \tree_s\ \Leftrightarrow\ s\conc s' \in \tree$.
Observe that 
$\tree_{<\ >} = \tree$.
In this section, $s$ and $t$ will range over the generic elements (nodes) of $\Theta$.

If $At$ is the set of proposition symbols of our modal language, a \emph{Kripke model} is a triple $\mdl{}=<\tree,\nu, \R>$, where $\tree$ is a tree, $\nu:\tree\to 2^{At}$ is an assignment of proposition symbols to nodes, and  $\R\subseteq \tree\times\tree$.
Given a modal system $\sys\in{\{{\KK},{\D},{\T},{\K4},{\S4}\}}$, a \emph{$\sys$-model} is a Kripke model  $\mdl{_\sys}=<\tree,\nu, \R>$ s.t.
\bigskip

\begin{tabular}{|c|c|c|}
 \hline 
 \textbf{modal system} & \textbf{conditions on $\tree$} & \textbf{conditions on $\R$} \\ 
 \hline 
 {\KK}&  no condition  & $\R=\suc{\tree}$ \\ 
 \hline 
 {\D} & $\tree$ does not have leaves &  $\R=\suc{\tree}$  \\ 
 \hline 
 {\T} & no condition & $\R=\suc{\tree}^0$\\ 
 \hline 
 {\K4} & no condition  & $\R=\lest{\tree}$ \\ 
 \hline 
 \rosso{\D4} & \rosso{$\tree$ does not have leaves}  & \rosso{$\R=\lest{\tree}$} \\ 
 \hline 
 {\S4} & no condition & $\R=\less{\tree}$  \\ 
 \hline 
 \end{tabular}  
 
\bigskip\noindent
The satisfiability  relation of formulas on a Kripke model is standard; e.g., for a model $\mdl{}$ and node $s$,
$\mdl{},s \models \Box A \Leftrightarrow \forall t. s R t \Rightarrow \mdl{}, t\models A$.
As usual, we write $ \mdl{} \models A$, when $\mdl{},s \models  A$ for all nodes $s$ of $\mdl{}$.

\begin{theorem}[standard completeness]
For each modal system  $\sys$ in {\KK}, {\D}, {\T}, {\K4}, \rosso{\DQ}, {\S4}, and for every formula $A$,
$\vdash_{\sys} A$ $\Leftrightarrow$  for all $\sys$-model $\mdl{}$, we have $\mdl{} \models A$.
\end{theorem}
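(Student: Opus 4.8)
The plan is to establish the two directions of the biconditional separately. For soundness ($\Rightarrow$) I would proceed by induction on the length of the axiomatic derivation of $A$, showing that every theorem of $\sys$ is true at every node of every $\sys$-model. The propositional axioms $\mathbf{1}$--$\mathbf{3}$ hold at each node because their validity does not involve $\R$; axiom $\mathbf{K}$ holds on every Kripke model by the satisfaction clause for $\Box$; and the characteristic axioms hold exactly because of the prescribed conditions on $\R$: $\Box A\to A$ uses that $\suc{\tree}^0$ and $\less{\tree}$ are reflexive (logics $\T$, $\S4$), $\Box A\to\Box\Box A$ uses that $\lest{\tree}$ and $\less{\tree}$ are transitive (logics $\K4$, $\DQ$, $\S4$), and $\Box A\to\Diamond A$ uses that the condition ``$\tree$ has no leaves'' makes $\R$ serial (logics $\D$, $\DQ$). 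Finally $\mathbf{MP}$ preserves truth node-by-node, and $\mathbf{NEC}$ preserves validity since if $A$ is true at every node then so is $\Box A$.

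For completeness ($\Leftarrow$) I would prove the contrapositive: if $\not\vdash_\sys A$, I construct a $\sys$-model falsifying $A$. First I invoke the classical Kripke completeness of these normal modal logics (via canonical models) to obtain a pointed model $<M,w>$, with accessibility $R_M$ on a frame of the appropriate class (arbitrary for $\KK$, serial for $\D$, reflexive for $\T$, transitive for $\K4$, serial and transitive for $\DQ$, a preorder for $\S4$), such that $M,w\not\models A$. I then \emph{unravel} $M$ from $w$ into a tree. After fixing an enumeration of the $R_M$-successors of each world, the tree $\tree\subseteq\NN^*$ consists of the finite sequences of such choices issuing from $w$; this set is downward closed, so it is a tree in the sense of the definition, with root $<\ >$ denoting $w$ and $\suc{\tree}$ its immediate-successor relation. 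I let $\nu$ at a node be the valuation of $M$ at the endpoint of the corresponding $R_M$-path, and I set the tree accessibility $\R$ to be exactly the closure required by the table ($\suc{\tree}$, $\suc{\tree}^0$, $\lest{\tree}$, or $\less{\tree}$). Seriality of $M$ guarantees that every node has a child, i.e.\ that $\tree$ has no leaves, as required for $\D$ and $\DQ$.

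The crux is then a single lemma: the endpoint projection $\pi\colon\tree\to M$ (sending a sequence of choices to the last world it reaches, and $<\ >$ to $w$) is a p-morphism from the tree model $N=<\tree,\nu,\R>$ onto $M$. The \emph{back} condition holds because every $R_M$-successor of $\pi(\sigma)$ is the $m$-th successor for some index $m$, realized by the child $\sigma\conc m$; the \emph{forth} condition holds because a single tree edge projects to an $R_M$-step, and passing to the transitive or reflexive closures $\lest{\tree},\less{\tree},\suc{\tree}^0$ is absorbed by the transitivity or reflexivity of $R_M$ in the relevant classes. A routine induction on the structure of formulas then yields $N,\sigma\models B\Leftrightarrow M,\pi(\sigma)\models B$ for all $B$ and all $\sigma$; instantiating at the root gives $N,<\ >\not\models A$, so $N\not\models A$ exhibits the desired $\sys$-model.

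I expect the only real obstacle to be the bookkeeping in the unraveling: for each $\sys$ one must check that the chosen closure of $\suc{\tree}$ \emph{simultaneously} matches the condition on $\R$ and makes $\pi$ satisfy both p-morphism conditions. This is most delicate for $\K4$ and $\DQ$, whose tree accessibility is the \emph{strict} transitive closure $\lest{\tree}$: if the canonical frame contains a reflexive point $w$ with $w\,R_M\,w$, the back condition forces a loop edge $\sigma\conc m$ whose endpoint is again $w$, so that some \emph{proper} descendant of $\sigma$ still projects to $w$. For $\T$ and $\S4$ this subtlety vanishes, since there the reflexive closures $\suc{\tree}^0$ and $\less{\tree}$ already allow $\sigma$ itself to witness its reflexive $R_M$-successor.
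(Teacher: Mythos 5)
The paper gives no proof of this theorem at all: it is stated as a standard fact about normal modal logics, imported from the literature, and then used only as a bridge between the axiomatic systems and the tree-based semantics of Section 6. So there is no in-paper argument to compare yours against line by line; what can be said is that your proof is the canonical way to discharge the statement, and it is essentially correct. The soundness induction is routine and your case analysis matches the table of conditions on $\R$ exactly (reflexivity of $\suc{\tree}^0$ and $\less{\tree}$ for \textbf{T}, transitivity of $\lest{\tree}$ and $\less{\tree}$ for \textbf{4}, leaflessness for \textbf{D}, and preservation of global truth under \textbf{NEC}). On the completeness side, unraveling plus a p-morphism (bounded morphism) is the standard device for converting completeness over a frame class into completeness over tree models, and you correctly isolate the one genuinely delicate point: for {\K4} and {\DQ} the relation $\lest{\tree}$ is irreflexive, so a reflexive point $w$ with $w\,R_M\,w$ must be matched in the back condition by a \emph{proper} descendant obtained by unrolling the loop, whereas for {\T} and {\S4} the node itself serves as witness.

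One detail needs patching. You obtain the falsifying pointed model from the canonical model, but canonical models are built from maximal consistent sets and are uncountable, and a world may have uncountably many $R_M$-successors; in that case ``fixing an enumeration of the $R_M$-successors of each world'' is impossible, and the unraveled structure does not embed into $\mathbb{N}^*$ as the paper's definition of tree requires. The fix is standard: all six logics {\KK}, \D{}, \T{}, {\K4}, {\DQ}, {\S4} have the finite model property, so you may start from a finite countermodel (or any countable one, e.g.\ obtained by filtration), after which the enumeration exists and the rest of your argument goes through unchanged. Note that unraveling a finite serial model produces an infinite but leafless tree, which is exactly what the \D{} and {\DQ} rows of the table demand, so nothing is lost in that step.
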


In the following, semantics definitions and the soundness theorem are given separately for total logics (Section~\ref{semantics-sequents}) and for partial logics (Section~\ref{sec:soundpartiallog}).

\subsection{Semantics: Total logics}\label{semantics-sequents}
 \newcommand{\conv}[1]{{#1}\!\downarrow}

\begin{definition}[Structures]
Let $\sys\in{\{{\D}, {\T}, \rosso{\DQ}, {\S4}\}}$ be a modal system. 
A $2_\sys$ \emph{structure} is a pair $\tmdl{\tree}=<\mdl{\tree},\rho>$ where:
\begin{itemize}
	\item $\mdl{\tree}$ is an $\sys$-model $<\tree,\nu, R>$
	\item $\rho:\pos \to \tree$ is a map from positions to nodes (the \emph{evaluation}).
	\item[] Moreover for  $\alpha\in \pos$, {and for a fixed $\rho$,} with $\srho{\alpha}$ we denote an evaluation $\srho{\alpha}:\pos \to \Theta_{\rho(\alpha)}$.
\end{itemize}
\end{definition}

Depending on the specific modal system, $\rho$ has to satisfy the following, additional constraints:

\bigskip
\begin{tabular}{|c|c|}
 \hline 
 \textbf{modal system} & \textbf{conditions on $\rho$} \\ 
 \hline 
 \D &  $ \rho \mbox{\ is total\ }\ \&\ (\alpha\suc{\pos} \beta \Rightarrow  \rho(\alpha)\suc{\tree\ } \rho(\beta))$  \\ 
 \hline 
 \T & $\rho \mbox{\ is total\ }\ \&\ (\alpha\suc{\pos} \beta \Rightarrow  \rho(\alpha)\suc{\tree\ }^0 \rho(\beta))$\\ 
 \hline 
 \rosso{\DQ} & \rosso{ $ \rho \mbox{\ is total\ }\ \&\ (\alpha\suc{\pos} \beta \Rightarrow  \rho(\alpha)\lest{\tree\ } \rho(\beta))$}  \\ 
 \hline 
 \S4 &  $ \rho \mbox{\ is total\ }\ \&\ ( \alpha\suc{\pos} \beta \Rightarrow  \rho(\alpha)\less{\tree}\rho(\beta))$  \\ 
 \hline 
 \end{tabular}  
\bigskip

The satisfiability relation $\Vdash$ between a 2-structure and a position formula is defined in the following way:

\[
\mdl{\tree},\rho\Vdash  \pf{A}{\alpha} \Leftrightarrow  	 \mdl{\tree},\rho(\alpha) \models A,\]
where $\models$ is the standard satisfiability relation w.r.t. modal Kripke semantics.

Finally, given a modal system $\sys$, we define the notion of logical consequence for positions formulas.
Let $\sys$ be one of the systems $\T$, $\D$, $\rosso{\DQ}$, or $\S4$: 
$$
\Gamma\Vdash_{\sys}\pf{A}{\alpha}
\Leftrightarrow
\forall \langle \mdl{\tree},\rho\rangle .
(
\forall \pf{B}{\alpha}\in\Gamma \; \mdl{\tree},\rho\Vdash \pf{B}{\alpha})
\Rightarrow 
\mdl{\tree},\rho\Vdash \pf{A}{\alpha}.
$$

We now introduce some notation for the semantical substitution of values into the evaluation function $\rho$, in correspondence of specific subtrees.
For $t\in\Theta$ and $\srho{\alpha}$, define 
\[
\rho\{\alpha\conc x/\srho{\alpha}\}(\beta)=
\begin{cases}
\rho(\beta) \mbox{\qquad \qquad\ \ if\ } \beta\neq \alpha\conc x\conc \gamma \\
\rho(\alpha)\conc\srho{\alpha}(x\conc \gamma ) \mbox{\quad otherwise\ }\\
\end{cases}
\]

{
We define the following set of $\Theta$ elements:
\begin{itemize}
	\item  $\Theta_{\D}=\{t: |t|= 1\} $;
	\item  $ \Theta_{\T}=\{t: |t|\leq 1\}$;
	\item  $ \Theta_{\S4}=\{t: |t|\geq 0\}$.
\end{itemize}
}

As for other notations, we will write $\Theta_\sys$ for any of these sets. 

Let us now fix a specific structure $<\mdl{\tree},\rho>$;  we have the following. 

\begin{lemma}\label{lemma:sub1} Let $\sys\in\{ {\D}, {\T},\rosso{\DQ}, {\S4} \}$. 
	\begin{enumerate}
		\item 
			$\mdl{\tree}, \rho\Vdash \Box \pf{A}{\alpha} 
				\Leftrightarrow 
					 \forall \srho{\alpha}. 
					\mdl{\tree},\rho\{\alpha\conc x/\srho{\alpha}\}\models  \pf{A}{\alpha\conc x};$
\item $\mdl{\tree}, \rho\Vdash \Diamond \pf{A}{\alpha} 
	\Leftrightarrow 
		 \exists \srho{\alpha}. \mdl{\tree}, 
		 \rho\{\alpha\conc x/\srho{\alpha}\} \models \pf{A}{\alpha\conc x}.$
	\end{enumerate}
\end{lemma}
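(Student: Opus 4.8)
The plan is to prove both equivalences by unfolding the definitions on each side and reducing the statement to a single combinatorial correspondence between the $\R$-successors of $\rho(\alpha)$ and the admissible values of the substituted evaluation at the position $\alpha\conc x$. First I would rewrite the left-hand side of item~1: by the definition of $\Vdash$, $\mdl{\tree},\rho\Vdash\pf{\Box A}{\alpha}\iff\mdl{\tree},\rho(\alpha)\models\Box A$, and by the Kripke clause for $\Box$ this unfolds to $\forall t\in\tree.\,\bigl(\rho(\alpha)\,\R\,t\Rightarrow\mdl{\tree},t\models A\bigr)$. For the right-hand side I would evaluate $\rho\{\alpha\conc x/\srho{\alpha}\}$ at $\alpha\conc x$: reading its definition with $\gamma=<\ >$ gives $\rho\{\alpha\conc x/\srho{\alpha}\}(\alpha\conc x)=\rho(\alpha)\conc\srho{\alpha}(x)$, so that (identifying the $\models$ of the statement with the structure/p-formula relation $\Vdash$) $\mdl{\tree},\rho\{\alpha\conc x/\srho{\alpha}\}\Vdash\pf{A}{\alpha\conc x}\iff\mdl{\tree},\rho(\alpha)\conc\srho{\alpha}(x)\models A$. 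After this, item~1 becomes the claim that the $\forall$ over $\{t:\rho(\alpha)\,\R\,t\}$ and the $\forall$ over $\{\rho(\alpha)\conc\srho{\alpha}(x):\srho{\alpha}\text{ an }\sys\text{-evaluation into }\tree_{\rho(\alpha)}\}$ agree, and item~2 is the analogous statement with $\exists$.

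The core step is to establish the set equality
\[
\{\,t\in\tree : \rho(\alpha)\,\R\,t\,\}=\{\,\rho(\alpha)\conc s' : s'\in\Theta_{\sys}\cap\tree_{\rho(\alpha)}\,\}.
\]
I would prove this by a case analysis on $\sys$, reading $\R$ off the $\sys$-model table and $\Theta_{\sys}$ off its definition. For the left family, a node satisfies $\rho(\alpha)\,\R\,t$ exactly when $t=\rho(\alpha)\conc s'$ with $s'\in\tree_{\rho(\alpha)}$ of the length prescribed by $\sys$: length $1$ for $\D$ (where $\R=\suc{\tree}$), length $\le 1$ for $\T$ (where $\R=\suc{\tree}^0$), length $\ge 1$ for $\DQ$ (where $\R=\lest{\tree}$, and $\Theta_{\DQ}=\{t:|t|\ge 1\}$ continuing the pattern of the displayed sets), and arbitrary length for $\S4$ (where $\R=\less{\tree}$). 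In each case this is exactly membership of $s'$ in $\Theta_{\sys}\cap\tree_{\rho(\alpha)}$. For the right family I would use that $\srho{\alpha}$ is an $\sys$-evaluation into the subtree $\tree_{\rho(\alpha)}$ sending $<\ >$ to the root of $\tree_{\rho(\alpha)}$: the evaluation constraint applied to $<\ >\suc{\pos}<x>$ forces $\srho{\alpha}(x)$ into $\Theta_{\sys}\cap\tree_{\rho(\alpha)}$, and conversely every such $s'$ is realised as $\srho{\alpha}(x)$ for a suitable admissible $\srho{\alpha}$. Substituting the equal families into the two quantifiers then yields items~1 and~2 at once.

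The main obstacle I anticipate is the bookkeeping of this set equality rather than a conceptual difficulty. One must check both inclusions --- that $\srho{\alpha}(x)$ can never leave $\Theta_{\sys}\cap\tree_{\rho(\alpha)}$, and that every element of this set is attained --- which in turn requires being explicit about the convention that an evaluation maps the empty position to the root of its target subtree, so that the single-token constraint bites precisely at $\rho(\alpha)$; one should also confirm that subtrees of $\sys$-models are again $\sys$-models (in particular that the no-leaf condition for $\D$ and $\DQ$ is inherited) so that ``$\sys$-evaluation into $\tree_{\rho(\alpha)}$'' is meaningful. A secondary, routine point is that $\rho\{\alpha\conc x/\srho{\alpha}\}$ is itself an admissible evaluation, agreeing with $\rho$ off the subtree below $\alpha\conc x$ and with the translate of $\srho{\alpha}$ on it, so that the right-hand side of the statement is well formed. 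The $\Diamond$ case of item~2 needs no new ingredient: it follows from the same set equality by replacing the universal quantifier with the existential one, using the Kripke clause $\mdl{\tree},s\models\Diamond A\iff\exists t.\,s\,\R\,t\wedge\mdl{\tree},t\models A$.
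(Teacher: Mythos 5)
Your proof is correct and takes essentially the same route as the paper's: unfold $\Vdash$ to Kripke satisfaction at $\rho(\alpha)$, apply the Kripke clause for $\Box$ (resp.\ $\Diamond$), and match the $\R$-successors of $\rho(\alpha)$ with the values $\rho(\alpha)\conc\srho{\alpha}(x)$ --- the paper compresses this matching into a single equivalence step in its chain, while you expand it into an explicit set equality $\{t:\rho(\alpha)\,\R\,t\}=\{\rho(\alpha)\conc s' : s'\in\Theta_{\sys}\cap\tree_{\rho(\alpha)}\}$ verified per system. Your additional bookkeeping (the convention that $\srho{\alpha}$ sends $<\ >$ to the root of $\tree_{\rho(\alpha)}$, supplying the missing $\Theta_{\DQ}=\{t:|t|\geq 1\}$, reading the lemma's $\models$ as $\Vdash$, and checking that subtrees of $\sys$-models are again $\sys$-models so that every admissible value is realised) simply makes explicit what the paper leaves implicit, and is a faithful, slightly more rigorous rendering of the same argument.
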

\begin{proof}
\mbox{}\\
$
\mdl{\tree},\rho\Vdash \Box \pf{A}{\alpha} \\
\mbox{}\qquad \Leftrightarrow\\
\mdl{\tree},\rho(\alpha)\models \Box A \\
\mbox{}\qquad 
\Leftrightarrow \\
\forall t, \mdl{\tree},\rho(\alpha)\conc t\in\Theta \Rightarrow \rho(\alpha)\conc t \models A 
\\
\mbox{}\qquad 
\Leftrightarrow \\
\forall  \srho{\alpha}. 
\mdl{\tree},\rho\{\alpha\conc x/\srho{\alpha}\}(\alpha\conc x) \models A 
\\
\mbox{}\qquad 
\Leftrightarrow \\
\mdl{\tree},\rho\{\alpha\conc x/\srho{\alpha}\}\Vdash  \pf{A}{\alpha\conc x}.
$

\end{proof}

Let $v \R u$ in a tree $\Theta$, we define the subtraction operation $\div$ between nodes as:
$v\div u = t \Leftrightarrow u\conc t = v$

\begin{lemma}\label{lemma:diff1}
Let $\srho{\alpha}$ be an evaluation s.t.
$\srho{\alpha}(x) = \rho(\alpha\conc\beta) \div \rho(\alpha)$, then
\[
\mdl{\tree},\rho\models \pf{A}{\alpha\conc\beta} \Leftrightarrow 
\mdl{\tree},\rho\{\alpha\conc x/\srho{\alpha}\}\models  \pf{A}{\alpha\conc x}.
\]
\end{lemma}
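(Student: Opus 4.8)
The plan is to reduce both sides of the claimed equivalence to satisfaction of $A$ at a single node of the \emph{fixed} model $\mdl{\tree}$, and then to observe that the two nodes involved are literally the same. First I would unfold the definition of the satisfiability relation (here written $\models$, as in the statement of Lemma~\ref{lemma:sub1}, but meaning $\Vdash$): by definition $\mdl{\tree},\rho\models\pf{A}{\alpha\conc\beta}$ holds iff $\mdl{\tree},\rho(\alpha\conc\beta)\models A$, and likewise $\mdl{\tree},\rho\{\alpha\conc x/\srho{\alpha}\}\models\pf{A}{\alpha\conc x}$ holds iff $\mdl{\tree},\rho\{\alpha\conc x/\srho{\alpha}\}(\alpha\conc x)\models A$. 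Since the substitution modifies only the evaluation and leaves the model $\mdl{\tree}$ (in particular its accessibility relation $\R$) untouched, and since $\mdl{\tree},s\models A$ depends on nothing but the node $s$ and the fixed model, it suffices to establish the node identity
\[
\rho\{\alpha\conc x/\srho{\alpha}\}(\alpha\conc x)=\rho(\alpha\conc\beta).
\]

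The key step is precisely this node computation. Writing $\alpha\conc x$ as $\alpha\conc x\conc<\ >$, the position $\alpha\conc x$ falls under the second (``otherwise'') case of the definition of $\rho\{\alpha\conc x/\srho{\alpha}\}$, with $\gamma=<\ >$, so that
\[
\rho\{\alpha\conc x/\srho{\alpha}\}(\alpha\conc x)=\rho(\alpha)\conc\srho{\alpha}(x\conc<\ >)=\rho(\alpha)\conc\srho{\alpha}(x).
\]
Substituting the hypothesis $\srho{\alpha}(x)=\rho(\alpha\conc\beta)\div\rho(\alpha)$ and recalling that $\div$ is defined by $v\div u=t\Leftrightarrow u\conc t=v$, I obtain $\rho(\alpha)\conc\bigl(\rho(\alpha\conc\beta)\div\rho(\alpha)\bigr)=\rho(\alpha\conc\beta)$, which is exactly the desired identity. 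Feeding this back into the two unfolded satisfaction clauses turns them into the same statement $\mdl{\tree},\rho(\alpha\conc\beta)\models A$, so the equivalence holds in both directions at once.

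The only point requiring care, and the closest thing to an obstacle, is the well-definedness of the subtraction $\rho(\alpha\conc\beta)\div\rho(\alpha)$, i.e.\ that $\rho(\alpha)$ is actually a prefix of $\rho(\alpha\conc\beta)$. This is already presupposed by the statement, since it is built into the defining equation for $\srho{\alpha}$; nonetheless I would justify it from the constraints on $\rho$. Because $\alpha\less{\pos}\alpha\conc\beta$, a straightforward induction on the length of $\beta$, using the single-step monotonicity clause for the system at hand (for instance $\alpha\suc{\pos}\gamma\Rightarrow\rho(\alpha)\less{\tree}\rho(\gamma)$ for $\S4$, and the analogous reflexive/proper/immediate-successor conditions for $\D$, $\T$, and $\DQ$), yields $\rho(\alpha)\less{\tree}\rho(\alpha\conc\beta)$, i.e.\ $\rho(\alpha)$ is a prefix of $\rho(\alpha\conc\beta)$. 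With this in hand the collapse $\rho(\alpha)\conc\bigl(\rho(\alpha\conc\beta)\div\rho(\alpha)\bigr)=\rho(\alpha\conc\beta)$ is valid and the lemma follows; beyond this bookkeeping the argument is entirely routine.
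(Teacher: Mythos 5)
Your proof is correct and takes essentially the same route as the paper's: you establish the node identity $\rho\{\alpha\conc x/\srho{\alpha}\}(\alpha\conc x)=\rho(\alpha)\conc\srho{\alpha}(x)=\rho(\alpha)\conc(\rho(\alpha\conc\beta)\div\rho(\alpha))=\rho(\alpha\conc\beta)$ and then unfold the satisfaction relation on both sides, which is exactly the paper's two-step argument. Your additional verification that $\rho(\alpha)$ is a prefix of $\rho(\alpha\conc\beta)$ (so that the subtraction $\div$ is well defined), via induction on the length of $\beta$ using the monotonicity constraints on $\rho$, is a detail the paper leaves implicit and is welcome extra care rather than a different approach.
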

\begin{proof} Observe that 
$
\rho\{\alpha\conc x/\srho{\alpha}\}
=
\rho(\alpha)\conc \srho{\alpha}(x)
=
\rho(\alpha)\conc (\rho(\alpha\conc\beta) \div \rho(\alpha))=\rho(\alpha\conc\beta)
$; 
therefore
\\
$
\mdl{\tree},\rho\Vdash  \pf{A}{\alpha\conc\beta} \\
\mbox{}\qquad \Leftrightarrow \\
\mdl{\tree},\rho(\alpha\conc\beta) \models A\\
\mbox{}\qquad \Leftrightarrow \\
\mdl{\tree},\rho\{\alpha\conc x/\srho{\Theta_\alpha}\}(\alpha\conc x)\models A\\
\mbox{}\qquad \Leftrightarrow \\
\mdl{\tree},\rho\{\alpha\conc x/\srho{\Theta_\alpha}\}\Vdash  \pf{A}{\alpha\conc x}.
$
\end{proof}

We are finally in the position to prove the \emph{soundness} theorem, by an easy induction on proofs which---we remark once again---strictly mimics the standard proof of soundness for  first order natural deduction. 
In the rest of the paper with $hp(\Pi)$ we denote the set of undischarged hypoteses of the deduction $\Pi$.
We write 
  $$\ded{\Pi}{R}{\pf{A}{\alpha}}$$ 
for $\Pi$ is a deduction
 	of formula $\pf{A}{\alpha}$ whose last rule is $R$.

\begin{theorem}[soundness 1]\label{thm:soundness1}
Let $\sys\in\{{\D}, {\T}, \rosso{\DQ}, {\S4}  \}$ be a modal system.\\
If $\Gamma\vdash_{\ns{\sys}} \pf{A}{\alpha}$  then $\Gamma\Vdash_{{\sys}} \pf{A}{\alpha}$.
\end{theorem}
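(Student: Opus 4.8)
The plan is to prove soundness by structural induction on the derivation $\Pi$ witnessing $\Gamma\vdash_{\ns{\sys}} \pf{A}{\alpha}$, showing at each step that if all open hypotheses $\pf{B}{\beta}\in hp(\Pi)$ are satisfied by an arbitrary $2_\sys$-structure $\langle\mdl{\tree},\rho\rangle$, then so is the conclusion $\pf{A}{\alpha}$. The base case is an assumption, where $\pf{A}{\alpha}\in\Gamma$ and the claim is immediate. For the inductive step I would split on the last rule $R$ of $\Pi$. Since $\Vdash$ is defined pointwise by $\mdl{\tree},\rho\Vdash\pf{A}{\alpha}\Leftrightarrow\mdl{\tree},\rho(\alpha)\models A$, the propositional rules $(\land I)$, $(\land_i E)$, $(\lor_i I)$, $(\lor E)$, $(\to I)$, $(\to E)$, $(\bot_c)$ collapse to the familiar first-order/propositional soundness arguments carried out at the single node $\rho(\alpha)$, because these rules never alter the position. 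The rule $(\bot_i)$ is handled by noting that $\mdl{\tree},\rho(\beta)\models\bot$ is impossible, so the premise can never be satisfied and the implication holds vacuously.

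\medskip

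The genuinely modal cases are the heart of the argument, and here I would lean directly on Lemmas~\ref{lemma:sub1} and~\ref{lemma:diff1}. For $(\Box I)$, the premise gives a derivation of $\pf{A}{\alpha\conc x}$ where (by the side condition and the position condition) the proper token $x$ does not occur in $\iniz{\Gamma}$; the induction hypothesis, applied to the modified evaluation $\rho\{\alpha\conc x/\srho{\alpha}\}$ for an \emph{arbitrary} $\srho{\alpha}$, yields $\mdl{\tree},\rho\{\alpha\conc x/\srho{\alpha}\}\models\pf{A}{\alpha\conc x}$, and then part~1 of Lemma~\ref{lemma:sub1} repackages the universal quantification over $\srho{\alpha}$ as $\mdl{\tree},\rho\Vdash\Box\pf{A}{\alpha}$. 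Dually, $(\Diamond I)$ uses part~2 of Lemma~\ref{lemma:sub1} together with Lemma~\ref{lemma:diff1}: the witness $\beta$ in the premise $\pf{A}{\alpha\conc\beta}$ supplies the existential $\srho{\alpha}$ via $\srho{\alpha}(x)=\rho(\alpha\conc\beta)\div\rho(\alpha)$. The elimination rules $(\Box E)$ and $(\Diamond E)$ run these equivalences in the opposite direction---$(\Box E)$ instantiates the universal at the specific $\srho{\alpha}$ determined by $\beta$ (again invoking Lemma~\ref{lemma:diff1}), while $(\Diamond E)$ extracts an existential witness and discharges the hypothesis $\pf{A}{\alpha\conc x}$ exactly as in the $\exists$-elimination of first-order natural deduction.

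\medskip

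The step I expect to be the main obstacle is $(\Diamond E)$, for the same reason that $\exists$-elimination is the subtle case in first-order soundness: the eigenvariable-style freshness condition $\alpha\conc x\not\in\iniz{\beta}\cup\iniz{\Gamma}$ must be used to guarantee that reinterpreting the token $x$ via the modified evaluation $\rho\{\alpha\conc x/\srho{\alpha}\}$ does not disturb the values assigned to the other open hypotheses or to the conclusion-position $\beta$. Concretely, I must check that for every hypothesis $\pf{C}{\gamma}$ still open at the minor premise, $\rho\{\alpha\conc x/\srho{\alpha}\}(\gamma)=\rho(\gamma)$, which follows precisely because $\alpha\conc x$ is not a prefix of any such $\gamma$; this is where the position condition (Proposition~\ref{vc-cond}) and the freshness side conditions earn their keep. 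Finally, I would remark---as the theorem statement invites---that each system's constraint on $\rho$ (totality and the appropriate reachability relation $\suc{\tree}$, $\suc{\tree}^0$, $\lest{\tree}$, or $\less{\tree}$) matches exactly the constraint imposed on $\beta$ in its $\Box E$/$\Diamond I$ rules, so that the evaluation substitutions invoked above always land inside the admissible set $\Theta_\sys$, keeping the induction sound uniformly across $\D$, $\T$, $\DQ$, and $\S4$.
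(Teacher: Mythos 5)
Your proposal is correct and takes essentially the same route as the paper's own (sketched) proof: induction on the derivation, with the modal cases settled by Lemma~\ref{lemma:sub1} together with Lemma~\ref{lemma:diff1} via the witness $\srho{\alpha}(x)=\rho(\alpha\conc\beta)\div\rho(\alpha)$, and the side condition $\alpha\conc x \not\in \iniz{\Gamma}$ ensuring that the modified evaluation $\rho\{\alpha\conc x/\srho{\alpha}\}$ leaves the open hypotheses untouched. You actually make explicit two points the paper leaves implicit---the eigenposition bookkeeping in $(\Diamond E)$ and the check that the substituted values land in $\Theta_\sys$ under each system's constraint on $\rho$---since the paper details only the $(\Box I)$ and $(\Box E)$ cases.
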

\begin{proof}[Proof sketch.] 
	Let $\sys\in\{{\D}, {\T}, \rosso{\DQ}, {\S4}  \}$ and assume that in $\ns{\sys}$
	$$\ded{\Pi}{R}{\pf{A}{\alpha}}$$
We prove by induction on the length of $\Pi$, for each $\Gamma$ such that $hp(\Pi)\subseteq \Gamma $, that
$\Gamma\Vdash_{{\sys}} \pf{A}{\alpha}$.
We  discuss only the cases where $R$ is  $\Box I$ or $\Box E$.
\begin{description}
	\item[($\Box I$)]
	Let $\Pi$ be 
	$$\prova{
	\ded{\Pi'}{}{\pf{A}{\alpha\conc x}}
			}{\pf{\Box A}{\alpha}}{}$$

We observe first that the rule is the same for all the systems under consideration, and that 
$\alpha\conc x \not\in \iniz{hp(\Pi')}$, with $hp(\Pi')\subseteq \Gamma$.

By IH we have: 
	$
	\forall \mdl{\tree},\rho. \mdl{\tree},\rho \Vdash hp(\Pi') \Rightarrow \mdl{\tree},\rho \Vdash\pf{A}{\alpha\conc x}
	$
	 \\	
	 \mbox{}\qquad $\Leftrightarrow$ (by the genericity of $\rho$)
	 \\ 
	 $\mdl{\tree},\rho, \srho{\alpha}.  \mdl{\tree},\rho\{\alpha\conc x / \srho{\alpha} \} \Vdash hp(\Pi'), 
	 \Rightarrow
	 \mdl{\tree},\rho\{\alpha\conc x / \srho{\alpha} \} \Vdash \pf{A}{\alpha\conc x}$\\
	  \mbox{}\qquad $\Leftrightarrow$ 
	  (since $\mdl{\tree}, \rho\{\alpha\conc x / \srho{\alpha} \} \Vdash hp(\Pi') 
	  \Leftrightarrow
	  \mdl{\tree},\rho\Vdash hp(\Pi')
	   $)
	 \\  
	 $\forall \mdl{\tree},\rho.
	 		( \mdl{\tree},\rho \Vdash  hp(\Pi')  \Rightarrow 
	 		\forall \srho{\alpha}, 
	 \mdl{\tree},\rho\{\alpha\conc x / \srho{\alpha} \} \Vdash \pf{A}{\alpha\conc x}.\\
	 $
	  $\mbox{}\qquad \Leftrightarrow$ (by Lemma~\ref{lemma:sub1})\\
	  $\forall \mdl{\tree},\rho.
	  	(\mdl{\tree},\rho\Vdash hp(\Pi') \Rightarrow \mdl{\tree},\rho \Vdash \pf{\Box A}{\alpha}).
	 $
	  
	\item[($\Box E$)]
	Let $\Pi$ be 
	$$\prova{
		\ded{\Pi'}{}{
			\pf{\Box A}{\alpha}}
	}{\pf{A}{\alpha\conc \beta}}{}$$
	
The rule have different constraints in different systems; we deal with the $\nS4$ case, the others being similar or easier. 
	
	We know that  $hp(\Pi')\subseteq \Gamma$, therefore 
	by IH \\
	$
	\forall \mdl{\tree},\rho. \mdl{\tree},\rho \Vdash hp(\Pi')
		\Rightarrow
				\mdl{\tree},\rho \Vdash\pf{\Box A}{\alpha}\\
	\mbox{}\qquad \Leftrightarrow $ (by  Lemma~\ref{lemma:sub1}) \\
	$
		\forall \mdl{\tree},\rho, \srho{\alpha}. \mdl{\tree},
		\rho \Vdash hp(\Pi')
		\Rightarrow
	\rho\{\alpha\conc x / \srho{\alpha} \} \Vdash \pf{A}{\alpha\conc x}
	\\
	\mbox{}\qquad \Rightarrow $ (by taking $\srho{\alpha}$ s.t.  $\srho{\alpha}(x)=\rho(\alpha\conc\beta)\div\rho(\alpha)) $\\
	$
		\forall \mdl{\tree},\rho,
					\mdl{\tree},
					\rho \Vdash hp(\Pi')
					\Rightarrow
					\rho\{\alpha\conc x / \srho{\alpha} \} \Vdash \pf{A}{\alpha\conc x}
	$
	\\
	$\mbox{}\qquad \Rightarrow $ (by Lemma~\ref{lemma:diff1})\\
	$
		\forall \mdl{\tree},\rho,
	\mdl{\tree},
		\rho \Vdash hp(\Pi')
		\Rightarrow
	\rho\Vdash \pf{A}{\alpha\conc \beta}
	$
\end{description}
\end{proof}

\begin{corollary}\label{cor:interpr}
	Let $\sys\in\{{\D}, {\T}, \rosso{\DQ}, {\S4}  \}$ be a modal system.
 If $\vdash_{\ns{\sys}}\pf{A}{\alpha}$, then in the Hilbert-style presentation of $\sys$ we have  $\vdash_{\sys} A$.
\end{corollary}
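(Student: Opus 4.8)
The plan is to glue together the two results already available for these systems: the soundness theorem (Theorem~\ref{thm:soundness1}) for the tree-semantics $\Vdash$, and the standard completeness theorem relating $\vdash_{\sys}$ to validity in all $\sys$-models. First I would instantiate soundness with the empty set of hypotheses: from $\vdash_{\ns{\sys}}\pf{A}{\alpha}$ I obtain $\Vdash_{\sys}\pf{A}{\alpha}$, which by the definition of logical consequence unfolds to the statement that $\mdl{\tree},\rho\Vdash\pf{A}{\alpha}$ for \emph{every} $2_\sys$ structure $\langle\mdl{\tree},\rho\rangle$. By the clause defining $\Vdash$ on position-formulas this is exactly: for every $\sys$-model $\mdl{\tree}=\langle\tree,\nu,R\rangle$ and every admissible evaluation $\rho$, one has $\mdl{\tree},\rho(\alpha)\models A$.

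To close the argument through standard completeness it suffices to prove that $A$ is valid in every $\sys$-model, i.e. $\mdl{\tree},s\models A$ for every node $s\in\tree$. The connecting lemma I would therefore establish is: \emph{for every $\sys$-model and every node $s$ there is an admissible evaluation $\rho$ with $\rho(\alpha)=s$}. Once this holds, the previous paragraph yields $\mdl{\tree},s\models A$ for all $s$, hence $\mdl{\tree}\models A$ for every $\sys$-model, and the standard completeness theorem gives $\vdash_{\sys}A$.

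For the reflexive systems $\T$ and $\S4$ the lemma is immediate: the relation that $\rho$ must respect along a one-token extension is $\suc{\tree}^0$, respectively $\less{\tree}$, both reflexive, so the constant evaluation $\rho\equiv s$ is total, admissible, and trivially satisfies $\rho(\alpha)=s$, covering every node. The genuine obstacle is the serial but non-reflexive systems $\D$ and $\DQ$, where admissibility forces $\rho(\beta)\suc{\tree}\rho(\beta\conc y)$ (resp. $\rho(\beta)\lest{\tree}\rho(\beta\conc y)$) for each one-token extension; running this along the prefix chain of $\alpha$ pushes $\rho(\alpha)$ strictly downwards, so a single descending evaluation can only land on nodes of depth at least $|\alpha|$. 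Since $\D$- and $\DQ$-models have no leaves such descending evaluations always exist and cover all sufficiently deep nodes; to reach a shallow node $s$ I would pass to the subtree $\tree_s$ rooted at $s$ — on which the truth of $A$ at $s$ depends, the accessibility being downward only — and graft it below a fresh path of length $|\alpha|$, producing a $\sys$-model in which a copy of $s$ sits at depth $|\alpha|$ and is hit by a descending evaluation, transporting $\mdl{\tree},s\models A$ back by bisimulation invariance of modal truth. I expect this model-surgery step — equivalently, the reduction to the case $\alpha=<\ >$, for which $\rho(<\ >)$ is unconstrained from above and may be set to any node — to be the only delicate point; everything else is routine unfolding of the definitions of $\Vdash$ and of admissible evaluation.
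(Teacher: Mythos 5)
Your proposal is correct and follows essentially the route the paper intends: the corollary is stated there without proof, as the direct composition of Theorem~\ref{thm:soundness1} (soundness for $\Vdash_{\sys}$) with the standard completeness theorem for $\sys$-models. Your explicit treatment of the only non-obvious step---that for ${\D}$ and ${\DQ}$ a descending evaluation can only reach nodes of depth at least $|\alpha|$, repaired by grafting the subtree $\tree_s$ below a fresh path of length $|\alpha|$ and invoking invariance of modal truth under generated submodels---is sound and fills in precisely the detail the paper leaves tacit (for ${\T}$ and ${\S4}$ the constant evaluation suffices, as you note).
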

 
\subsection{Semantics: Partial logics}\label{sec:soundpartiallog}
We now extend the semantical definitions and results of the previous section to the partial systems $\KK{}$ and  $\K4$. In particular, $\rho$ could be undefined on some position. Therefore, with respect to the semantics we have given in Section~\ref{semantics-sequents}:

\begin{enumerate}
\item $\rho:\pos \rightharpoonup \tree$ is a partial function;
\item $\srho{\alpha}:\pos \rightharpoonup \Theta_{\rho(\alpha)}$ is a partial function;
\item the substitution $\rho\{\alpha\conc x/\srho{\alpha}\}$ is undefined whenever it  formally contains an undefined  subexpression.
\end{enumerate}

We write $\rho(x)\!\downarrow$ and $\srho{\alpha}(x)\!\downarrow$ when the functions $\rho$ and  $\srho{\alpha}$ are defined on input $x$. We require that $\rho(\gamma)\!\downarrow\; \Rightarrow \forall \beta \less{} \gamma.\rho(\beta)\!\downarrow$, and $\srho{\alpha}(\beta)\!\downarrow\; \Rightarrow \forall \beta \less{} \gamma.\srho{\alpha}(\beta)\!\downarrow$. The constraints on evaluations for $\KK$ and $\K4$ are given in the following table.

\medskip
\begin{tabular}{|c|c|}
	\hline 
	\textbf{modal system} & \textbf{conditions on $\rho$} \\ 
	\hline 
	\KK&  $(\alpha\suc{\pos} \beta\ \&\  \conv{\rho(\alpha)}\ \&\  \conv{\rho(\beta)}) \Rightarrow  \rho(\alpha)\suc{\tree\ } \rho(\beta)$\\ 
	\hline 
	\K4 &  $(\alpha\suc{\pos} \beta\ \&\  \conv{\rho(\alpha)}\ \&\  \conv{\rho(\beta)}) \Rightarrow  \rho(\alpha)\lest{\tree } \rho(\beta)$ \\ 
	\hline 
\end{tabular}  
\medskip

Since $\rho$ is partial,  we now need two different notions of satisfiability: $\Vdash^\ell$ for assumption  formulas, and $\Vdash^r$ for conclusion formulas. Define then, for a  $2_\sys$ structure $<\mdl{\tree},\rho>$:
\begin{itemize}
\item
$\mdl{\tree},\rho\Vdash^\ell \pf{A}{\alpha} \Leftrightarrow  (\conv{\rho(\alpha)} \ \&\ 	 \mdl{\tree},\rho(\alpha) \models A$);
\item  
$\mdl{\tree},\rho\Vdash^r \pf{A}{\alpha} \Leftrightarrow  (\conv{\rho(\alpha)	}\ \Rightarrow \mdl{\tree},\rho(\alpha) \models A)$.
\end{itemize}

Semantics of the existence predicate $\ext{}$ justifies its name:
 $$
 \mdl{\tree},\rho\Vdash^l \ext{\alpha} \Leftrightarrow \conv{\rho(\alpha)}.
 $$

Note that we do not need to define $\Vdash^r$ for $\ext{}$, since it is used only in assumptions. Finally 

$
 \Gamma\Vdash\pf{A}{\alpha} 
 \Leftrightarrow 
\forall<\mdl{\tree},\rho>.
(
 \forall \pf{B}{\beta}\in\Gamma,. \mdl{\tree},\rho\Vdash^l \pf{B}{\beta}, \forall \ext{\delta} \in\Gamma. \mdl{\tree},\rho\Vdash^l \ext{\delta} 
 \\
\mbox{}\qquad\qquad\qquad
\Rightarrow  \mdl{\tree},\rho\models^r \pf{A}{\alpha}
 ).
$

Finally define:
\begin{itemize}
	\item $\Theta_{\KK}= \{t: |t|=1\}$;
	\item $ \Theta_{\K4}=\{t: |t|\gt0\}$.
\end{itemize}
As for other notations, we will write $\Theta_\sys$ for any of these sets. 

As for the case of total logics we have the following lemmas (the proofs are  simple adaptations of the previous ones).

\begin{lemma}\label{lemma:Partial:sub1} Let $\sys\in\{ {\KK}, {\K4}  \}$. 
	\begin{enumerate}
		\item 
		$\mdl{\tree}, \rho\Vdash^r \Box \pf{A}{\alpha} 
		\Leftrightarrow 
		\forall \srho{\alpha}. 
		\mdl{\tree},\rho\{\alpha\conc x/\srho{\alpha}\}\Vdash^t \pf{A}{\alpha\conc x};$
		\item $\mdl{\tree}, \rho\Vdash^r \Diamond \pf{A}{\alpha} 
		\Leftrightarrow 
		\exists \srho{\alpha}. \mdl{\tree}, 
		\rho\{\alpha\conc x/\srho{\alpha}\} \Vdash^r \pf{A}{\alpha\conc x}.$
	\end{enumerate}
\end{lemma}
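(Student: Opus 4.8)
The plan is to mirror the proof of Lemma~\ref{lemma:sub1} for the total logics, chaining the same equivalences but now keeping track of the definedness of $\rho$ at each link of the chain. Since $\rho$ is only a partial function, I will unfold every satisfaction assertion to its primitive form, written in terms of the predicate $\conv{\rho(\alpha)}$ and the underlying Kripke relation $\models$, and then split on whether $\conv{\rho(\alpha)}$ holds. The skeleton of equivalences is literally the one used in Lemma~\ref{lemma:sub1}; partiality contributes only the guards $\conv{\cdot}$ sitting in front of $\models$. Because these two lemmas are the partial-evaluation counterparts of the Kripke clauses for $\Box$ and $\Diamond$, and are used in exactly that role in the soundness argument, the guarded shape of the right-hand satisfaction is the one that reproduces the universal (resp.\ existential) quantification over $R$-successors; I unfold it accordingly rather than arguing from the bare notation.

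For clause~1, I first unfold the left-hand side: $\mdl{\tree},\rho\Vdash^r\Box\pf{A}{\alpha}$ means, by the conclusion clause, that $\conv{\rho(\alpha)}$ implies $\mdl{\tree},\rho(\alpha)\models\Box A$, i.e.\ that every $R$-successor of $\rho(\alpha)$ satisfies $A$. On the right-hand side the conclusion formula $\pf{A}{\alpha\conc x}$ is asserted along the shifted evaluation $\rho\{\alpha\conc x/\srho{\alpha}\}$, so I unfold it to the guarded Kripke assertion of shape $\conv{\cdot}\Rightarrow(\cdot\models A)$. Under the $\KK$ (resp.\ $\K4$) constraint the values $\srho{\alpha}(x)$ for which $\rho(\alpha)\conc\srho{\alpha}(x)\in\Theta_{\KK}$ (resp.\ $\Theta_{\K4}$) enumerate precisely the $R$-accessible nodes $\rho(\alpha)\conc t$, and the analogue of Lemma~\ref{lemma:diff1} identifies $\rho\{\alpha\conc x/\srho{\alpha}\}(\alpha\conc x)$ with such a node. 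Since the right-hand assertion is vacuously satisfied whenever that position is undefined or out of range, the universal quantifier over $\srho{\alpha}$ expresses exactly ``$A$ at every $R$-successor''. When $\rho(\alpha)$ is undefined the left-hand implication is vacuously true; on the right every shifted position $\rho(\alpha)\conc\srho{\alpha}(x)$ is undefined, so the guarded assertion is vacuously true for all $\srho{\alpha}$, and both sides are true.

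For clause~2 the argument is the existential dual. Unfolding $\mdl{\tree},\rho\Vdash^r\Diamond\pf{A}{\alpha}$ gives: $\conv{\rho(\alpha)}$ implies that some $R$-successor of $\rho(\alpha)$ satisfies $A$. From such a successor $\rho(\alpha\conc\beta)$ I build the witnessing evaluation exactly as in Lemma~\ref{lemma:diff1}, taking $\srho{\alpha}(x)=\rho(\alpha\conc\beta)\div\rho(\alpha)$, so that $\rho\{\alpha\conc x/\srho{\alpha}\}(\alpha\conc x)=\rho(\alpha\conc\beta)$ and the right-hand assertion holds at a defined position; conversely a witnessing $\srho{\alpha}$ whose shifted position is defined yields an $R$-accessible node at which $A$ holds. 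The $\Theta_{\sys}$ bound again guarantees that the witness lies among the $R$-successors admitted by $\KK$/$\K4$, and the undefined branch is handled symmetrically to clause~1.

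The step I expect to be the main obstacle is precisely this undefined branch, where the purely equational chain of Lemma~\ref{lemma:sub1} no longer applies verbatim: one must verify that the guard on the right-hand relation is calibrated so that, when $\rho(\alpha)$ is undefined, both a universal statement ($\Box$) and an existential one ($\Diamond$) degenerate to the same vacuous truth value as the corresponding guarded left-hand side. A secondary point requiring care is that, under the partial-function discipline of Section~\ref{sec:partlogics} (with $\conv{\rho(\gamma)}$ downward closed along $\less{}$ and the existence-predicate bookkeeping), the substitution $\rho\{\alpha\conc x/\srho{\alpha}\}$ stays defined exactly on the positions corresponding to $R$-accessible nodes; in particular the existential witness for $\Diamond$ must be a \emph{defined} shifted position, so the quantifier over $\srho{\alpha}$ really does range over the intended successors for each of $\KK$ and $\K4$.
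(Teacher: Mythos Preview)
Your approach is exactly what the paper does: it states that ``the proofs are simple adaptations of the previous ones'' (i.e., of Lemma~\ref{lemma:sub1}) and gives no further detail, so your plan to rerun the equivalence chain of Lemma~\ref{lemma:sub1} while threading the definedness guards $\conv{\rho(\cdot)}$ through each step is precisely the intended adaptation, and your case split on whether $\conv{\rho(\alpha)}$ holds is the right way to handle the partiality.

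One caveat worth flagging (which you already identify as the ``secondary point requiring care'' but do not fully close): in clause~2, since $\srho{\alpha}$ is declared as a \emph{partial} function into $\Theta_{\rho(\alpha)}$, nothing in the text prevents choosing $\srho{\alpha}$ undefined at $x$, which would make $\rho\{\alpha\conc x/\srho{\alpha}\}(\alpha\conc x)$ undefined and hence $\Vdash^r\pf{A}{\alpha\conc x}$ vacuously true---trivializing the existential on the right even when $\rho(\alpha)$ is defined and has no $R$-successors. For the equivalence to hold one must read the quantifier over $\srho{\alpha}$ as ranging over evaluations with $\srho{\alpha}(x)$ defined and lying in the appropriate $\Theta_{\sys}$; the paper leaves this implicit (it introduces $\Theta_{\KK}$ and $\Theta_{\K4}$ just before the lemma without saying how they constrain $\srho{\alpha}$), so you should make that reading explicit rather than leaving it as a remark.
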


\begin{lemma}\label{lemma:Partial:diff1}
	Let $\srho{\alpha}$ be an evaluation 
	s.t.  $\srho{\alpha}(x) = \rho(\alpha\conc\beta) \div \rho(\alpha)$, then
	\[
	\mdl{\tree},\rho\Vdash^r \pf{A}{\alpha\conc\beta} \Leftrightarrow 
	\mdl{\tree},\rho\{\alpha\conc x/\srho{\alpha}\}\Vdash^r  \pf{A}{\alpha\conc x}.
	\]
\end{lemma}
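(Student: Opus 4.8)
The plan is to follow the proof of Lemma~\ref{lemma:diff1} line by line, replacing the total satisfiability relation $\models$ by the right-hand relation $\Vdash^r$ and threading the definedness conditions through every step. The whole argument rests on a single identity between the two evaluations involved, read as a \emph{Kleene equality} (both sides defined under exactly the same condition, and equal when defined):
\[
\rho\{\alpha\conc x/\srho{\alpha}\}(\alpha\conc x) \;\simeq\; \rho(\alpha\conc\beta).
\]
First I would establish this identity. By the definition of the substituted evaluation (the case $\gamma=<>$), $\rho\{\alpha\conc x/\srho{\alpha}\}(\alpha\conc x)=\rho(\alpha)\conc\srho{\alpha}(x)$; and, by hypothesis, $\srho{\alpha}(x)=\rho(\alpha\conc\beta)\div\rho(\alpha)$, so $\rho(\alpha)\conc\srho{\alpha}(x)=\rho(\alpha)\conc(\rho(\alpha\conc\beta)\div\rho(\alpha))$, which equals $\rho(\alpha\conc\beta)$ by the definition of $\div$, exactly as in the total case.

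Next I would unfold $\Vdash^r$ on both sides of the claimed equivalence. By definition, $\mdl{\tree},\rho\Vdash^r\pf{A}{\alpha\conc\beta}$ abbreviates $\conv{\rho(\alpha\conc\beta)}\Rightarrow\mdl{\tree},\rho(\alpha\conc\beta)\models A$, while $\mdl{\tree},\rho\{\alpha\conc x/\srho{\alpha}\}\Vdash^r\pf{A}{\alpha\conc x}$ abbreviates $\conv{\rho\{\alpha\conc x/\srho{\alpha}\}(\alpha\conc x)}\Rightarrow\mdl{\tree},\rho\{\alpha\conc x/\srho{\alpha}\}(\alpha\conc x)\models A$. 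Since the two evaluations are Kleene-equal on the relevant position, these two conditional statements are literally the same, which gives the equivalence at once. Concretely this splits into two cases: if $\rho(\alpha\conc\beta)$ is defined then $\rho\{\alpha\conc x/\srho{\alpha}\}(\alpha\conc x)$ is defined as well and both antecedents hold with the two nodes equal, so both sides reduce to $\mdl{\tree},\rho(\alpha\conc\beta)\models A$; if $\rho(\alpha\conc\beta)$ is undefined then $\srho{\alpha}(x)$ contains the undefined subexpression $\rho(\alpha\conc\beta)$ and is itself undefined, hence so is $\rho\{\alpha\conc x/\srho{\alpha}\}(\alpha\conc x)$, and both $\Vdash^r$ statements are vacuously true.

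The only genuinely new point with respect to the total case, and where I expect the care to lie, is the definedness bookkeeping. To see that the defined case is coherent I would invoke the prefix-closure requirement $\conv{\rho(\gamma)}\Rightarrow\forall\delta\less{}\gamma.\,\conv{\rho(\delta)}$, which guarantees that $\conv{\rho(\alpha\conc\beta)}$ forces $\conv{\rho(\alpha)}$, so that $\rho(\alpha\conc\beta)\div\rho(\alpha)$ is legitimate; and the constraint on $\rho$ for $\KK$/$\K4$ (propagated from $\suc{\pos}$ to $\less{\pos}$), which ensures $\rho(\alpha)\less{\tree}\rho(\alpha\conc\beta)$, so that $\div$ returns the intended suffix. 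For the undefined case I would appeal to the convention that a substitution containing an undefined subexpression is itself undefined. With these observations the chain of equivalences of Lemma~\ref{lemma:diff1} transfers essentially verbatim, now phrased with $\Vdash^r$, completing the proof.
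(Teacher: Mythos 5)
Your proposal is correct and matches the paper's intent exactly: the paper gives no separate proof of this lemma, stating only that it is a ``simple adaptation'' of the total-case Lemma on $\div$, and your argument is precisely that adaptation---the same chain of identities $\rho\{\alpha\conc x/\ell_{[\alpha]}\}(\alpha\conc x)=\rho(\alpha)\conc \ell_{[\alpha]}(x)=\rho(\alpha\conc\beta)$, now read as a Kleene equality with the definedness conditions threaded through. Your explicit case split (both sides reduce to satisfaction at the common node when $\rho(\alpha\conc\beta)$ is defined, both $\Vdash^r$ statements vacuously true when it is undefined) spells out exactly the bookkeeping the paper leaves implicit.
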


The following lemma  allows us to reuse with simple modifications the soundness theorem we proved in the previous section. 
 
\begin{lemma}\label{lemma:E-elimination}
	If $\alpha\conc x \not\in \iniz{\Gamma}$ and $\Gamma,\ext{\alpha\conc x}\Vdash \pf{\Box A}{\alpha}$ then $\Gamma \Vdash \pf{\Box A}{\alpha}$.
\end{lemma}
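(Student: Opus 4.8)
The plan is to unfold the definition of $\Gamma\Vdash\pf{\Box A}{\alpha}$ and reduce it to the hypothesis $\Gamma,\ext{\alpha\conc x}\Vdash\pf{\Box A}{\alpha}$. So I would fix a structure $\langle\mdl{\tree},\rho\rangle$ and assume $\mdl{\tree},\rho\Vdash^\ell\pf{B}{\beta}$ for every $\pf{B}{\beta}\in\Gamma$ and $\mdl{\tree},\rho\Vdash^\ell\ext{\delta}$ for every $\ext{\delta}\in\Gamma$; the goal is $\mdl{\tree},\rho\Vdash^r\pf{\Box A}{\alpha}$. The guiding observation is that the truth of $\pf{\Box A}{\alpha}$ under $\Vdash^r$ depends on $\rho$ only through the node $\rho(\alpha)$ (and the fixed model), whereas the hypothesis asks, in addition, that $\rho(\alpha\conc x)$ be defined. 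Hence the whole argument amounts to manufacturing, from $\rho$, an evaluation that also satisfies $\ext{\alpha\conc x}$ while leaving both $\Gamma$ and the value $\rho(\alpha)$ untouched.

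First I would dispose of the trivial cases. If $\rho(\alpha)$ is undefined, then $\mdl{\tree},\rho\Vdash^r\pf{\Box A}{\alpha}$ holds vacuously by the very definition of $\Vdash^r$. If $\conv{\rho(\alpha)}$ but $\rho(\alpha)$ is a leaf of $\tree$, then in both $\KK$ (where $\R=\suc{\tree}$) and $\K4$ (where $\R=\lest{\tree}$) no node is $\R$-accessible from $\rho(\alpha)$, so $\mdl{\tree},\rho(\alpha)\models\Box A$ holds vacuously and we are again done. There remains the case in which $\rho(\alpha)$ is defined and has at least one immediate successor $s$ in $\tree$, i.e.\ $\rho(\alpha)\suc{\tree}s$.

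In this main case I would build the extension $\rho'$. If $\rho(\alpha\conc x)$ is already defined, take $\rho'=\rho$; otherwise (so that, by prefix-definedness of evaluations, $\rho$ is undefined on $\alpha\conc x$ and on all its proper extensions) let $\rho'$ agree with $\rho$ everywhere except at $\alpha\conc x$, where we set $\rho'(\alpha\conc x)=s$. I would then verify that $\rho'$ is a legal evaluation: prefix-definedness at $\alpha\conc x$ holds because $\conv{\rho(\alpha)}$ forces every prefix of $\alpha$ to be defined; and the system constraint is met since the only new comparable pair is $\alpha\suc{\pos}\alpha\conc x$, for which $\rho'(\alpha)=\rho(\alpha)\suc{\tree}s=\rho'(\alpha\conc x)$ supplies the $\KK$ requirement and, a fortiori via $\suc{\tree}\subseteq\lest{\tree}$, the $\K4$ requirement. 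Crucially, $\rho'$ and $\rho$ differ only on positions extending $\alpha\conc x$, and by $\alpha\conc x\notin\iniz{\Gamma}$ no such position occurs in $\Gamma$; since $\Vdash^\ell$ of a formula $\pf{B}{\beta}$ (resp.\ $\ext{\delta}$) depends only on the value of the evaluation at $\beta$ (resp.\ $\delta$), the evaluation $\rho'$ still satisfies all of $\Gamma$, and it satisfies $\ext{\alpha\conc x}$ because $\rho'(\alpha\conc x)=s$ is defined. Applying the hypothesis to $\rho'$ yields $\mdl{\tree},\rho'\Vdash^r\pf{\Box A}{\alpha}$; since $\rho'(\alpha)=\rho(\alpha)$ is defined, this means $\mdl{\tree},\rho(\alpha)\models\Box A$, that is $\mdl{\tree},\rho\Vdash^r\pf{\Box A}{\alpha}$, as required.

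The main obstacle, and the reason the statement is not a one-line unfolding, is precisely the partiality of $\rho$ in the models for $\KK$ and $\K4$: one cannot always realize the existence assumption $\ext{\alpha\conc x}$, because the constraints on evaluations force $\rho'(\alpha\conc x)$ to be an $\R$-successor of $\rho(\alpha)$, and such a successor need not exist. Isolating the leaf case, where $\Box A$ is vacuously true, is exactly the step that makes the witness construction always available in the remaining case; and keeping the side condition $\alpha\conc x\notin\iniz{\Gamma}$ in play is what guarantees that the surgery on $\rho$ leaves $\Gamma$ intact.
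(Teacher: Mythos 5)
Your proof is correct, but it proceeds along a genuinely different route from the paper's. The paper argues by contradiction and reuses Lemma~\ref{lemma:Partial:sub1}: assuming some $\langle\mdl{\tree},\rho\rangle$ with $\mdl{\tree},\rho\Vdash^\ell\Gamma$ and $\mdl{\tree},\rho\not\Vdash^r\pf{\Box A}{\alpha}$, that lemma produces an $\srho{\alpha}$ such that, setting $\rho''=\rho\{\alpha\conc x/\srho{\alpha}\}$, one has $\mdl{\tree},\rho''\not\Vdash^r\pf{A}{\alpha\conc x}$; the crucial economy is that failure under $\Vdash^r$ by definition forces $\rho''(\alpha\conc x)\!\downarrow$, so the very node witnessing the failure of $\Box A$ automatically validates $\ext{\alpha\conc x}$, while $\alpha\conc x\not\in\iniz{\Gamma}$ keeps $\Gamma$ satisfied---contradicting the hypothesis. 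You instead give a direct argument, trading the substitution lemma for an explicit three-way case split on $\rho(\alpha)$ (undefined; a leaf; possessing an immediate successor $s$) and a hand-built extension $\rho'$ with $\rho'(\alpha\conc x)=s$. The two vacuous cases you isolate are exactly the configurations in which the paper's contradiction never gets off the ground ($\Box A$ cannot fail at an undefined position or at a leaf under $\R=\suc{\tree}$ or $\R=\lest{\tree}$), which is why the paper needs no case analysis at all: its witness for $\ext{\alpha\conc x}$ comes for free from the counterexample, with legality of the modified evaluation guaranteed by the substitution machinery already in place. What your version buys in exchange is self-containedness and explicitness: you bypass Lemma~\ref{lemma:Partial:sub1} and the $\rho\{\alpha\conc x/\srho{\alpha}\}$ apparatus entirely, and you spell out the verifications the paper leaves implicit---prefix-definedness of $\rho'$, the $\suc{\tree}$ (resp.\ $\lest{\tree}$) constraint for $\KK$ (resp.\ $\K4$), and the fact that $\Vdash^\ell$-satisfaction of $\Gamma$ depends only on the values of the evaluation at positions occurring in $\Gamma$, which the surgery at $\alpha\conc x$ leaves untouched thanks to the side condition. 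Both proofs ultimately pivot on the same two facts (the side condition insulating $\Gamma$, and the invariance of $\Vdash^r\pf{\Box A}{\alpha}$ under changes to the evaluation away from $\alpha$), so yours is a legitimate, somewhat more elementary alternative to the paper's argument.
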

\begin{proof}
	
	Let us suppose that there exist $\mdl{\tree}$ and  $\rho'$ s.t.
	
	 $\mdl{\tree}, \rho'\Vdash^l \Gamma$ and $\mdl{\tree}, \rho'\not\Vdash^r\pf{\Box A}{\alpha}$.
	 
	By means of the previous lemmas we have that:
	
	$\mdl{\tree}, \rho\not\Vdash^r \Box \pf{A}{\alpha} 
	\Leftrightarrow 
	\exists \srho{\alpha}. \mdl{\tree}, 
	\rho\{\alpha\conc x/\srho{\alpha}\} \not\Vdash^r \pf{A}{\alpha\conc x}.$
	
	Now this implies that $\rho\{\alpha\conc x/\srho{\alpha}\}(\alpha\conc x)\!\downarrow$. 
	
	Let $\rho''= \rho\{\alpha\conc x/\srho{\alpha}\}$. Since $\alpha\conc x\not\in \iniz{\Gamma}$ we have  an evaluation $\rho''$ s.t. 
	
	$\mdl{\tree}, \rho''\Vdash^l \Gamma$ and $\rho''(\alpha\conc x)\!\downarrow$ and  $\mdl{\tree}, \rho''\not\Vdash^r\pf{\Box A}{\alpha}$, 
	
	which is a contradiction.
\end{proof}

\begin{theorem}[soundness 2]
	Let $\sys\in\{{\KK},   {\K4}  \}$ be a modal system.
	If $\Gamma\vdash_{\ns{\sys}} \pf{A}{\alpha}$  then $\Gamma\Vdash_{{\sys}} \pf{A}{\alpha}$.
\end{theorem}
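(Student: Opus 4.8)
The plan is to prove soundness by induction on the length of the derivation $\Pi$, following exactly the structure of Theorem~\ref{thm:soundness1} (soundness 1) but adapting each step to the partial-evaluation setting, where $\rho$ is now a partial function and satisfiability splits into the two relations $\Vdash^\ell$ (for assumptions) and $\Vdash^r$ (for conclusions). The propositional cases ($\land$, $\lor$, $\to$, $\bot_c$, $\bot_i$) transfer essentially verbatim, because the only subtlety is that a conclusion formula $\pf{A}{\alpha}$ is satisfied vacuously when $\conv{\rho(\alpha)}$ fails; one checks that each propositional rule respects this convention. The genuinely new work is in the four modal rules, and the proof of those is engineered to reuse Lemma~\ref{lemma:Partial:sub1} and Lemma~\ref{lemma:Partial:diff1}, which are the partial-logic analogues of Lemmas~\ref{lemma:sub1} and~\ref{lemma:diff1}.

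First I would handle $\Box I$, and this is where Lemma~\ref{lemma:E-elimination} does the decisive work. In the partial setting, the premise of $\Box I$ carries a \emph{discharged} existence assumption $\ext{\alpha\conc x}$ together with $\pf{A}{\alpha\conc x}$. So by the induction hypothesis we obtain $\Gamma, \ext{\alpha\conc x} \Vdash \pf{\Box A}{\alpha}$ (after first passing through the $\Box$-introduction step semantically, via Lemma~\ref{lemma:Partial:sub1}, exactly as in Theorem~\ref{thm:soundness1}). The side condition $\alpha\conc x \not\in \iniz{\Gamma}$ is precisely the hypothesis of Lemma~\ref{lemma:E-elimination}, which then lets me discharge $\ext{\alpha\conc x}$ and conclude $\Gamma \Vdash \pf{\Box A}{\alpha}$. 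For $\Box E$, the minor premise is now an existence formula $\ext{\alpha\conc\beta}$ appearing as an assumption; I would use the induction hypothesis on both premises, apply Lemma~\ref{lemma:Partial:sub1} to unfold $\Vdash^r \Box\pf{A}{\alpha}$ into a universal statement over evaluations $\srho{\alpha}$, instantiate with the $\srho{\alpha}$ determined by $\beta$ (using $\srho{\alpha}(x)=\rho(\alpha\conc\beta)\div\rho(\alpha)$), and then invoke Lemma~\ref{lemma:Partial:diff1} to descend back to $\pf{A}{\alpha\conc\beta}$. Here the availability of $\ext{\alpha\conc\beta}$ as a hypothesis guarantees $\conv{\rho(\alpha\conc\beta)}$, so the subtraction $\div$ is well-defined and the instantiated evaluation is genuinely defined at $\alpha\conc x$.

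The $\Diamond$ cases are dual. For $\Diamond I$, the existence premise $\ext{\alpha\conc\beta}$ again ensures $\conv{\rho(\alpha\conc\beta)}$, so I can build the witnessing $\srho{\alpha}$ and apply the existential half of Lemma~\ref{lemma:Partial:sub1} to obtain $\Vdash^r \Diamond\pf{A}{\alpha}$. For $\Diamond E$, which discharges both $\pf{A}{\alpha\conc x}$ and $\ext{\alpha\conc x}$, I would use the induction hypothesis on the major premise to get an existential witness $\srho{\alpha}$ with $\conv{\rho\{\alpha\conc x/\srho{\alpha}\}(\alpha\conc x)}$, and then feed this witness into the induction hypothesis on the minor premise, where the discharged $\ext{\alpha\conc x}$ is now satisfied precisely because the witness makes the relevant position defined. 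The eigen-variable-style side conditions ($\alpha\conc x\not\in\iniz{\beta}$ and $\alpha\conc x\not\in\iniz{\Gamma}$) guarantee that modifying $\rho$ at $\alpha\conc x$ disturbs neither the hypotheses in $\Gamma$ nor the conclusion position $\beta$, so the two induction hypotheses can be chained.

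Throughout, I would also verify that the $\sys$-specific constraints (for $\nKK$, $\beta$ a singleton; for $\nK4$, $\beta$ nonempty) match the corresponding conditions on $\rho$ in the $\KK$/$\K4$ table, so that the evaluations produced in the $\Box E$ and $\Diamond I$ cases actually satisfy the successor/transitive-successor requirement relating $\rho(\alpha)$ and $\rho(\alpha\conc\beta)$. I expect the main obstacle to be the careful bookkeeping of definedness in the modal cases: one must confirm at each instantiation that every subexpression of $\rho\{\alpha\conc x/\srho{\alpha}\}$ occurring in the argument is defined, so that the partial substitution does not silently collapse, and that the prefix-closure requirement on $\rho$ (namely $\conv{\rho(\gamma)} \Rightarrow \forall \beta\less{}\gamma.\,\conv{\rho(\beta)}$) is preserved. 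Once Lemma~\ref{lemma:E-elimination} is in hand, however, the existence predicates are exactly the device that reduces this bookkeeping to the same shape as the total case, so the remainder is a routine adaptation of Theorem~\ref{thm:soundness1}.
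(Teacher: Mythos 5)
Your proposal is correct and takes essentially the same approach as the paper's own proof sketch: induction on the length of the derivation, with Lemma~\ref{lemma:Partial:sub1} and Lemma~\ref{lemma:Partial:diff1} handling the modal unfolding and re-instantiation, the premise $\ext{\alpha\conc\beta}$ securing definedness of the subtraction $\rho(\alpha\conc\beta)\div\rho(\alpha)$ in the $\Box E$ case, and Lemma~\ref{lemma:E-elimination} (enabled by the side condition $\alpha\conc x\not\in\iniz{\Gamma}$) discharging the existence assumption in the $\Box I$ case. Your sketches of the $\Diamond I$ and $\Diamond E$ cases, which the paper omits as routine, are the expected dualizations and fully consistent with its treatment.
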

\begin{proof}[Proof sketch.] 

	Let $\sys\in\{{\KK}, {\K4}  \}$, and assume that in $\ns{\sys}$
	$$\ded{\Pi}{R}{\pf{A}{\alpha}}$$
We prove by induction on the length of $\Pi$, for each $\Gamma$ such that $hp(\Pi)\subseteq \Gamma $, that
$\Gamma\Vdash_{{\sys}} \pf{A}{\alpha}$.
We  discuss only the cases where $R$ is  $\Box I$ or $\Box E$.
	\begin{description}
		\item[($\Box I$)] 
		Let  $\Pi$ be 
		$$\prova{
			\ded{\DT{[\ext{\alpha\conc x}]}{\Pi'}{}}{}{\pf{A}{\alpha\conc x}}
		}{\pf{\Box A}{\alpha}}{}$$
By the same argument we used in Theorem~\ref{thm:soundness1}, we have 
		$hp(\Pi') \Vdash \pf{\Box A}{\alpha}$.
		By Lemma~\ref{lemma:E-elimination} we obtain the thesis:
		$hp(\Pi')-\{\ext{\alpha\conc x}\}\Vdash \pf{\Box A}{\alpha}$.
		
		\item[($\Box E$)]
		Let $\Pi$ be 
		$$\prova{
			\ded{\Pi'}{}{
				\pf{\Box A}{\alpha}
			} 
		\ded{}{}{
			\ext{\alpha\conc\beta}
		}
		}
		{
			\pf{A}{\alpha\conc \beta}
		}
		{}$$
		
		We deal with the $\nK4$ case, the $\nKK$ case being similar. 
		
		We know that  $hp(\Pi')\subseteq \Gamma$, therefore 
		by IH \\
		$
		\forall \mdl{\tree},\rho. \mdl{\tree},\rho \Vdash^l hp(\Pi')
		\Rightarrow
		\mdl{\tree},\rho \Vdash^r \pf{\Box A}{\alpha}\\
		\mbox{}\qquad \Leftrightarrow $ (by  Lemma~\ref{lemma:Partial:sub1}) \\
		$
		\forall \mdl{\tree},\rho, \srho{\alpha}. \mdl{\tree},
		\rho \Vdash hp(\Pi')
		\Rightarrow
		\mdl{\tree}, \rho\{\alpha\conc x / \srho{\alpha} \} \Vdash \pf{A}{\alpha\conc x}
		\\
		\mbox{}\qquad \Rightarrow $ (by taking $\srho{\alpha}$ s.t.  $\srho{\alpha}(x)=\rho(\alpha\conc\beta)\div\rho(\alpha)$, which 
		exists, since we assume $\ext{\alpha\conc\beta}$, that is
		$\rho({\alpha\conc\beta}) \!\downarrow$
		)
		
		$
		\forall \mdl{\tree},\rho,
		\mdl{\tree},
		\rho \Vdash^l hp(\Pi')
		\Rightarrow
		\rho\{\alpha\conc x / \srho{\alpha} \} \Vdash^r \pf{A}{\alpha\conc x}
		$
		\\
		$\mbox{}\qquad \Rightarrow $ (by Lemma~\ref{lemma:Partial:diff1})\\
		$
		\forall \mdl{\tree},\rho,
		\mdl{\tree},
		\rho \Vdash^l hp(\Pi')
		\Rightarrow
		\rho\Vdash^r \pf{A}{\alpha\conc \beta}
		$
		
	\end{description}
\end{proof}


\section{Intuitionistic systems and normalization}\label{sec:normalization}
\blue{We introduce intuitionistic systems, which we obtain syntactically from the ones of the previous sections in the same way intuitionistic propositional natural deduction is obtained from its classical version---by dropping the \textit{reductio ab absurdum} rule, $\bot_c$. In the economy of the paper, these intuitionistic systems are instrumental to obtain a \emph{syntactic} proof of consistency
for the classical ones\footnote{\blue{Consistency is of course already implied by the semantical results of Section~\ref{subs:seman}.}}
(Remark~\ref{tnkcons}) via a double negation translation (Section~\ref{Sect:doubleneg}). 
Contraction on proofs is defined in the standard way---on modal connectives is  defined ``as'' the one for first-order quantifiers,---and also the proof of normalization follows standard techniques. 
We see this as a further ``litmus test'' for  the simplicity and naturalness of the notion of position-formulas (and therefore this is also a second reason for the inclusion of intuitionistic calculi in the paper.) 
In future work we will explore the extraction of proof-terms (lambda-terms) from these intuitionistic systems, studying a possible Curry-Howard isomorphism for our modal systems (see Section~\ref{sec:conclusions} for more details on this.) In this paper, whose focus is on the fundamentals of the proof-theory of position-formulas, we also refrain from any attempt to discuss the formal semantics of these systems (see~\cite{Kojima2012} for a survey of some of the many  possible approaches to the semantics of intuitionistic modal logics.)  
}

Let $\iKK$, $\iT$, $\iD$, $\iK4$, \rosso{$\iDQ$}, and $\iS4$ be the systems  obtain by dropping the \textit{reductio ab absurdum}  rule, $\bot_c$, from 
$\nKK$, $\nT$, $\nD$, $\nK4$, \rosso{$\nDQ$}, and $\nS4$, respectively.

We write 
$$\vertil{
          \pf{B}{\beta} \\ \Pi \\ \pf{A}{\alpha}}$$  
\noindent to say that $\Pi$ is a 
deduction of
        $\pf{A}{\alpha}$ having some {(possibly zero)} occurrences of formula
        $\pf{B}{\beta}$ among its assumptions, and we write
        
$$\ded{\Pi}{R}{\pf{A}{\alpha}}$$ 
to say that $\Pi$ is a deduction
   of formula $\pf{A}{\alpha}$ whose last rule is $R$.

To define the \textit{normal form} for a deduction, we must first introduce the notions
of \textit{contractions}, \textit{reduction steps}, and \textit{reduction sequence} (see, e.g.,~\cite{Girard:ptlc}.) 

\subsection{Proper contractions}
The relation $\rhd$ of \textit{proper contractibility}  between 
deductions is defined as follows.\footnote{Since the conclusion of $\bot_i$ is always atomic, we do not have contractions  associated to such a rule.}

\bigskip
\begin{description}
	\item[Proper contractibility for $\iT$, $\iD$, \rosso{$\iDQ$}, $\iS4$ systems] \mbox{}\\\ \\
	\def\uno{\DT{}{{\Pi}_1}{\pf{A}{\alpha}}}
	\def\due{\DT{}{{\Pi}_2}{\pf{B}{\alpha}}}
	\def\tre{\prova{\uno\quad\due \ }{\pf{A\land B}{\alpha}}{ }}
	\def\quattro{\prova{\tre}{\pf{A}{\alpha}}{ }}
	\def\cinque{\prova{\tre}{\pf{B}{\alpha}}{ }}
	\def\sei{\prova{\uno\ }{\pf{A\lor B}{\alpha}}{ }}
	\def\sette{\DT{[\pf{A}{\alpha}]}{{\Pi}_2}{\pf{C}{\beta}}}
	\def\otto{\DT{[\pf{B}{\alpha}]}{{\Pi}_3}{\pf{C}{\beta}}}
	\def\nove{\prova{\sei\ \sette\ \otto\ }{\pf{C}{\beta}}{ }}
	\def\dieci{\DT{\uno}{{\Pi}_2}{\pf{C}{\beta}}}
	\def\duebis{\DT{}{{\Pi}_1}{\pf{B}{\alpha}}}
	\def\undici{\prova{\duebis}{\pf{A\lor B}{\alpha}}{ }}
	\def\dodici{\prova{\undici\ \sette\ \otto\ }{\pf{C}{\beta}}{ }}
	\def\tredici{\DT{\duebis}{{\Pi}_3}{\pf{C}{\beta}}}
	\def\quattordici{\DT{[\pf{A}{\alpha}]}{{\Pi}_1}{\pf{B}{\alpha}}}
	\def\quindici{\prova{\quattordici}{\pf{A\to B}{\alpha}}{ }}
	\def\sedici{\DT{}{{\Pi}_2}{\pf{A}{\alpha}}}
	\def\diciassette{\prova{\quindici\ \sedici}{\pf{B}{\alpha}}{ }}
	\def\diciotto{\DT{\sedici}{{\Pi}_1}{\pf{B}{\alpha}}}
	\def\venti{\DT{}{\Pi}{\pf{A}{s\oplus 1}}}
	\def\ventuno{\prova{\venti}{\pf{ A}{\alpha}}{}}
	\def\ventidue{\prova{\ventuno}{\pf{A}{s\oplus 1}}{}}
	\def\ventitre{\DT{}{\Pi}{\pf{A}{\alpha \conc x}}}
	\def\ventiquattro{\prova{\ventitre}{\pf{\Box A}{\alpha}}{ }}
	\def\venticinque{\prova{\ventiquattro}{\pf{A}{\alpha \conc \beta}}{ }}
	\def\ventisei{\DT{}{{\Pi}\rep{\alpha\conc x}{\alpha\conc \beta}}{\pf{A}{\alpha \conc \beta}}}
	\def\ventisette{\DT{}{{\Pi}_1}{\pf{A}{\alpha \conc \beta}}}
	\def\ventotto{\prova{\ventisette}{\pf{\Diamond A}{\alpha}}{ }}
	\def\ventinove{\DT{[\pf{A}{\alpha \conc x}]}{{\Pi}_2}{\pf{C}{\gamma}}}
	\def\trenta{\prova{\ventotto\ \ventinove}{\pf{C}{\gamma}}{ }}
	\def\trentuno{\DT{}{{\Pi}_1}{\pf{A}{\alpha \conc \beta}}}
	\def\trentadue{\DT{\trentuno}{{\Pi}_2\rep{\alpha\conc x}{\alpha\conc\beta}}{\pf{C}{\gamma}}}
	\def\trentatre{\DT{}{{\Pi}_1}{\pf{A}{\alpha}}}
	\def\trentaquattro{\DT{[\pf{A}{\alpha \conc x}]}{{\Pi}_2}{\pf{A}{s\oplus 
				<1,x>}}}
	\def\trentacinque{\prova{\trentatre\ \trentaquattro}{\pf{A}{\alpha}}{}}
	\def\trentacinquebis{\prova{\trentatre\
			\trentaquattro}{\pf{A}{s\oplus(t\oplus 1)}}{}}
	\def\trentasei{\prova{\trentatre\
			\trentaquattro}{\pf{A}{\alpha \conc \beta}}{}}
	\def\trentasette{\DT{\trentasei}{\Pi_2\rep{\beta}{\gamma}}{\pf{A}{s\oplus(t\oplus 1)}}}
	\begin{tabular}{ll}
		$ \quattro \quad \rhd\quad  \uno $
		&\hspace{2ex}
		$ \cinque  \quad\rhd \quad \due  $
		\\ & \\
		$\nove  \quad\rhd \quad \dieci$
		&\hspace{2ex}
		$\dodici \quad\rhd \quad \tredici $
		\\ & \\
		$\diciassette  \quad\rhd \quad\diciotto $
		&\hspace{2ex}
		\\ & \\
		$\venticinque \quad\rhd \quad\ventisei$
		&\hspace{2ex}
		$\trenta  \quad\rhd \quad\trentadue $
	\end{tabular}
	
	\vspace{1ex}
	
	\item[Proper contractibility for $\iKK$, $\iK4$ systems] 
	The same propositional contractions of the previous systems; the modal ones are adapted as follows. 
	 \mbox{}\\\ \\
	\def\uno{\DT{}{{\Pi}_1}{\pf{A}{\alpha}}}
	\def\due{\DT{}{{\Pi}_2}{\pf{B}{\alpha}}}
	\def\tre{\prova{\uno\quad\due \ }{\pf{A\land B}{\alpha}}{ }}
	\def\quattro{\prova{\tre}{\pf{A}{\alpha}}{ }}
	\def\cinque{\prova{\tre}{\pf{B}{\alpha}}{ }}
	\def\sei{\prova{\uno\ }{\pf{A\lor B}{\alpha}}{ }}
	\def\sette{\DT{[\pf{A}{\alpha}]}{{\Pi}_2}{\pf{C}{\beta}}}
	\def\otto{\DT{[\pf{B}{\alpha}]}{{\Pi}_3}{\pf{C}{\beta}}}
	\def\nove{\prova{\sei\ \sette\ \otto\ }{\pf{C}{\beta}}{ }}
	\def\dieci{\DT{\uno}{{\Pi}_2}{\pf{C}{\beta}}}
	\def\duebis{\DT{}{{\Pi}_1}{\pf{B}{\alpha}}}
	\def\undici{\prova{\duebis}{\pf{A\lor B}{\alpha}}{ }}
	\def\dodici{\prova{\undici\ \sette\ \otto\ }{\pf{C}{\beta}}{ }}
	\def\tredici{\DT{\duebis}{{\Pi}_3}{\pf{C}{\beta}}}
	\def\quattordici{\DT{[\pf{A}{\alpha}]}{{\Pi}_1}{\pf{B}{\alpha}}}
	\def\quindici{\prova{\quattordici}{\pf{A\to B}{\alpha}}{ }}
	\def\sedici{\DT{}{{\Pi}_2}{\pf{A}{\alpha}}}
	\def\diciassette{\prova{\quindici\ \sedici}{\pf{B}{\alpha}}{ }}
	\def\diciotto{\DT{\sedici}{{\Pi}_1}{\pf{B}{\alpha}}}
	\def\venti{\DT{}{\Pi}{\pf{A}{s\oplus 1}}}
	\def\ventuno{\prova{\venti}{\pf{ A}{\alpha}}{}}
	\def\ventidue{\prova{\ventuno}{\pf{A}{s\oplus 1}}{}}
\def\ventitre{\DT{[\ext{\alpha\conc x}]}{\Pi}{\pf{A}{\alpha \conc x}  }}
\def\ventiquattro{\prova{\ventitre}{\pf{\Box A}{\alpha}}{ } \ext{\alpha\conc\beta}  }
	\def\venticinque{\prova{\ventiquattro}{\pf{A}{\alpha \conc \beta}}{ }}
\def\ventisei{\DT{\ext{\alpha\conc\beta}}{{\Pi}\rep{\alpha\conc x}{\alpha\conc \beta}}{\pf{A}{\alpha \conc \beta}}}
	\def\ventisette{\DT{}{{\Pi}_1}{\pf{A}{\alpha \conc \beta}}}
\def\ventotto{\prova{\ventisette \ext{\alpha\conc \beta}}{\pf{\Diamond A}{\alpha}}{ }}
\def\ventinove{\DT{[\pf{A}{\alpha \conc x}] \quad [\ext{\alpha\conc x}]}{\!{\Pi}_2}{\pf{C}{\gamma}}}	\def\trenta{\prova{\ventotto\ \ventinove}{\pf{C}{\gamma}}{ }}
	\def\trentuno{\DT{}{{\Pi}_1}{\pf{A}{\alpha \conc \beta}}}
\def\trentadue{\DT{\trentuno  
		\ext{\alpha\conc\beta}
	}{\!\!\!{\Pi}_2\rep{\alpha\conc x}{\alpha\conc\beta}}{\pf{C}{\gamma}}}
	\def\trentatre{\DT{}{{\Pi}_1}{\pf{A}{\alpha}}}
	\def\trentaquattro{\DT{[\pf{A}{\alpha \conc x}]}{{\Pi}_2}{\pf{A}{s\oplus 
				<1,x>}}}
	\def\trentacinque{\prova{\trentatre\ \trentaquattro}{\pf{A}{\alpha}}{}}
	\def\trentacinquebis{\prova{\trentatre\
			\trentaquattro}{\pf{A}{s\oplus(t\oplus 1)}}{}}
	\def\trentasei{\prova{\trentatre\
			\trentaquattro}{\pf{A}{\alpha \conc \beta}}{}}
	\def\trentasette{\DT{\trentasei}{\Pi_2\rep{\beta}{\gamma}}{\pf{A}{s\oplus(t\oplus 1)}}}
	\mbox{\\}
	\bigskip
		{$\venticinque \quad\rhd \quad\ventisei$}
		\\
		$\trenta  \quad\rhd \quad\trentadue $

\end{description}

\subsection{Commutative contractions}
	In this subsection, we denote by 
$$\centerline{\raisebox{4ex}{$\prova{\DT{}{\Pi_1}{\pf{C}{\beta}}\quad
			\Pi_2}{\pf{D}{\gamma}}{\ R}$}}$$
\noindent   a deduction ending with an  elimination 
rule $R$ whose major premiss  is formula
$\pf{C}{\beta}$.
	We further extend the relation $\rhd$ by adding the
following \textit{commutative contractions}:
\begin{description}
	\item[Commutative contractions for $\iT$, $\iD$, \rosso{$\iDQ$}, $\iS4$ systems] \mbox{}\\\ \\
	\newcommand{\sureg}[2]{\prova{ #1\quad {{\Pi}_{#2}}}{\pf{D}{\gamma}}{\ R}}
	\newcommand{\surego}[2]{\prova{ #1\quad {{\Pi}_{#2}}}{\pf{A}{\alpha}}{\ R}}
	\def\quaranta{\DT{}{{\Pi}_1}{\pf{A\lor B}{\alpha}}}
	\def\quarantuno{\DT{[\pf{A}{\alpha}]}{{\Pi}_2}{\pf{C}{\beta}}}
	\def\quarantadue{\DT{[\pf{B}{\alpha}]}{{\Pi}_3}{\pf{C}{\beta}}}
	\def\quarantatre{\prova{\quaranta\ \quarantuno\ 
			\quarantadue}{\pf{C}{\beta}}{}}
	\def\quarantaquattro{\sureg{\quarantatre}{4}}
	\def\quarantacinque{\sureg{\quarantuno}{4}}
	\def\quarantasei{\sureg{\quarantadue}{4}}
	\def\quarantasette{\prova{\quaranta\ \quarantacinque\
			\quarantasei}{\pf{D}{\gamma}}{ }}
	$$\quarantaquattro  \quad\rhd\quad \quarantasette$$
	\def\cinquanta{\DT{}{{\Pi}_1}{\pf{\Diamond A}{\alpha}}}
	\def\cinquantuno{\DT{[\pf{A}{\alpha \conc x}]}{{\Pi}_2}{\pf{C}{\beta}}}
	\def\cinquantadue{\prova{\cinquanta\ \cinquantuno}{\pf{C}{\beta}}{}}
	\def\cinquantatre{\sureg{\cinquantadue}{3}}
	\def\cinquantaquattro{\sureg{\cinquantuno}{3}}
	\def\cinquantacinque{\prova{\cinquanta\ \cinquantaquattro}{\pf{D}{\gamma}}{}}
	\def\cento{\DT{}{{\Pi}_1}{\pf{\Diamond B}{t}}}
	\def\centouno{\DT{[\pf{B}{t\oplus x}]}{{\Pi}_2}{\pf{C}{u}}}
	\def\centodue{\prova{\cento\ \centouno}{\pf{C}{u}}{}}
	\def\centotre{\surego{\centodue}{3}}
	\def\centoquattro{\surego{\centouno}{3}}
	\def\centocinque{\prova{\cento\ \centoquattro}{\pf{A}{\alpha}}{}}
	\def\duecento{\DT{}{{\Pi'}_1}{\pf{\Diamond B}{t}}}
	\def\duecentodue{\prova{\duecento\ \centouno}{\pf{C}{u}}{}}
	$$\cinquantatre \quad\rhd\quad \cinquantacinque$$
	\item[Commutative contractions for $\iKK$, $\iK4$ systems] 
	The same propositional commutative contractions of the previous systems; the modal ones are adapted as follows. 
	\mbox{}\\\ \\
	\def\quaranta{\DT{}{{\Pi}_1}{\pf{A\lor B}{\alpha}}}
	\def\quarantuno{\DT{[\pf{A}{\alpha}]}{{\Pi}_2}{\pf{C}{\beta}}}
	\def\quarantadue{\DT{[\pf{B}{\alpha}]}{{\Pi}_3}{\pf{C}{\beta}}}
	\def\quarantatre{\prova{\quaranta\ \quarantuno\ 
			\quarantadue}{\pf{C}{\beta}}{}}
	\def\quarantaquattro{\sureg{\quarantatre}{4}}
	\def\quarantacinque{\sureg{\quarantuno}{4}}
	\def\quarantasei{\sureg{\quarantadue}{4}}
	\def\quarantasette{\prova{\quaranta\ \quarantacinque\
			\quarantasei}{\pf{D}{\gamma}}{ }}
	\def\cinquanta{\DT{}{{\Pi}_1}{\pf{\Diamond A}{\alpha}}}
	\def\cinquantuno{\DT{[\pf{A}{\alpha \conc x}]\quad 
			[\ext{\alpha\conc x}]}{{\Pi}_2}{\pf{C}{\beta}}}
	\def\cinquantadue{\prova{\cinquanta\ \cinquantuno}{\pf{C}{\beta}}{}}
	\def\cinquantatre{\sureg{\cinquantadue}{3}}
	\def\cinquantaquattro{\sureg{\cinquantuno}{3}}
	\def\cinquantacinque{\prova{\cinquanta\ \cinquantaquattro}{\pf{D}{\gamma}}{}}
	\def\cento{\DT{}{{\Pi}_1}{\pf{\Diamond B}{t}}}
	\def\centouno{\DT{[\pf{B}{t\oplus x}]}{{\Pi}_2}{\pf{C}{u}}}
	\def\centodue{\prova{\cento\ \centouno}{\pf{C}{u}}{}}
	\def\centotre{\surego{\centodue}{3}}
	\def\centoquattro{\surego{\centouno}{3}}
	\def\centocinque{\prova{\cento\ \centoquattro}{\pf{A}{\alpha}}{}}
	\def\duecento{\DT{}{{\Pi'}_1}{\pf{\Diamond B}{t}}}
	\def\duecentodue{\prova{\duecento\ \centouno}{\pf{C}{u}}{}}
	$$\cinquantatre \quad\rhd\quad \cinquantacinque$$
	
\end{description}

\begin{remark} It is easy to verify that  contractions transform 
deductions into deductions.  Furthermore, they all preserve the position 
condition.
\end{remark}

\begin{definition}[Reducibility between Deductions]\label{def:red}
  \begin{enumerate}\mbox{}
\item  The relation $\red$ of \textit{immediate reducibility}  between 
deductions is the ``context closure'' of $\rhd$, defined
   as follows: $\Pi_1\red \Pi_2$ if and only if there exist deductions  
$\Pi_3$ and
   $\Pi_4$ such that $\Pi_3\rhd\Pi_4$ and $\Pi_2$ is obtained by
   replacing $\Pi_3$ with $\Pi_4$ in $\Pi_1$.
  \item The relation  $\rred$ of \textit{reducibility} is the transitive 
and reflexive closure of $\red$.
\end{enumerate}
\end{definition}

\subsection{Normalization}
The results of the following section apply to all the previously introduced, intuitionistic systems.

\begin{definition}[Normal forms and normalizable deductions]\label{def:nfnorm} A deduction $\Pi$ is
\begin{enumerate}
\item in \textit{normal form}
   if there is no deduction $\Pi'$ such that $\Pi\red\Pi'$;
\item \textit{normalizable} if there is a deduction $\Pi'$ s.t. $\Pi\rred \Pi'$ and $\Pi'$ is in normal form.
\end{enumerate}
\end{definition}

\begin{definition}[Segments and Endsegments]\label{def:seg}
	Let $\pf{A}{\alpha}$ be a p-formula.
	\begin{enumerate}
		\item A finite sequence $({\pf{A}{\alpha}}_i)_{i\leq m} $ of  occurrences of  $\pf{A}{\alpha}$ in a
		deduction $\Pi$ is a \textit{segment} (of length $m+1$) if:
		\begin{enumerate}
			\item ${\pf{A}{\alpha}}_0$ is  not a conclusion of $\lor E$ or $\Diamond E$;
			\item  ${\pf{A}{\alpha}}_m$ is not a minor premiss of $\lor E$ or $ \Diamond 
			E$;
			\item for all $i \lt m$, ${\pf{A}{\alpha}}_i$ is a minor premiss of $\lor E$
			or $ \Diamond E$ with conclusion ${\pf{A}{\alpha}}_{i+1}$
		\end{enumerate}		
		\item A segment in a deduction is an \textit{endsegment} if its last 
		formula is the last formula of  the deduction.
	\end{enumerate}
	
\end{definition}

	We will denote segments with $\sigma$, possibly indexed.
	When we want to highlight  that a segment is made of occurrences of a formula $\pf{A}{\alpha}$ we will write $\sigma[{\pf{A}{\alpha}}]$. With $|\sigma|$ we denote the length of the segment $\sigma$.

Given a deduction $\ded{\Pi}{R}{\pf{A}{\alpha}},$ with little abuse of 
language  we will say that a deduction $\Pi'$
   is a  \textit{(main) premiss of rule} $R$ to mean that
$\Pi'$ is a sub-deduction of $\Pi$ whose end-formula is a (main)  premiss 
of the displayed application of $R.$

\begin{definition}[Degree of a formula]\label{def:degree}
	\begin{enumerate}
		\item
		The \textit{degree}  $\deg(A)$  of a modal formula $A$  is
		recursively  defined as:
		
		\begin{enumerate}
			\item $\deg(p) = 0$ if $p$ is a proposition symbol;
			\item $\deg(\lnot A) = \deg(\Box A) = \deg(\Diamond A) = \deg(A)+1$;
			\item $\deg(A \land B) = \deg(A\lor B) = \deg(A\to B) = \max\{\deg(A),\deg(B)\}+1$.
		\end{enumerate}
		\item The \textit{degree} $\deg(\pf{A}{\alpha})$ of formula $\pf{A}{\alpha}$ is just $\deg(A).$
	\end{enumerate}
\end{definition}

\begin{definition}[Major/Minor Premisses and Conclusions]
	Let $\sigma[{\pf{A}{\alpha}}]={\pf{A}{\alpha}}_0 \ldots \pf{A}{\alpha}_m$
	and let $R$ be  a segment and an instance of a deduction rule  in  $\Pi$, respectively. We say that:
	\begin{itemize}
		\item $\sigma$ is the  \textit{(major/minor) premiss} of $R$, if $ {\pf{A}{\alpha}}_m$ is  the  (major/minor) premiss of $R$;
		\item $\sigma$ is  \textit{conclusion} of $R$, if $ {\pf{A}{\alpha}}_0$ is  the  conclusion of $R$.
	\end{itemize}
\end{definition}

With $\delta(\sigma[\pf{A}{\alpha}])= d(A)$ we denote the \textit{degree} of the segment $\sigma[\pf{A}{\alpha}]$.

\begin{definition}[cut] 
	\begin{enumerate}
		\item A \textit{cut} in a derivation $\Pi$ is a segment $\sigma$  which is conclusion of an introduction rule $I*$  of a connective $*$, and principal premiss of an elimination rule $E*$ of the same connective.
		\item A  cut $\sigma$ in $\Pi$ is  \textit{maximal}  if $\delta(\sigma)=\max\{\delta(\sigma') : \sigma' \mbox{\ is a cut in\ } \Pi \}$.
		\item A \textit{(maximal) cut formula} is a (maximal) cut segment of length 1.
	\end{enumerate}
\end{definition}

Let $C[\Pi]$ be the set of  cuts  of $\Pi$. 
For the normalization theorem we will use the lexicographic ordering between pairs of natural numbers\footnote{ 
$(n,m)\lt (p,q)$ if either $n\lt p$ or ($n=p$ and $m\lt q$) }.

\begin{theorem}[normalization]
For each derivation $\Pi$ there exists a derivation $\Pi'$ s.t. $\Pi\rred \Pi'$ and $\Pi'$ is in normal form.
\end{theorem}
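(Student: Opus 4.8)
The plan is to establish this \emph{weak} normalization statement by exhibiting a terminating reduction strategy, following the classical Prawitz argument transposed to position-formulas (the contractions $\rhd$ and the notions of cut and segment having already been set up above). To each derivation $\Pi$ I attach a pair $(d,n)$: here $d=\max\{\delta(\sigma):\sigma\in C[\Pi]\}$ is the degree of its maximal cuts, and $n=\sum_{\sigma\in C[\Pi],\,\delta(\sigma)=d}|\sigma|$ is the total length of all cuts $\sigma$ of degree $d$; when $C[\Pi]=\emptyset$ we set $(d,n)=(0,0)$. I then argue by induction on $(d,n)$ in the lexicographic ordering fixed just before the statement. If $C[\Pi]=\emptyset$ then $\Pi$ is already in normal form and there is nothing to prove. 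Otherwise I locate one maximal cut, apply to it a single contraction (so that $\Pi\red\Pi'$), verify that $(d',n')<(d,n)$ for the result $\Pi'$, and conclude from the induction hypothesis that $\Pi'$, and hence $\Pi$, reduces to a normal form.

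The crux is \emph{which} maximal cut to reduce. I would select a maximal cut $\sigma$ of degree $d$ that is topmost-and-rightmost, meaning that no other degree-$d$ cut occurs in the region of $\Pi$ that the contraction of $\sigma$ duplicates or rewrites: the subderivations above the top of $\sigma$, the minor-premiss subderivations of the elimination rule at the foot of $\sigma$, and---when $\sigma$ threads through $\lor E$ or $\Diamond E$---the side branches of those rules. Such a $\sigma$ exists because $C[\Pi]$ is finite and ``lies in the duplication region of'' is well-founded. Two cases then arise. If $|\sigma|=1$, then $\sigma$ is a cut formula, i.e.\ the conclusion of an introduction rule and the major premiss of the matching elimination, and I apply the corresponding proper contraction for $\land$, $\lor$, $\to$, $\Box$, or $\Diamond$. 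If $|\sigma|>1$, its last occurrence is the major premiss of an elimination rule $R$ while its interior occurrences are conclusions of $\lor E$/$\Diamond E$; here I apply the commutative contraction that permutes $R$ above the lowest such $\lor E$/$\Diamond E$, which strictly shortens $\sigma$.

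For the decrease of the measure I would argue as follows. A proper contraction on a length-$1$ maximal cut deletes that cut; any new cut it introduces---where the conclusion of a plugged-in minor-premiss derivation now meets an elimination rule---has degree strictly below $d$, since its formula is an immediate subformula of the contracted one (Definition~\ref{def:degree}); moreover, by the topmost-rightmost choice, the duplicated subderivations carry no degree-$d$ cut. Hence $n$ strictly drops, and $d$ itself drops once the last degree-$d$ cut disappears. A commutative contraction leaves $d$ fixed and removes one occurrence from $\sigma$; again by the choice of $\sigma$, the material it duplicates (the side premisses of $R$ together with the sibling $\lor E$/$\Diamond E$ branch) contains no degree-$d$ cut, so the net effect on $n$ is a strict decrease. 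In either case $(d',n')<(d,n)$.

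The hard part is precisely the interaction of duplication with the segment structure induced by $\lor E$ and, characteristically for this paper, by $\Diamond E$: I must check carefully that permuting $R$ past a $\Diamond E$ neither lengthens another maximal segment nor promotes a sibling derivation to a fresh degree-$d$ cut, which is exactly what the topmost-rightmost selection is designed to prevent. The second genuinely modal point is that the $\Box$ and $\Diamond$ proper contractions rewrite positions through the substitution $\Pi\rep{\alpha\conc x}{\alpha\conc\beta}$, and in $\iKK$/$\iK4$ they must in addition correctly discharge the auxiliary assumptions $\ext{\alpha\conc x}$. I would dispatch this by invoking Remark~\ref{subst} and the convention fixed immediately after it: position renaming preserves the position condition and, touching neither the logical skeleton of a derivation nor the degree $\delta$ of any segment, leaves $d$, $n$, and the entire cut structure---apart from the eliminated cut---unchanged. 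Granting these bookkeeping facts, the measure argument runs exactly as in the first-order case and the induction terminates, producing the required $\Pi'$ in normal form.
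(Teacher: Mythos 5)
Your proposal is correct and follows essentially the same route as the paper: the identical rank $\#[\Pi]=(d,n)$ with lexicographic induction, selection of a maximal cut whose reduction region contains no other degree-$d$ cut, commutative contractions to shorten segments, then the proper contraction, with Remark~\ref{subst} handling the position-substitution bookkeeping. If anything, your articulation is slightly more careful than the paper's sketch---you reduce one contraction at a time and spell out explicitly that the duplication region (minor-premiss subderivations and $\lor E$/$\Diamond E$ side branches, not just the subderivation above the cut) must be free of degree-$d$ cuts, which is exactly what the measure decrease requires.
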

\begin{proof}
The proof is on well ordering induction on pairs $(d,n)$ of natural numbers.
We associate to each derivation $\Pi$ a pair (called \textit{rank}) $\#[\Pi]=(d,n)$ s.t.
\begin{itemize}
	\item $d= max\{ \delta(\sigma): \sigma\in C[\Pi]  \}$;
	\item $n=\sum_{\sigma\in C[\Pi], \delta(\sigma)=d} |\sigma|$. 
\end{itemize}
We then prove the following \textbf{claim}:
$$
\#[\Pi] \gt (0,0) \Rightarrow \exists \Pi'(\Pi \rred \Pi'\ \& \  \#[\Pi'] \lt \#[\Pi])
.
$$
\begin{enumerate}
	\item Let us suppose that $\#[\Pi] \gt (0,0) $;
	\item pick a maximal cut $\sigma$ in $\Pi$ s.t. the sub-derivation $\Pi*$ ending with $\sigma$ (i.e. ending with the last occurrence of $\sigma$ ) does not contain any other maximal cut segment;
	\item perform all possible {commutative contractions} with respect to the segment under consideration;
	\item perform the relevant contraction.
\end{enumerate}
The resulting derivation $\Pi'$ has a smaller rank w.r.t $\Pi$ i.e. $\#[\Pi'] \lt \#[\Pi]$.

Using the \textbf{claim}, since the lexicographic order is well founded, for each derivation $\Pi$ there exists a derivation $\Pi'$ s.t. $\Pi\rred \Pi'$ and $\#[\Pi']=(0,0)$, i.e. the thesis.
\end{proof}

\section{Consequences of normalization}

Let us denote with $\class$ one of the previously stated classical systems, and with $\int$ the corresponding intuitionistic system.

\begin{definition}[Spine]\label{spinedef}
	A finite sequence $(\pf{A_i}{\alpha_i})_{i\leq m} $ of formulas in a
	deduction is a \textit{spine} if:
	\begin{enumerate}
		\item for all $i\lt m$, $\pf{A_i}{\alpha_i}$ is immediately above
		$\pf{A_{i+1}}{s_{i+1}}$;
		\item $\pf{A_m}{\alpha_m}$ is the end-formula  of the deduction;
		\item $\pf{A_0}{\alpha_0}$ is an assumption (either discharged or undischarged);
		\item for all $i \lt m$, $\pf{A_i}{\alpha_i}$ is one of the following:
		\begin{enumerate}
			\item main premiss of some  elimination rule;
			\item premiss of some  introduction rule;
			\item premiss of an application of  $\bot_i$ rule.
		\end{enumerate}
	\end{enumerate}
\end{definition}

Spines in normal deductions have a nice structure.  It is easy to prove the
following:

\begin{proposition}\label{spinestr} A  spine $(\pf{A_i}{\alpha_i})_{i\leq n}$ in a normal deduction 
	can be divided into three subsequences:
	\begin{enumerate}
		
		\item an elimination sequence $(\pf{A_i}{\alpha_i})_{i\leq m}$ where each 
		$\pf{A_i}{\alpha_i},$ $i\lt m,$ is main premiss of some elimination rule;
		
		\item a minimum sequence $(\pf{A_i}{s_i})_{m\lt  i\leq m+k}$ where each 
		$\pf{A_i}{\alpha_i},$ $m\lt  i\lt  m+k$ is premiss of $\bot_i;$
		
		\item an introduction  sequence $(\pf{A_i}{\alpha_i})_{m+k\leq i\leq n}$ where each 
		$\pf{A_i}{\alpha_i},$ $m+k\lt  i\lt n$ is  premiss of some introduction rule.
	\end{enumerate}
	
	In particular, in a normal deduction whose last rule is not an
	introduction there is a unique spine. The spine does not contain the introduction  sequence.

\end{proposition}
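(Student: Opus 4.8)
The plan is to walk along the spine $(\pf{A_i}{\alpha_i})_{i\leq n}$ from the top assumption $\pf{A_0}{\alpha_0}$ down to the end-formula $\pf{A_n}{\alpha_n}$ and to classify each interior occurrence $\pf{A_i}{\alpha_i}$ (with $i<n$) by the \emph{role} it plays with respect to the rule immediately below it. We work in the intuitionistic systems, so $\bot_c$ is absent and clause~4 of Definition~\ref{spinedef} lists all admissible interior roles: $\pf{A_i}{\alpha_i}$ is the main premiss of an elimination (role $\mathsf{E}$), the premiss of an introduction (role $\mathsf{I}$), or the premiss of $\bot_i$ (role $\mathsf{M}$). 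If I order the roles by $\mathsf{E}<\mathsf{M}<\mathsf{I}$, then the claimed three-part decomposition is \emph{exactly} the assertion that the sequence of roles read along the spine is non-decreasing: the $\mathsf{E}$-block is the elimination sequence, the $\mathsf{M}$-block the minimum sequence, and the $\mathsf{I}$-block the introduction sequence, with $m$ and $m+k$ as the two break-points (any block possibly empty). Hence it suffices to rule out the three \emph{descending} one-step transitions $\mathsf{M}\to\mathsf{E}$, $\mathsf{I}\to\mathsf{E}$ and $\mathsf{I}\to\mathsf{M}$.

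First I would record the shape of each role. If $\pf{A_i}{\alpha_i}$ has role $\mathsf{M}$ then it is $\pf{\bot}{\alpha_i}$ and its conclusion $\pf{A_{i+1}}{\alpha_{i+1}}$ is \emph{atomic}; if it has role $\mathsf{I}$ then $\pf{A_{i+1}}{\alpha_{i+1}}$ is the conclusion of an introduction, hence a \emph{compound} formula whose principal connective $*$ is the one just introduced; if it has role $\mathsf{E}$ then $A_i$ itself is compound. Two of the exclusions are then purely structural: $\mathsf{M}\to\mathsf{E}$ is impossible because an atomic formula is never the main premiss of an elimination (all such premisses --- conjunctions, implications, disjunctions, $\Box$- and $\Diamond$-formulas --- are compound), and $\mathsf{I}\to\mathsf{M}$ is impossible because a compound formula cannot be the $\bot$ demanded by $\bot_i$. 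The only case with real content is $\mathsf{I}\to\mathsf{E}$: here $\pf{A_{i+1}}{\alpha_{i+1}}$ would be at once the conclusion of a $*$-introduction and the main premiss of an elimination, and since its principal connective forces that elimination to be the $*$-elimination, this occurrence is a cut. A cut is a redex for the proper-contraction relation $\rhd$, and by Definition~\ref{def:red} the relation $\red$ is the context closure of $\rhd$, so $\Pi\red\Pi'$ for some $\Pi'$, contradicting that $\Pi$ is normal (Definition~\ref{def:nfnorm}). This establishes monotonicity and hence the decomposition, the interior formulas of the minimum sequence all being occurrences of $\bot$.

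For the ``in particular'' part, assume the last rule of the normal deduction is not an introduction. Then $\pf{A_{n-1}}{\alpha_{n-1}}$ does not have role $\mathsf{I}$, so by monotonicity no interior occurrence has role $\mathsf{I}$ and the introduction sequence is empty. Uniqueness I would obtain by reconstructing the spine upward from the end-formula: at each occurrence the producing rule is either an elimination, whose main premiss is uniquely determined, or $\bot_i$, whose premiss $\bot$ is uniquely determined; it can never be an introduction, since introductions produce compound conclusions and, the introduction sequence being empty, we never leave the $\mathsf{E}/\mathsf{M}$ regime. This upward walk is therefore deterministic and terminates at a single top assumption, so the spine is unique.

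The step I expect to be delicate is the $\mathsf{I}\to\mathsf{E}$ exclusion, and specifically its $\lor$ and $\Diamond$ sub-cases: there the offending occurrence sits as the \emph{major} premiss of $\lor E$ or $\Diamond E$, so to call it a cut one must invoke the segment-based notion of cut (the occurrence being a length-one segment that is the conclusion of $\lor I$/$\Diamond I$ and the major premiss of $\lor E$/$\Diamond E$) rather than a naive single-occurrence redex. Everything else reduces to the elementary atomic-versus-compound bookkeeping recorded above, which is why the two remaining transitions need no appeal to normality at all.
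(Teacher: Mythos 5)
Your proof is correct, and it is precisely the standard Prawitz-style argument the paper has in mind when it states the proposition without proof (``It is easy to prove the following''): classify interior spine occurrences by role, exclude the descending transitions $\mathsf{M}\to\mathsf{E}$ and $\mathsf{I}\to\mathsf{M}$ on purely structural (atomic vs.\ compound) grounds, and exclude $\mathsf{I}\to\mathsf{E}$ because it exhibits a length-one cut segment contradicting normality. Your closing remark on the $\lor$/$\Diamond$ sub-cases --- that the offending occurrence must be read as a length-one segment in the sense of Definition~\ref{def:seg} so that the paper's segment-based notion of cut and the proper contractions apply --- correctly identifies the only delicate point, so nothing is missing.
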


As an immediate consequence we have the following Consistency Theorem:
\begin{theorem}[Consistency]
For each position $\alpha$, $\not\vdash_{\int} \pf{\bot}{\alpha} $.
\end{theorem}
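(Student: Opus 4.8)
The plan is to derive consistency as an immediate corollary of the normalization theorem together with the structural analysis of spines in Proposition~\ref{spinestr}. Suppose, for contradiction, that $\vdash_{\int} \pf{\bot}{\alpha}$ for some position $\alpha$; that is, there is a closed deduction $\Pi$ (no undischarged assumptions) with conclusion $\pf{\bot}{\alpha}$ in one of the intuitionistic systems $\int$. By the normalization theorem I may assume that $\Pi$ is in normal form, i.e. $\#[\Pi]=(0,0)$, so that $\Pi$ contains no cuts. The goal is to show that no such normal closed deduction can exist.

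First I would observe that the last rule of $\Pi$ cannot be an introduction rule: the conclusion is $\pf{\bot}{\alpha}$, and $\bot$ is not the principal formula of any introduction rule in the intuitionistic systems (there is no introduction rule for $\bot$, and in the intuitionistic fragment $\bot_c$ has been dropped). Hence, by the final assertion of Proposition~\ref{spinestr}, $\Pi$ has a \emph{unique} spine $(\pf{A_i}{\alpha_i})_{i\leq n}$ which contains no introduction sequence, and whose bottom formula $\pf{A_n}{\alpha_n}=\pf{\bot}{\alpha}$ is the end-formula. Its top formula $\pf{A_0}{\alpha_0}$ is an assumption, and since $\Pi$ is closed this assumption must be discharged somewhere along the deduction. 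The spine therefore consists only of the elimination sequence followed (possibly) by a minimum sequence of $\bot_i$-premisses.

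The key step is to trace the top of the spine. Because there is no introduction sequence, $\pf{A_0}{\alpha_0}$ is main premiss of an elimination rule, so it is not atomic unless it is already $\bot$; walking down the elimination sequence the degree of the principal formula strictly decreases at each elimination step, so the spine must eventually reach a formula that is the main premiss of $\bot_i$ (the only way to reach $\bot$ as conclusion without an introduction), and its top assumption $\pf{A_0}{\alpha_0}$ must be a formula genuinely available as an open or discharged hypothesis. Since the deduction is closed, every leaf of the spine is discharged by some rule lower in the same branch; but a discharging rule ($\to I$, $\lor E$, $\Diamond E$, or the discharge of an existence predicate in the partial systems) is itself either an introduction rule or introduces a minor-premiss branching, neither of which can occur on the (unique, introduction-free) spine. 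This is the main obstacle: I must verify carefully, by the case analysis on which elimination rule sits at the bottom of the spine, that no discharging rule can legitimately close the leading assumption, so that a closed normal deduction of $\pf{\bot}{\alpha}$ forces an open assumption — contradicting closedness.

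Finally, having reached a contradiction in every case, I conclude that no closed normal deduction of $\pf{\bot}{\alpha}$ exists; by normalization the same holds for arbitrary closed deductions, whence $\not\vdash_{\int}\pf{\bot}{\alpha}$ for every position $\alpha$. The classical case then follows, as promised in the introduction, by the double-negation translation of Section~\ref{Sect:doubleneg}, reducing classical consistency to the intuitionistic consistency just established.
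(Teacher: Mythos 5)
Your proposal is correct and follows essentially the same route as the paper, which presents consistency as an immediate consequence of normalization together with Proposition~\ref{spinestr}: a closed normal deduction of $\pf{\bot}{\alpha}$ cannot end in an introduction, so it has a unique introduction-free spine whose top assumption could only be discharged by $\to I$ (an introduction, absent from the spine) or above a minor premiss of $\lor E$/$\Diamond E$ (off the spine, which passes only through major premisses), contradicting closedness---exactly the argument you spell out. One inessential slip: the degree of spine formulas need not strictly decrease across $\lor E$/$\Diamond E$ steps, but your argument does not actually depend on that remark.
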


\subsection{A translation of the classical calculi into the intuitionistic ones}\label{Sect:doubleneg}
To obtain a syntactical proof of consistency for the classical systems, we adapt G{\"o}del's {\em double negation translation} to our setting.
As usual, $A\leftrightarrow B$ is an abbreviation for $(A\to
B)\land(B\to A).$

We inductively define a map $g$ between modal formulas as follows:
\begin{itemize}
	\item[] $g(\bot)=\bot;$
	\item[] $g(A)=\lnot\lnot A$ for atomic $A$ distinct from $\bot;$
	\item[] $g(A\lor B)=\lnot(\lnot g(A)\land \lnot g(B));$
	\item[] $g(A\sharp B)= g(A)\,\sharp\, g(B)$ when $\sharp$ is a binary connective
	distinct from $\lor;$
	\item[] $g(\Box A)=\Box g(A);$
	\item[] $g(\Diamond A)=\lnot\Box\lnot g(A);$
\end{itemize}

\begin{proposition}\label{aga} 
	For every modal formula $A$ and every position $\alpha$, 
	$$\vdash_{\class}\pf{(A\leftrightarrow g(A))}{\alpha}.$$
\end{proposition}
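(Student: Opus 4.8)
The plan is to prove $\vdash_{\class}\pf{(A\leftrightarrow g(A))}{\alpha}$ by induction on the structure of the modal formula $A$, since the double-negation translation $g$ is defined by recursion on formulas. Recall that $A\leftrightarrow g(A)$ abbreviates $(A\to g(A))\land(g(A)\to A)$, so for each case I must exhibit classical derivations in both directions and combine them with $(\land I)$. Throughout, the crucial feature that makes this work for \emph{classical} systems is the availability of the $\bot_c$ rule (\emph{reductio ab absurdum}), which supplies the double-negation eliminations $\lnot\lnot A\to A$ that the translation introduces at atoms. Since all rules act uniformly on a fixed position $\alpha$ for the propositional connectives, and the position $\alpha$ is arbitrary, I will carry $\alpha$ along as a parameter; the statement then holds for every $\alpha$.

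First I would treat the base cases. For $A=\bot$ we have $g(\bot)=\bot$, so $\pf{(\bot\leftrightarrow\bot)}{\alpha}$ is derivable using the identity derivation $\pf{\bot}{\alpha}\vdash\pf{\bot}{\alpha}$ in both directions and $(\to I)$, $(\land I)$. For atomic $A\neq\bot$, $g(A)=\lnot\lnot A$, and I must derive $\pf{(A\leftrightarrow\lnot\lnot A)}{\alpha}$: the direction $A\to\lnot\lnot A$ is intuitionistically provable (assume $A$ and $\lnot A$ at $\alpha$, apply $(\to E)$ to get $\pf{\bot}{\alpha}$, then discharge with $(\to I)$ twice), while the converse $\lnot\lnot A\to A$ is exactly where $\bot_c$ is needed (assume $\lnot\lnot A$, assume $\lnot A$, derive $\pf{\bot}{\alpha}$, then apply $\bot_c$ to conclude $\pf{A}{\alpha}$). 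For the inductive step on the binary connectives $\sharp\in\{\land,\to\}$ where $g$ commutes, I use the induction hypotheses $\pf{(B\leftrightarrow g(B))}{\alpha}$ and $\pf{(C\leftrightarrow g(C))}{\alpha}$ and the standard propositional congruence derivations showing that logically equivalent components yield logically equivalent compounds; the case of $\lor$, where $g(B\lor C)=\lnot(\lnot g(B)\land\lnot g(C))$, additionally requires a short classical argument using $\bot_c$ to recover the disjunction from its double-negated form, exactly as in the first-order Gödel translation.

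The two genuinely new cases are the modal ones. For $g(\Box B)=\Box g(B)$ I would use the induction hypothesis $\vdash_{\class}\pf{(B\leftrightarrow g(B))}{\beta}$ at a \emph{fresh} position $\beta=\alpha\conc x$: from an assumption $\pf{\Box B}{\alpha}$ I apply $(\Box E)$ to obtain $\pf{B}{\alpha\conc x}$, transport it to $\pf{g(B)}{\alpha\conc x}$ via the induction hypothesis instantiated at $\alpha\conc x$, and then apply $(\Box I)$ to conclude $\pf{\Box g(B)}{\alpha}$ (checking the eigenposition side condition $\alpha\conc x\notin\iniz{\Gamma}$, which holds because $x$ is fresh); the converse direction is symmetric. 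For $g(\Diamond B)=\lnot\Box\lnot g(B)$ I would first note, by Proposition~\ref{Prop:tot:weakcompl} (part 1), that $\Diamond$ and $\lnot\Box\lnot$ are interderivable, and then compose this with the induction hypothesis under the $\Box$; this is the case I expect to be the main obstacle, since it combines the classical $\Diamond/\Box$ duality with the propagation of the induction hypothesis through a negation and a modality, and one must take care that the side conditions on $\Box I$ and $\Diamond E$ (freshness of the proper position) are respected when the induction hypothesis is lifted to the inner position. Once all cases are assembled, each yields $\pf{(A\leftrightarrow g(A))}{\alpha}$, completing the induction.
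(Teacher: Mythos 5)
The paper states Proposition~\ref{aga} without proof, treating it as the routine adaptation of G\"odel's double-negation argument, and your structural induction---$\bot_c$ at doubly negated atoms and for recovering $\lor$ from its negative rendering, congruence derivations for $\land$ and $\to$, fresh-position $\Box E$/$\Box I$ with the eigenposition check for the $\Box$ case, and the duality $\vdash\pf{\Diamond B\leftrightarrow\lnot\Box\lnot B}{\alpha}$ of Proposition~\ref{Prop:tot:weakcompl}(1) for the $\Diamond$ case---is exactly that intended argument and is correct. The one point you should make explicit is that $\class$ also ranges over $\nKK$ and $\nK4$, where the modal rules carry existence premises: your modal cases go through verbatim once each $\Box E$ (resp.\ $\Diamond I$) is fed an assumption $[\ext{\alpha\conc x}]$ that is discharged by the subsequent $\Box I$ (resp.\ $\Diamond E$), following the pattern of Proposition~\ref{prop:partial:weakcomp}, whose item~1 supplies the $\Diamond/\lnot\Box\lnot$ equivalence for those systems in place of Proposition~\ref{Prop:tot:weakcompl}(1).
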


\begin{definition}[Negative Formulas] A modal formula is {\em negative} if it is constructed
	from $\bot$ or from atomic formulas by means of $\Box$, $\land$, 
	$\to$.
\end{definition}

\begin{lemma}\label{notnot} Let $A$ be a negative formula  constructed from doubly negated atomic formulas
	or  from $\bot.$ Then, for all positions $\alpha$ 
	$$\vdash_{\int}\pf{(A\leftrightarrow
		\lnot\lnot A)}{\alpha}.$$
\end{lemma}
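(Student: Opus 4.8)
The plan is to prove the nontrivial half, the \emph{stability} statement $\vdash_{\int}\pf{(\lnot\lnot A\to A)}{\alpha}$, for every negative $A$ of the stated form and every position $\alpha$, by induction on the structure of $A$. The converse $\vdash_{\int}\pf{(A\to\lnot\lnot A)}{\alpha}$ holds for \emph{every} formula (assume $\pf{A}{\alpha}$ and $\pf{\lnot A}{\alpha}$, get $\pf{\bot}{\alpha}$ by $\to E$, then close with two $\to I$), and the two implications combine by $\land I$ into the biconditional. Since the lemma quantifies over \emph{all} positions, the induction hypothesis is available at every position, which is exactly what the $\Box$ case will require.

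The base cases are $A=\bot$ and $A=\lnot\lnot p$ for an atom $p$. For $\bot$, apply the hypothesis $\pf{(\bot\to\bot)\to\bot}{\alpha}$ to the trivially derivable $\pf{\bot\to\bot}{\alpha}$. For a doubly negated atom, stability is the intuitionistic triple-negation law $\lnot\lnot\lnot C\to\lnot C$ instantiated at $C=\lnot p$. The inductive cases for $\land$ and $\to$ are the usual propositional arguments carried out entirely at the single position $\alpha$, since these connectives never change positions: for $B\land C$ one pushes the double negation onto each conjunct and applies the IH to $B$ and $C$; for $B\to C$ one uses only the stability of the consequent, assuming $\pf{B}{\alpha}$ and reducing the goal $\pf{C}{\alpha}$ through the IH for $C$ to $\pf{\lnot\lnot C}{\alpha}$, which follows from $\pf{\lnot\lnot(B\to C)}{\alpha}$ in the standard way.

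The real work, and the expected obstacle, is the case $A=\Box B$, where the position machinery enters; I treat first the total systems $\iT,\iD,\iDQ,\iS4$. To prove $\vdash_{\int}\pf{(\lnot\lnot\Box B\to\Box B)}{\alpha}$ I assume $u:\pf{\lnot\lnot\Box B}{\alpha}$ and build $\pf{\Box B}{\alpha}$ by $\Box I$, whose premiss is $\pf{B}{\alpha\conc x}$ for a fresh token $x$ (so $\alpha\conc x\notin\iniz{\Gamma}$ for the open assumptions $\Gamma$, here only $u$). By the IH for $B$ \emph{at the position} $\alpha\conc x$ it suffices to derive $\pf{\lnot\lnot B}{\alpha\conc x}$, so I assume $v:\pf{\lnot B}{\alpha\conc x}$ and aim at $\pf{\bot}{\alpha\conc x}$. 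From $v$ I first produce $\pf{\lnot\Box B}{\alpha}$: assuming $w:\pf{\Box B}{\alpha}$, the single-token instance of $\Box E$ (already used to derive the \textbf{K} axiom in Proposition~\ref{Prop:tot:weakcompl}, hence legal in each of these systems) yields $\pf{B}{\alpha\conc x}$, which with $v$ gives $\pf{\bot}{\alpha\conc x}$; since $\alpha\neq\alpha\conc x$, the rule $\bot_i$ moves this to $\pf{\bot}{\alpha}$, and $\to I$ discharging $w$ gives $\pf{\lnot\Box B}{\alpha}$. Feeding this to $u$ by $\to E$ produces $\pf{\bot}{\alpha}$, and one more $\bot_i$ returns to $\pf{\bot}{\alpha\conc x}$. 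Discharging $v$ yields $\pf{\lnot\lnot B}{\alpha\conc x}$, the IH gives $\pf{B}{\alpha\conc x}$, and $\Box I$ (freshness of $x$ holds since the only open assumption $u$ sits at $\alpha$) concludes $\pf{\Box B}{\alpha}$; a final $\to I$ discharges $u$. The delicate points are precisely the two crossings of positions by $\bot_i$ and the freshness bookkeeping for $x$.

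For the partial systems $\iKK$ and $\iK4$ the same skeleton works once the existence predicate is threaded through, in the spirit of the derivations of Proposition~\ref{prop:partial:weakcomp}. There $\Box E$ carries the side premiss $\ext{\alpha\conc x}$ and $\Box I$ discharges an assumption $[\ext{\alpha\conc x}]$, so I would keep $\ext{\alpha\conc x}$ open throughout the inner derivation (it is what licenses the $\Box E$ step extracting $\pf{B}{\alpha\conc x}$) and let the concluding $\Box I$ discharge it while introducing the box. No further change is needed, and checking the remaining side conditions---nonemptiness or singletonness of the extracted position, freshness of $x$, and the $\alpha\neq\alpha\conc x$ proviso of $\bot_i$---is routine. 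I expect this $\Box$ case, keeping the positions and (for $\iKK$/$\iK4$) the existence predicates correctly aligned, to be the only nonroutine part of the argument.
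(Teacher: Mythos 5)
Your proof is correct, but in the key $\Box$ case it takes a genuinely different route from the paper's. Where you build the derivation of $\pf{(\lnot\lnot\Box B\to\Box B)}{\alpha}$ directly---assuming $\pf{\lnot B}{\alpha\conc x}$, refuting $\pf{\Box B}{\alpha}$ via a single-token instance of $\Box E$, shuttling $\bot$ between $\alpha\conc x$ and $\alpha$ with $\bot_i$, and closing with the IH at $\alpha\conc x$ followed by $\Box I$---the paper factors the same work through $\Diamond$: it first uses the IH in the form $\vdash_{\int}\pf{(\Box B\leftrightarrow\Box\lnot\lnot B)}{\alpha}$ to reduce the goal to $\vdash_{\int}\pf{(\lnot\lnot\Box B\to\Box\lnot\lnot B)}{\alpha}$, and then derives the latter from the two facts $\vdash_{\int}\pf{(\Diamond\lnot B\to\lnot\Box B)}{\alpha}$ and $\vdash_{\int}\pf{(\lnot\Diamond\lnot B\to\Box\lnot\lnot B)}{\alpha}$, which hold at every position with no assumption on $B$. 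The essential ingredients are the same in both arguments (single-token modal rule instances, the $\alpha\neq\alpha\conc x$ proviso of $\bot_i$ to cross positions, freshness of $x$ for $\Box I$), so neither is deeper; yours is more self-contained and makes the $\bot_i$ position-crossings fully explicit, while the paper's isolates reusable modal lemmas (a K-distribution style argument) and stays closer to the classical double-negation-translation literature, at the price of invoking $\Diamond$ rules that your derivation avoids. Your remaining cases ($\bot$, triple negation at a doubly negated atom, $\land$, and $\to$ using only stability of the consequent) coincide with the paper's, and your threading of $\ext{\alpha\conc x}$ through $\iKK$ and $\iK4$---keeping it open to license $\Box E$ and discharging it at the final $\Box I$---is exactly what is needed; the paper does not even spell this case out. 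One small point in your favour: your reliance on singleton-$\beta$ instances of $\Box E$ in $\iT$ is legitimate despite the constraint table's literal entry $\beta=\langle\,\rangle$ for $\nT$, since the paper's own derivation of axiom \textbf{K} in Proposition~\ref{Prop:tot:weakcompl} uses exactly such instances, so the intended constraint is evidently $|\beta|\leq 1$, and your hedge on this reading matches the paper's practice.
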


\begin{proof}By induction on the complexity of $A.$
	
	\begin{itemize}
\item 	For the basis, recall that if  $A$ is either $\bot$ or a
	doubly negated atomic formula  then $A$ is provably equivalent to $\lnot\lnot A$
	in an intuitionistic framework.
	
\item	Concerning the induction step, we only  examine some nontrivial cases.
	\begin{description}
		\item[$\Box A$:] Suppose the statement true for $A.$ Then 
		$$\vdash_{\int}\pf{(\Box A\leftrightarrow\Box\lnot\lnot A)}{\alpha}$$ for all
		positions $s.$ Therefore, to prove the
		nontrivial implication $\vdash_{\int}\pf{(\lnot\lnot\Box A\to\Box
			A)}{\alpha},$ it suffices to show that 
		$\vdash_{\int}\pf{(\lnot\lnot\Box A\to\Box\lnot\lnot A)}{\alpha}.$
		The latter holds since
		$$\vdash_{\int}\pf{(\Diamond\lnot
			A\to\lnot\Box A)}{\alpha}\quad\mbox{and}\quad
		\vdash_{\int}\pf{(\lnot\Diamond\lnot A\to\Box\lnot\lnot A)}{\alpha}$$ are true
		for all positions $s,$ even with no assumption on $A.$
		
		\item[$A\to B$:] Suppose $\vdash_{\int}\pf{(B\leftrightarrow\lnot\lnot
			B)}{\alpha}.$  Then
		$\vdash_{\int}\pf{(\lnot\lnot(A\to B)\leftrightarrow A\to\lnot\lnot
			B)}{\alpha}$  and\\  $\vdash_{\int}\pf{(A\to \lnot\lnot B\leftrightarrow A\to
			B)}{\alpha}.$  Hence 
		$$\vdash_{\int}\pf{(A\to B\leftrightarrow \lnot\lnot(A\to
			B))}{\alpha}$$  for all positions $s.$
	\end{description}
	\end{itemize}
\end{proof}

\begin{remark}\label{oknotnot} For every modal formula $A,$ the formula $g(A)$ satisfies the
	assumptions of Lemma~\ref{notnot}.
\end{remark}

\begin{remark}\label{contronom} The following holds for any set  $\Gamma$ of formulas and 
	formulas $\pf{A}{\alpha}$ and $\pf{B}{\beta}$:  if $\Gamma,\pf{A}{\alpha}\vdash_{\int}\pf{B}{\beta}$ then 
	$\Gamma,\pf{\lnot B}{\beta}\vdash_{\int}\pf{\lnot A}{\alpha}.$
\end{remark}

We can now prove the following:

\begin{proposition} For every family $\{\pf{B_i}{\alpha_i}:\, i\in I\}$ of formulas and
	every formula  $\pf{A}{\alpha}$
	$$\{\pf{B_i}{\alpha_i}:\, i\in I\}\vdash_{\class}\pf{A}{\alpha}\ \Leftrightarrow\ \{\pf{g(B_i)}{\alpha_i}:\, i\in I\}\vdash_{\int}\pf{g(A)}{\alpha}.$$
\end{proposition}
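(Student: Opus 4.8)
The plan is to prove the two implications separately. The direction ($\Leftarrow$) is routine: since $\int$ is obtained from $\class$ by \emph{dropping} the rule $\bot_c$, every $\int$-derivation is already a $\class$-derivation, so $\{\pf{g(B_i)}{\alpha_i}\}\vdash_{\int}\pf{g(A)}{\alpha}$ yields $\{\pf{g(B_i)}{\alpha_i}\}\vdash_{\class}\pf{g(A)}{\alpha}$. By Proposition~\ref{aga} we have $\vdash_{\class}\pf{(B_i\leftrightarrow g(B_i))}{\alpha_i}$ for each $i$ and $\vdash_{\class}\pf{(A\leftrightarrow g(A))}{\alpha}$. Hence each hypothesis $\pf{B_i}{\alpha_i}$ classically proves $\pf{g(B_i)}{\alpha_i}$; composing with the derivation above and then converting $\pf{g(A)}{\alpha}$ into $\pf{A}{\alpha}$ gives $\{\pf{B_i}{\alpha_i}\}\vdash_{\class}\pf{A}{\alpha}$.

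The weight of the argument lies in the direction ($\Rightarrow$), which I would prove by induction on a classical derivation $\Pi$ of $\pf{A}{\alpha}$ from $\{\pf{B_i}{\alpha_i}\}$, showing at each step that the translated premises $\int$-derive the translated conclusion. The pervasive tool is \emph{stability}: by Remark~\ref{oknotnot} and Lemma~\ref{notnot}, $\vdash_{\int}\pf{(g(D)\leftrightarrow\lnot\lnot g(D))}{\delta}$ for every formula $D$ and position $\delta$, so it always suffices to produce the double negation of the target. The propositional cases reproduce the classical G\"odel--Gentzen argument: $g$ commutes with $\land$ and $\to$ and fixes $\bot$, so those rules translate verbatim (invoking stability when an elimination rule is involved), while the $\lor$ rules use the standard intuitionistic manipulations of $\lnot(\lnot g(A)\land\lnot g(B))$. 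The characteristic classical rule $\bot_c$ is absorbed by stability: from the translated subderivation of $\pf{\bot}{\beta}$ under $[\pf{\lnot g(A)}{\alpha}]$ one lowers $\pf{\bot}{\beta}$ to $\pf{\bot}{\alpha}$ by $\bot_i$ (vacuously if $\alpha=\beta$), discharges to obtain $\pf{\lnot\lnot g(A)}{\alpha}$, and concludes $\pf{g(A)}{\alpha}$. The $\Box$ rules are immediate, since $g(\Box A)=\Box g(A)$ lets $\Box I$ and $\Box E$ translate to themselves with identical eigenvariable and position side-conditions.

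The hard part will be the $\Diamond$ rules, where $g(\Diamond A)=\lnot\Box\lnot g(A)$ forces a genuine reshuffling. For $\Diamond I$, from the translated premise $\pf{g(A)}{\alpha\conc\beta}$ I would assume $\pf{\Box\lnot g(A)}{\alpha}$, apply $\Box E$ (whose side-condition on $\beta$ is, in each of the six systems, identical to that of $\Diamond I$) to get $\pf{\lnot g(A)}{\alpha\conc\beta}$, contradict it by $\to E$ to reach $\pf{\bot}{\alpha\conc\beta}$, then lower and discharge to obtain $\pf{\lnot\Box\lnot g(A)}{\alpha}=\pf{g(\Diamond A)}{\alpha}$. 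For $\Diamond E$, whose translated minor premise derives $\pf{g(C)}{\beta}$ from $[\pf{g(A)}{\alpha\conc x}]$, I would contrapose by Remark~\ref{contronom} to a derivation of $\pf{\lnot g(A)}{\alpha\conc x}$ from $[\pf{\lnot g(C)}{\beta}]$, bind $x$ by $\Box I$ to get $\pf{\Box\lnot g(A)}{\alpha}$, contradict it against the translated major premise $\pf{\lnot\Box\lnot g(A)}{\alpha}$, and so reach $\pf{\lnot\lnot g(C)}{\beta}$ and hence $\pf{g(C)}{\beta}$ by stability. The delicate point is the position bookkeeping: the eigenvariable condition $\alpha\conc x\notin\iniz{\beta}\cup\iniz{\Gamma}$ inherited from $\Diamond E$ is precisely what legitimizes the $\Box I$ step on the assumptions $\Gamma\cup\{\pf{\lnot g(C)}{\beta}\}$, and each appeal to $\bot_i$ must respect its side-condition (when it fails, source and target positions coincide and the step is vacuous). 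For the partial systems $\iKK$ and $\iK4$ there is one further wrinkle, already visible in the proof of Proposition~\ref{prop:partial:weakcomp}: the existence-predicate assumption $[\ext{\alpha\conc x}]$ discharged by $\Diamond E$ is exactly the one consumed by the $\Box I$ step above, and dually the premise $\ext{\alpha\conc\beta}$ of $\Diamond I$ is exactly what feeds the $\Box E$ step, so the existence predicates thread through without remainder. Checking that these constraints line up across all systems is the main source of care, though each individual verification is routine.
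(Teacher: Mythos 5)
Your proposal is correct and takes essentially the same route as the paper's proof: the ($\Leftarrow$) direction via Proposition~\ref{aga}, and the ($\Rightarrow$) direction by induction on the classical derivation using stability (Remark~\ref{oknotnot} plus Lemma~\ref{notnot}) and contraposition (Remark~\ref{contronom}), with the identical contrapose--then--$\Box I$--then--contradict--then--discharge pattern for $\Diamond E$. The cases you spell out beyond the paper's text ($\bot_c$, $\Diamond I$, and the existence-predicate bookkeeping for $\iKK$ and $\iK4$) are precisely those the paper labels ``easier'' or leaves to the reader, and your treatment of them---including the observation that the $\Box E$/$\Diamond I$ position constraints coincide in every system and that the $\bot_i$ step is vacuous when positions coincide---is sound.
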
 

\begin{proof}
	\begin{itemize}
		\item[$(\Leftarrow)$] Straightforward from  Remark~\ref{aga}.
		\item[$(\Rightarrow)$] By induction on the height  of a deduction of $\pf{A}{\alpha}$
		in $\class$. We only examine some  nontrivial cases of  the induction  step.
		\begin{itemize}
			\item[($\Diamond E$)] Suppose \\

			\begin{prooftree}
				\[\dots\  \pf{B_i}{\alpha_i}\  \dots
				\using                          
				{}                       
				\proofdotseparation=1.2ex       
				\proofdotnumber=4               
				\leadsto
				\pf{\Diamond C}{\beta}\]
				\[[\pf{C}{\beta x}]\ \dots \pf{B_i}{\alpha_i} \dots
				\using                          
				{}                       
				\proofdotseparation=1.2ex       
				\proofdotnumber=4               
				\leadsto
				\pf{A}{\alpha}\]
				\justifies
				\pf{A}{\alpha}
				\thickness=0.06em
				\using
				{}                                      
			\end{prooftree}\\
			
	  in $\class$. Then (inductively) we get  the
			deductions \\\medskip
			
		\begin{prooftree}
				\dots\  \pf{g(B_i)}{\alpha_i}\  \dots
				\using                          
				{}                       
				\proofdotseparation=1.2ex       
				\proofdotnumber=4               
				\leadsto
				\pf{\lnot\Box\lnot g(C)}{\beta}
			\end{prooftree}\quad and\quad
			\begin{prooftree}\pf{g(C)}{\beta  x}\ \dots \pf{g(B_i)}{\alpha_i} \dots
				\using                          
				{}                       
				\proofdotseparation=1.2ex       
				\proofdotnumber=4               
				\leadsto
				\pf{g(A)}{\alpha}
			\end{prooftree}\medskip

			in $\int$. By Remark~\ref{contronom}, Remark~\ref{oknotnot}  and Lemma~\ref{notnot} we get
			the following deduction in $\int$\  (we leave to the reader to check that all side
			conditions of deduction rules are fulfilled):
			
			\bigskip
			
			\centerline{
				\begin{prooftree}
					\[\[\[\[\[[\pf{\lnot g(A)}{\alpha}]\ \dots \pf{g(B_i)}{\alpha_i} \dots
					\using                          
					{}                       
					\proofdotseparation=1.2ex       
					\proofdotnumber=4               
					\leadsto
					\pf{\lnot g(C)}{\beta x}\]
					\justifies
					\pf{\Box\lnot g(C)}{\beta}
					\thickness=0.06em
					\using
					{}\]
					\[\dots\  \pf{g(B_i)}{\alpha_i}\  \dots
					\using                          
					{}                       
					\proofdotseparation=1.2ex       
					\proofdotnumber=4               
					\leadsto
					\pf{\lnot\Box\lnot g(C)}{\beta}\]
					\justifies
					\pf{\bot}{\beta}
					\thickness=0.06em
					\using
					{}\]
					\justifies
					\pf{\bot}{\alpha}
					\thickness=0.06em
					\using
					{}\]
					\justifies
					\pf{\lnot\lnot g(A)}{\alpha}
					\thickness=0.06em
					\using
					{}\]
					\[{}
					\using
					{}
					\proofdotseparation=1.2ex
					\proofdotnumber=4               
					\leadsto
					\pf{\lnot\lnot g(A)\to g(A)}{\alpha}
					\]
					\justifies
					\pf{g(A)}{\alpha}
					\thickness=0.06em
					\using
					{} 
			\end{prooftree}}
			
			\bigskip
			
			\item[($\lor E$)] Suppose\\
			
			\begin{prooftree}
				\[\dots\pf{B_i}{\alpha_i}\dots
				\using                          
				{}                       
				\proofdotseparation=1.2ex       
				\proofdotnumber=4               
				\leadsto
				\pf{B\lor C}{\beta} \]
				\[[\pf{B}{\beta}]\ \dots\pf{B_i}{\alpha_i}\dots
				\using                          
				{}                       
				\proofdotseparation=1.2ex       
				\proofdotnumber=4               
				\leadsto \pf{A}{\alpha}\]
				\[[\pf{C}{\beta}]\ \dots\pf{B_i}{\alpha_i}\dots
				\using                          
				{}                       
				\proofdotseparation=1.2ex       
				\proofdotnumber=4               
				\leadsto \pf{A}{\alpha}\]
				\justifies 
				\pf{A}{\alpha}
				\thickness=0.06em
				\using
				{}
			\end{prooftree}\quad in $\class$.
			
			\bigskip
			
			By induction hypothesis and by Remark~\ref{contronom} we get the following
			deductions in $\int$:
			
			\bigskip
			
			\begin{prooftree}
				\dots\pf{g(B_i)}{\alpha_i}\dots
				\using                          
				{}                       
				\proofdotseparation=1.2ex       
				\proofdotnumber=4               
				\leadsto
				\pf{\lnot(\lnot g(B)\land \lnot g(C))}{\beta}
			\end{prooftree}\qquad
			\begin{prooftree}
				\pf{\lnot g(A)}{\alpha}\ \dots\pf{g(B_i)}{\alpha_i}\dots
				\using                          
				{}                       
				\proofdotseparation=1.2ex       
				\proofdotnumber=4               
				\leadsto
				\pf{\lnot g(B)}{\beta}
			\end{prooftree}\qquad
			\begin{prooftree}
				\pf{\lnot g(A)}{\alpha}\ \dots\pf{g(B_i)}{\alpha_i}\dots
				\using                          
				{}                       
				\proofdotseparation=1.2ex       
				\proofdotnumber=4               
				\leadsto
				\pf{\lnot g(C)}{\beta}
			\end{prooftree}
			
			\bigskip
			
			From these deductions we can produce the following in $\int$:
			
			\bigskip
			
			\begin{prooftree}
				\[\[\[\dots\pf{g(B_i)}{\alpha_i}\dots
				\using                          
				{}                       
				\proofdotseparation=1.2ex       
				\proofdotnumber=4               
				\leadsto
				\pf{\lnot(\lnot g(B)\land \lnot g(C))}{\beta}\]
				\[
				\[[\pf{\lnot g(A)}{\alpha}]\ \dots\pf{g(B_i)}{\alpha_i}\dots
				\using                          
				{}                       
				\proofdotseparation=1.2ex       
				\proofdotnumber=4               
				\leadsto
				\pf{\lnot g(B)}{\beta} \]
				\[[\pf{\lnot g(A)}{\alpha}]\ \dots\pf{g(B_i)}{\alpha_i}\dots
				\using                          
				{}                       
				\proofdotseparation=1.2ex       
				\proofdotnumber=4               
				\leadsto
				\pf{\lnot g(C)}{\beta} \]
				\justifies
				\pf{\lnot g(B)\land\lnot g(C)}{\beta}
				\thickness=0.06em
				\using 
				{}
				\]
				\justifies
				\pf{\bot}{\beta}
				\thickness=0.06em
				\using 
				{}
				\]
				\justifies
				\pf{\bot}{\alpha}
				\thickness=0.06em
				\using 
				{}
				\]
				\justifies
				\pf{\lnot\lnot g(A)}{\alpha}
				\thickness=0.06em
				\using 
				{}
			\end{prooftree}
			
			\bigskip
			
			We finally get the required deduction in $\int$\  from Lemma~\ref{notnot}.
			
			\bigskip
			
			\noindent The other cases are easier.
			
		\end{itemize}
	\end{itemize}
\end{proof}

\begin{corollary}\label{trasl} For every formula $\pf{A}{\alpha}$
	$$\vdash_{\class}\pf{A}{\alpha}\ \Leftrightarrow\quad \vdash_{\int}\pf{g(A)}{\alpha}.$$
\end{corollary}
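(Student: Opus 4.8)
The plan is to derive this statement as the closed (assumption-free) instance of the preceding Proposition. That Proposition establishes the double-negation correspondence
$$\{\pf{B_i}{\alpha_i}:\, i\in I\}\vdash_{\class}\pf{A}{\alpha}\ \Leftrightarrow\ \{\pf{g(B_i)}{\alpha_i}:\, i\in I\}\vdash_{\int}\pf{g(A)}{\alpha}$$
for an \emph{arbitrary} indexing family $I$ of assumption formulas, and the Corollary is precisely the instance obtained by taking $I=\emptyset$.

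Concretely, I would instantiate the Proposition with the empty family. On the left, $\{\pf{B_i}{\alpha_i}:\, i\in I\}$ collapses to the empty set of hypotheses, so $\{\pf{B_i}{\alpha_i}:\, i\in\emptyset\}\vdash_{\class}\pf{A}{\alpha}$ is by definition just $\vdash_{\class}\pf{A}{\alpha}$. On the right, $\{\pf{g(B_i)}{\alpha_i}:\, i\in\emptyset\}$ is likewise empty, so $\{\pf{g(B_i)}{\alpha_i}:\, i\in\emptyset\}\vdash_{\int}\pf{g(A)}{\alpha}$ reads exactly as $\vdash_{\int}\pf{g(A)}{\alpha}$. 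Substituting these two readings into the biconditional yields the claimed equivalence $\vdash_{\class}\pf{A}{\alpha}\Leftrightarrow\vdash_{\int}\pf{g(A)}{\alpha}$.

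Since all the deductive work---the commutative conversions handled in the $\Diamond E$ and $\lor E$ cases, the appeals to Lemma~\ref{notnot}, Remark~\ref{oknotnot} and Remark~\ref{contronom}, and the verification of the side conditions on positions---has already been absorbed into the proof of the Proposition, there is no residual obstacle here. The only point worth stating explicitly is that provability from no assumptions coincides with the empty-family case, which is immediate from the definition of $\Gamma\vdash\pf{A}{\alpha}$ adopted in Section~\ref{sec:natdedsyst}. I therefore expect the entire proof to consist of this single specialization step.
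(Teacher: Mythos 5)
Your proof is correct and matches the paper's (implicit) argument exactly: the corollary is stated without a separate proof precisely because it is the instance of the preceding Proposition with the empty family $I=\emptyset$ of assumptions. Your observation that provability from no assumptions coincides with the empty-family case of $\Gamma\vdash\pf{A}{\alpha}$ is the only step needed, and it is correct.
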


\begin{remark}\label{tnkcons} 
	Consistency of $\class$ follows immediately from Corollary~\ref{trasl}.
\end{remark}


\section{Discussions and future work}

In this paper, we defined natural deduction systems for normal modal logics, ranging from the basic {\KK} to S4. We have provided both the classical and the intuitionistic formulations. We followed the paradigm of 2-Sequents by Masini et al.~\cite{MM:ONFineStr:95,MM:ComInt:95,BaMaAML2004,BaMaJANCL13, BarMas:apal,Mas:TwoSeqProof:92,Mas:TwoSeqInt:93} and we introduced a notion of \emph{position} which represents the spatial coordinate of a formula.
For the intuitionistic versions of the systems, we defined proof reduction and proved proof normalization, thus obtaining a syntactical proof of consistency. We lifted the results of consistency to classical systems by adapting G\"odel's double negation translation.
 Natural deduction calculi for partial logics ({\KK} and {\K4}) are particularly challenging, and the sound formulation of the deduction system required the introduction of an existence predicate \`a la Scott~\cite{Scott1979}. 
We aimed to retain the original intention of natural deduction, as motivated by Prawitz~\cite{Prawitz:1965}.

In the following, we briefly discuss some crucial differences and analogies  between the framework we proposed and labelled deduction systems. Moreover, we sketch possible developments of our investigation. 

\subsection{Labelled natural deduction systems: a comparison}\label{sect:comparisons}
We start by recalling the basic elements of labelled systems, one of the most popular natural deduction formulations of modal logics. We focus on the original systems, as proposed by Simpson~\cite{Simpson93} and, later, by Vigan\`o~\cite{Vigano00a}, which are the roots of the approach.
They build on the well-known translation $(\cdot)^*_x$ that, given a propositional modal formula and a first-order variable $x$, produces a  first-order formula in a language  with denumerable many unary predicate symbols and one binary predicate symbol $\rl $ (which is going to be modeled by the accessibility relation in the Kripke model):

\begin{itemize}
	\item $(p_i)^*_x= P_i(x)$, where the $p_i$-s and $P_i$-s are the $i$-th propositional and the $i$-th predicate symbol, respectively;
	\item $(\bot)^*_x= \bot$ ;
	\item $(A\circ B)^*_x = (A)^*_x \circ (B)^*_x$, for each propositional connective $\circ$;
	\item $(\Box A)_x^* = \forall y ( x\rl y \to (A)^*_y )$,  for $y$ a fresh variable.
\end{itemize}

As a result of this translation, Simpson and Vigan\`o proposed natural deduction systems for a large class of modal logics, based on formulas for the accessibility relation (the \emph{relational formulas}), with explicit rules governing the properties of this relation.
The core rules, common for all normal modal logics, are listed in Figure~\ref{Fig:labelledrules} and are the same in the two approaches.  Systems for specific logics are obtained through a characteristic set of additional rules for the relational formulas. How these relational constraints are formulated and used in a derivation significantly differs in the two approaches. 

\begin{figure}[htb]
\begin{center}
   $ \urule { \LT{t:A}{[s\rl t] } } {s:{ \Box A} } {(\Box I)* } $ 
   \qquad  %
   $ \brule { \LT{s:\Box A}{ } }{s\rl t} { t:A} { (\Box E)
     } $
\end{center}

\bigskip

\begin{center}
   $ \brule { \LT{t:A}{ } }{s\rl t} { s:\Diamond A} { (\Diamond I) } $ 
   \qquad 
   $\brule{\LT{s:\Diamond A}{ }}
   {\LT{u:B}{[t:A][s\rl t]}} {u:B} {(\Diamond E)*} $
\end{center}

\rosso{
In $\Box I$, $t$ is not $s$ and does not occur in any assumption on which $t:A$ depends, other than $s\rl t$.\\
In $\Diamond E$, $t$ is neither $s$ or $u$, and does not occur in any assumption on which the upper occurrence of $u:B$ depends, other than $t:A$ and $s\rl t$.}
\caption{\rosso{Modal rules in labelled systems: logic \KK{}}}\label{Fig:labelledrules}
\end{figure}

\begin{description}
	\item[Simpson:]  \rosso{The additional rules for relational formulas act like structural rules; moreover,} any deduction must have a non-relational (thus modal, or propositional) formula as a conclusion, and the first-order relational formulas $x\rl y$ are used only as assumptions.
	\item[Vigan\`o:] \rosso{The additional rules for relational formulas axiomatize naturally the accessibility relation; moreover,} it is possible, using suitable rules, to built sub-derivations composed only by relational formulas.
\end{description}
Both approaches have strengths and weaknesses.
From a foundational point of view, Simpson's formulation is perhaps the most elegant, but it has the serious defect of making derivations complex (\emph{de facto}, not  natural at all). Take, for instance, Simpson's calculus for \K4, obtained by adding to Figure~\ref{Fig:labelledrules} the following rule
\begin{center}
 
   $ \trule {x\rl y}{y\rl z}{ \LT{w:A}{[x\rl z]} } { w:A} { (R_4)} $
\end{center}
The  ``structural'' rule $R4$ shows the price that this approach has to pay---the calculus includes explicit ``structural'' rules governing the accessibility relation. The following is the proof of formula {\bf{4}} in this system. 

\begin{center}
 
   $ 
   \urule{
     \urule{
       \urule{
         \trule{
         [s \rl t]^3
         }
         {[t \rl u]^2
         }
         {
           \brule{
           [s:\Box A]^4
           }
           {[s \rl u]^1
           }
           {u:A}{\Box E(1)}
         }
         {u:A}{R_4 (1)}
       }
       {t:\Box A}{\Box I (2)}
     }
     {s:\Box\Box A}{\Box I (3)}
   }
   {s:\Box A \to \Box\Box A}{\to I (4)}
    $
\end{center}

Moreover, to obtain normalization Simpson needs commutative reductions not only against $(\Diamond E)$, as usual, but also against rules with relational premises, see~\cite[Fig. 7-2, pag. 120]{Simpson93}.

On the other hand, Vigan\`o's formulation has the gift of simplicity. In particular, there are no structural rules in the system. However, it is a calculus that mirrors closely\footnote{Too closely, from our proof-theoretical perspective.} the first-order axiomatization of Kripke semantics. Vigano's calculus for \K4{} is obtained by adding to Figure~\ref{Fig:labelledrules} the following rule, consisting only of relational formulas:
\begin{center}
   $ \brule {s\rl t}{t\rl u}{s\rl u} { (\textit{trans})} $
\end{center}
The proof of the formula {\bf{4}} in this system becomes the following:
\begin{center}
   $ 
      \urule{
     \urule{
       \urule{
         \brule{
         [s:\Box A]^3
         }
         {
           \brule{
           [s \rl t]^2
           }
           {[t \rl u]^1
           }
           {s\rl u}{(\textit{trans})}
         }
         {u:A}{\Box E}
       }
       {t:\Box A}{\Box I (1)}
     }
     {s:\Box\Box A}{\Box I (2)}
   }
   {s:\Box A \to \Box\Box A}{\to I (3)}
   $
\end{center}
where a subproof consisting only of relational formulas has to be added. 

The strengths of these systems, on the other hand, become apparent when expressivity comes into the spotlight---both Simpson's  and Vigan\`o's proposals accommodate a large class of complex modal and temporal logics~\cite{Baratella2004a,Baratella2019,BaMaAML2004,MVVJANCL2101,MVVJLC2011,MasiniViganoZorzi08,MVVENTCS2010,MVZ-jmvl}, and have been successively formulated also as sequent calculi~\cite{Negri:2011}.

Contrary to these approaches, our central goal has been---as it should be clear, by now---to obtain a system with no structural rules, with rules only to introduce/eliminate logical connectives, whose modal rules are as close as possible to the first-order ones for the quantifiers, and with no explicit reference to the properties of the accessibility relation of the intended Kripke models. We have done so by internalizing (``hiding'') into positions the accessibility relation which labelled systems make explicit.  For the sake of clarity, we now
follow the inverse path, elaborating on the ``semantical'' interpretation of positions we provided in Section~\ref{sec:2seq}. In particular, we sketch how to extract from our framework a labelled system. For this, we consider the natural deduction version of (a fragment of) the labelled sequent system proposed by Negri in ~\cite{Negri:2011} (in its turn a variation of Simpson's natural deduction system~\cite{Simpson93} whose core rules have already been shown in Figure~\ref{Fig:labelledrules}.) The reference rules for  modalities are thus the following (for the sake of brevity, we give rules and derivations in linear style):

\bigskip

\begin{center}
   $\urule {\Gamma, s\rl t \vdash t:A}{\Gamma\vdash s:\Box A} 
     {\ ({\Box I}_L)*} $
\qquad $\urule{\Gamma\vdash s:\Box A}{\Gamma, s\rl t\vdash t:A}{\ (\Box E_L)}$ 
\end{center}
\bigskip
\begin{center}
$\urule{\Gamma, s\rl t \vdash t:A}{\Gamma, s\rl t \vdash s:\Diamond A}{\ (\Diamond I_L)}
\qquad
\brule{\Gamma \vdash s: \Diamond A}{\quad\Gamma, t:A, s\rl t \vdash u:B}{\Gamma \vdash u: B}{\ (\Diamond E_L)*}
$
\end{center}
\bigskip
\rosso{with the restrictions on labels for ${\Box I}_L$ and $\Diamond E_L$ stated in Figure~\ref{Fig:labelledrules}. Larger systems are obtained modularly, by adding some relational rules~\cite{Negri:2011}:}

\bigskip
\begin{center}
   \urule{\Gamma, t\rl s \vdash u:A}{\Gamma \vdash u: A}{\ (\textit{Ser})*}
\qquad $\urule{\Gamma t\rl t \vdash u:A}{\Gamma\vdash u:A}{ (\textit{Refl})}$ 
\qquad $\urule{\Gamma, t\rl s \vdash u:A}{\Gamma, t\rl v, v\rl s  \vdash u: A}{ (\textit{Trans})}$
\end{center}
\rosso{where, in (\textit{Ser}), $s$ is neither $t$ nor $u$.}\\



We stress that we are not looking after a full-blown translation between the two systems---we limit ourselves to sketch a procedure that extracts explicitly a labelled framework out of one of our systems, to show how positions internalize the ``structural'' rules of labelled systems.

We start by introducing a (new) label $\overline{\alpha}$ for each position (i.e., sequence of tokens) $\alpha$. Moreover, we make use of a binary predicate symbol $\rl$ between labels, to obtain relational formulas: if $\overline{\alpha}$ and $\overline{\beta}$ are labels, then $\overline{\alpha}\rl\overline{\beta}$ is a relational formula. We now associate to each position $\alpha$ a set of relational formulas:
$$\ptol{\langle\rangle}=\emptyset\qquad \ptol{x}=\{\overline{\langle\rangle}\rl \overline{x}\}\qquad \ptol{\alpha x}=\ptol{\alpha}\cup\{\overline{\alpha}\rl \overline{\alpha x}\}$$
Judgments, as defined in Section~\ref{sec:2seq},
can be translated into judgments of the labelled system:

\begin{equation} \label{eq:traduzione}
\ptol{\Gamma\vdash A^{\alpha}}=\ptol{\Gamma},\ptol{\alpha}\vdash \overline{\alpha}:A
\end{equation} 
\rosso{where for $\ptol{\Gamma}$ we set}
$$\ptol{\emptyset}=\emptyset\qquad\ptol{\Delta, B^{\beta}}=\ptol{\Delta},\ptol{\beta}, \overline{\beta}:B$$

We may now see how our $\NN$ rules appear under this translation. 
This will show that introduction and elimination rules for modal quantifiers in $\NN$ \emph{de facto} absorb explicit structural rules. We start with the total systems of Section~\ref{Sect-Total-Rules}. The axiom is immediately translated as
\[
\ptol{A^{\alpha}\vdash A^{\alpha}}=\ptol{\alpha},\overline{\alpha}:A\vdash\overline{\alpha}:A
\]
\rosso{while rules $\Box I$ and $\Box E$  are rewritten as follows: 
}

\begin{equation}\label{Eq:translatedrules}
\urule{\ptol{\Gamma}, \ptol{\alpha}, \overline{\alpha}\rl \overline{\alpha x}\vdash  \overline{\alpha x}:A }{\ptol{\Gamma}, \ptol{\alpha}\vdash\overline{\alpha}:\Box A }{}
\qquad
\urule{\ptol{\Gamma}, \ptol{\alpha} \vdash  \overline{\alpha}: \Box A }{\ptol{\Gamma}, \ptol{\alpha}, \overline{\alpha}\rl \overline{\alpha x}\vdash\overline{\alpha x}:A }{}
\end{equation}
\rosso{
which are valid instances of the corresponding rules in labelled systems. 
}

\rosso{
Let us now see how our proof of the formula {\bf D} (Proposition ~\ref{Prop:tot:weakcompl}, item (\ref{Prop:tot:weakcompl:D})) is converted into the labelled system. We start with 
}
\def\da{
\prova{ 
 {\overline{<>}:\Box A}\vdash {\overline{<>}:\Box A}
  }
  { 
{\overline{<>}:\Box A}, \overline{<>}\rl\overline{x} \vdash {\overline{x}: A}
  }
  { \Box E_L } 
}

\def\db{
\prova{ 
  \da
  }
  { 
 {\overline{<>}:\Box A}, \overline{<>}\rl\overline{x} \vdash {\overline{<>}: \Diamond A}
  }
  { \Diamond I_L } 
 } 
$$
\db
$$
but now we are stuck because we need to eliminate the relational formula 
$\overline{<>}\rl\overline{x}$ from the context of the conclusion. Here is where the additional (structural) rules on $\rl$ get into play. The labelled system for logic \D{} includes the rule (\textit{Ser}), which allows concluding:

\def\dc{
\prova{ 
  {\overline{<>}:\Box A}, \overline{<>}\rl\overline{x} \vdash {\overline{<>}: \Diamond A}			
  }
  { 
   {\overline{<>}:\Box A} \vdash {\overline{<>}: \Diamond A}
  }
  {\textit{Ser}} 
}

\def\de{
\prova{ 
  \dc			
  }
  {  \vdash  {\overline{<>}:\Box A}\to{\overline{<>}: \Diamond A}  }
  {\to I} 
}
$$
\de
$$

\rosso{
Analogously, we may see how our proof of the formula {\bf T} (Proposition~\ref{Prop:tot:weakcompl}, item (\ref{Prop:tot:weakcompl:T})) is translated, and amended with the explicit rule for reflexivity of $\rl$, (\textit{Refl}):
}

\def\da{
\prova{ 
 {\overline{<>}:\Box A}\vdash {\overline{<>}:\Box A}
  }
  { 
{\overline{<>}:\Box A}, \overline{<>}\rl\overline{<>} \vdash {\overline{<>}: A}
  }
  { \Box E_L } 
}

\def\db{
\prova{ 
  \da
  }
  { 
 {\overline{<>}:\Box A} \vdash {\overline{<>}:  A}
  }
  {\textit{Refl} } 
 } 

\def\dc{
\prova{ 
  \db			
  }
  { 
   \vdash \overline{<>}: \Box A\to A}
  {\to I } 
}

$$
\dc
$$

\rosso{
Finally, for formula {\bf 4} we need transitivity of the relation $\rl$.  The derivation in Proposition~\ref{Prop:tot:weakcompl}, item (\ref{Prop:tot:weakcompl:4}) is translated and amended using rule  (\textit{Trans}) as follows:
}

\def\da{
\prova{ 
 {\overline{<>}:\Box A}\vdash {\overline{<>}:\Box A}
  }
  { 
{\overline{<>}:\Box A}, \overline{<>}\rl\overline{xy} \vdash {\overline{xy}: A}
  }
  { \Box E_L } 
}
\def\db{
\prova{ 
  \da
  }
  { 
 {\overline{<>}:\Box A}, \overline{<>}\rl\overline{x},\overline{x}\rl\overline{xy} \vdash {\overline{xy}:  A}
  }
  { \textit{Trans} } 
 } 
\def\dc{
\prova{ 
  \db			
  }
  { 
   {\overline{<>}:\Box A}, \overline{<>}\rl\overline{x}\vdash {\overline{x}: \Box A}
  }
  { \Box I_L} 
}
\def\dd{
\prova{ 
  \dc			
  }
  {
   {\overline{<>}:\Box A}\vdash \overline{<>}: \Box\Box A} 
  {\Box I_L} 
}

\def\de{
\prova{ 
  \dd			
  }
  {
   \vdash {\overline{<>}:\Box A\to \Box\Box A} }
  {\to I} 
}
$$
\de
$$

\rosso{
The translation of total systems highlights well another perspective of the systemic difference between the labels induced by the translation of our $\NN$ and the labelled systems in the literature. 
Note that $\ptol{\alpha}$ (used only at the left of $\vdash$) introduces a sequence of relational formulas asserting that $\overline{\alpha}$ is reachable from $\overline{<>}$ via all the prefixes of the position $\alpha$, something that is not required in Negri/Simpson's systems. Indeed, differently from \cite{Simpson93,Negri:2011}, our labels have a tree-like structure, which permits the manipulation of subsequences of positions. This is why we do not give a formal translation of our $\NN$ into a known labelled framework (which is out of the scope of the paper). Our goal here is to show that our framework, once the structure and manipulation of positions are made explicit via labels and relational formulas, naturally produces labelled system \emph{similar} to those in~\cite{Simpson93,Negri:2011}. }

The dissimilarity becomes more evident for partial logics. 
The most glaring difference is the presence, in $\NN$, of  the existence predicate $E(\cdot)$, that parametrizes our modal rules in analogy with first-order frameworks. 
Since the assumption on the existence of labels does not hold for partial systems, we need to redefine the map on judgments, taking also into account existence formulas $E(\cdot)$.
To treat the logics of Section~\ref{sec:partlogics}, we thus define $\ptol{\cdot}$ on existence predicates as
$$
 \ptol{\ext{\alpha x}}=\overline{\alpha}\rl\overline{\alpha x}
$$
\rosso{and {modify} $\ptol{\cdot}$ on sequences of formulas as}

$$
\ptol{\emptyset}=\emptyset\qquad\ptol{\Delta,B^\beta}=\ptol{\Delta}, \overline{\beta}:B
$$

Consequently, we redefine $\ptol{\cdot}$ on  judgements (Equation~\ref{eq:traduzione}) in such a way that relational formulas are added only for labels that are predicated as existing:
\[
\qquad\qquad\ptol{\Gamma\vdash A^{\alpha}}=\ptol{\Gamma}\vdash \overline{\alpha}:A
\]

Rules $\Box I$ and $\Box E$ from Section~\ref{sec:partlogics} should be given in the explicitly labelled system as:

\[
\urule{\ptol{\Gamma}, \overline{\alpha}\rl \overline{\alpha x}\vdash  \overline{\alpha x}:A }{\ptol{\Gamma}\vdash\overline{\alpha}:\Box A }{}
\qquad
\urule{\ptol{\Gamma} \green{} \vdash  \overline{\alpha}: \Box A }{\ptol{\Gamma}, \overline{\alpha}\rl \overline{\alpha x}\vdash\overline{\alpha x}:A }{}
\]

As a single example, consider now the derivation of formula \textbf{K} that we gave in Proposition~\ref{prop:partial:weakcomp}, item~(\ref{prop:partial:weakcomp:K}), and which, for the sake of the reader, we reproduce here in linear form:

\def\auno{\prova{\pf{\Box A}{{<>}}\vdash \pf{\Box A}{{<>}}}{\pf{\Box A}{{<>}}, \ext{x}\vdash \pf{A}{{x}}}{\Box E}}
		\def\adue{\prova{\pf{\Box (A\to B)}{<>} \vdash \pf{\Box (A\to B)}{<>} }{\pf{\Box (A\to B)}{<>}, \ext{x} \vdash \pf{A\to B}{x}}{\Box E}}
		\def\atre{\prova{\auno\ \adue}{\pf{\Box A}{<>},\pf{\Box (A\to B)}{{<>}}, \ext{x} \vdash \pf{B}{x}}{\to E}}
		\def\aquattro{\prova{\atre}{\pf{\Box A}{<>},\pf{\Box (A\to B)}{{<>}} \vdash \pf{\Box B}{<>}}{\Box I}}
$$\aquattro$$
With the given rules it can be directly translated into the labelled system as follows:
  
\def\auno{\prova{\overline{<>}:\Box A \vdash \overline{<>}:\Box A }{\overline{<>}:\Box A, \overline{<>}\rl\overline{x} \vdash \overline{x}: A}{\Box E_L}}
		\def\adue{\prova{\overline{<>}:\Box (A\to B) \vdash \overline{<>}:\Box (A\to B) }{\overline{<>}:\Box (A\to B), \overline{<>}\rl\overline{x} \vdash \overline{x}: (A\to B)}{\Box E_L}}
		\def\atre{\prova{\auno\ \adue}{\overline{<>}:\Box A,\overline{<>}:\Box (A\to B), \overline{<>}\rl\overline{x} \vdash \overline{x}:  B}{\to E}}
		\def\aquattro{\prova{\atre}{\overline{<>}:\Box A,\overline{<>}:\Box (A\to B) \vdash \overline{<>}: \Box B}{\Box I_L}}
			$$\aquattro$$

\rosso{
Total systems, therefore, could be seen as partial ones where $E(\alpha)$ is implicitly assumed for any positions, which is consistent with the intended semantics of Section~\ref{sec:soundpartiallog}. }

\subsection{Future Work}\label{sec:conclusions}

Our investigation is open to different directions.  First, we plan to define and study the lambda-calculi that naturally emerge by making explicit the proof-term decoration in the systems of Section~\ref{sec:normalization}. Differently from the calculi presented, \textit{e.g.}, in~\cite{MM:ComInt:95} (where we did not have a sufficiently general notion of position), we will construct calculi where positions are first-class terms (and not mere decorations of terms), in such a way that positions could be manipulated by other lambda-terms. When formulated in a typed setting, this seems to require some notion of dependent types.

Moreover, we aim to study the 2-Sequent counterpart of the framework we presented here. As shown by~\cite{LellmannP19}, the definition of modular, analytic, and {cut-free} proof systems able to uniformly treat families of logics, is still an interesting problem. 
We claim that our notion of position allows pursuing a full notion of modularity: all logics (both total and partial) can be treated with the same set of rules, by simply tuning constraint on structural and modal rules and preserving cut-elimination.

\bigskip

\bibliographystyle{acm}
\bibliography{biblio} 
\end{document}